\documentclass[numbers, lm]{sciprep}
\usepackage{setspace}
\usepackage{amsthm}
\usepackage{mathtools}
\usepackage{tcolorbox}
\usepackage{svg}
\usepackage{placeins}

\usepackage{float}
\usepackage{algorithm}
\usepackage{algpseudocode}
\usepackage{booktabs}
\usepackage{marvosym} %

\newcommand{\toctable}[1]{\addcontentsline{toc}{suppfig}{Table~\thetable: #1}}

\theoremstyle{plain}
\newtheorem{proposition}{Proposition}

\theoremstyle{definition}

\newtcolorbox[auto counter]{mybox}[1][]{
    title=Box~\thetcbcounter,
    #1 
}
\newfloat{boxfigure}{htbp}{lbf}
\floatname{boxfigure}{Box Figure}

\title{Learning developmental dynamics\\from lineage context}

\renewcommand{\corrauthor}{\textsuperscript{\Letter}}
\author{%
  Gokul~Gowri\affil{1--5,}\eqcontrib\corrauthor,
  Nic~Fishman\affil{6,7,}\eqcontrib,
  William~N.~Colgan\affil{2--5,}\eqcontrib,
  Minming~Wang\affil{8,9},
  Yusuke~Kijima\affil{9},
  Sachin~A.~Kumar\affil{2--5, 10},
  Aleksandr~Zimin\affil{13},
  Luke~W.~Koblan\affil{2--5},
  Marinka~Zitnik\affil{7,12--14},
  Zachary~D.~Smith\affil{8,9},
  Stephen~Bates\affil{1},\\
  Philippe~Rigollet\affil{11,}\corrauthor,
  Jonathan~S.~Weissman\affil{2--5,}\corrauthor%
}
\affiliation{%
  \footnotesize
  \affilmark{1}Department of Electrical Engineering and Computer Science, Massachusetts Institute of Technology, Cambridge, MA, 02139, USA\\
  \affilmark{2}Whitehead Institute for Biomedical Research, Cambridge, MA, 02142, USA\\
  \affilmark{3}Howard Hughes Medical Institute, Chevy Chase, MD, 20815, USA\\
  \affilmark{4}David H. Koch Center for Integrative Cancer Research, Massachusetts Institute of Technology, Cambridge, MA, 02139, USA\\
  \affilmark{5}Department of Biology, Massachusetts Institute of Technology, Cambridge, MA, 02139, USA\\
  \affilmark{6}Department of Statistics, Harvard University, Cambridge, MA, USA\\
  \affilmark{7}Department of Biomedical Informatics, Harvard Medical School, Boston, MA, USA\\
  \affilmark{8}Department of Genetics, Yale Stem Cell Center, Yale School of Medicine, New Haven, CT, 06519, USA\\
  \affilmark{9}Max Planck Institute of Molecular Genetics, Max Planck Society, Berlin, 14195, Germany\\
  \affilmark{10}Department of Pediatric Oncology, Dana-Farber Cancer Institute and Division of Hematology/Oncology, Boston Children's Hospital, Harvard Medical School, Boston, MA, USA\\
  \affilmark{11}Department of Mathematics, Massachusetts Institute of Technology, Cambridge, MA, 02139, USA\\
  \affilmark{12}Kempner Institute for the Study of Natural and Artificial Intelligence at Harvard University, Allston, MA, USA\\
  \affilmark{13}Broad Institute of MIT and Harvard, Cambridge, MA, USA\\
  \affilmark{14}Harvard Data Science Initiative, Cambridge, MA, USA
}
\titlenote{\eqcontrib These authors contributed equally. \\ \indent \corrauthor Co-corresponding authors: \texttt{weissman@wi.mit.edu, rigollet@mit.edu, gokulg@mit.edu}}

\begin{document}

\maketitle

\begin{abstract}

A central goal of biology is to predict the dynamics of multicellular systems. Lineage context (the states of a cell's sisters, cousins, etc.) could inform these predictions, but how best to leverage it remains unclear. Here we introduce \textit{lineage inpainting}, a generative modeling approach that uses lineage context with tree-structured transformers and stochastic interpolants to model cell state dynamics across lineage trees. Applied to a lineage-traced transcriptional atlas of mouse embryogenesis, it recovers dynamics not resolved by gene expression alone, including multiple transcriptional convergence events where distinct lineages are drawn to a common state. Despite being fit on unperturbed development, lineage inpainting can generate hypotheses about perturbation responses by simulating counterfactuals. These simulations agree in many cases with observed developmental phenotypes, including a single-cell measurement of the Tbx5 knockout response. Our results show that lineage information provides a rich source of context for learning multicellular dynamics.

\end{abstract}

\section{Introduction}

Biologists have long sought to understand how a pluripotent cell gives rise to the many cell types in a fully developed organism \cite{Whitman1878-fb, Wilson1892-fp, Wagner2020-xr, Liberali2024-uy}. Technologies such as single-cell RNA-sequencing (scRNA-seq) have enabled atlases which enumerate the cell states that appear during development \cite{Wagner2018-cc, Briggs2018-po, Packer2019-xl, Qiu2022-xl, Qiu2024-cq}. While these atlases have mapped out cell state landscapes in remarkable detail, they do not provide explicit information about how states change over time, because transcriptomic measurements are typically destructive.

A large body of work has focused on the problem of inferring developmental dynamics from molecular snapshots alone \cite{Rashid2017-wa, Schiebinger2019-kc, Tong2020-ix, Qiu2022-ms, Huguet2022-sm, Tong2023-mf, Bunne2024-lq, Shen2024-ey}. A fundamental challenge of these efforts is that the problem is underconstrained: multiple possible dynamics can give rise to the same snapshots \cite{Weinreb2018-wy}. 
Many approaches, including optimal transport methods, resolve this ambiguity through a prescribed transport cost or dynamical regularization that favors short paths supporting a putative least-action principle driving cell dynamics. These assumptions are useful but can obscure biological transitions when molecular similarity is an unreliable proxy for shared ancestry or fate \cite{Packer2019-xl,Wagner2020-xr,Weinreb2020-uk}.

Lineage tracing provides complementary information about cell fate dynamics. Rather than inferring lineage relationships from molecular similarity, these techniques directly measure shared ancestry. In small, transparent embryos, such as \textit{C. elegans}  and the early embryo of \textit{D. rerio}, cell divisions can be tracked by microscopy \cite{Sulston1983-dj, olivier_cell_2010, kwak_live_2013}. Mammalian embryos are larger and less accessible, so lineage relationships are instead reconstructed from heritable marks that cells pass to their progeny \cite{lawson_clonal_1991, kretzschmar_lineage_2012, rodriguez-fraticelli_charting_2026}. These marks can be naturally occurring DNA mutations, static barcodes introduced at a given time point, or DNA edits that are continuously introduced over time by an engineered lineage tracing system \cite{frieda_synthetic_2017, mckenna_whole-organism_2016, alemany_whole-organism_2018, spanjaard_simultaneous_2018, raj_simultaneous_2018, kalhor_developmental_2018, bowling_engineered_2020}. In particular, recent methods including DNA Typewriter and PEtracer use prime editing to continuously and irreversibly modify synthetic edit sites in genomes \cite{Koblan2025-ac, Choi2022-yv, Yu2026-bh}. With PEtracer, we recently reconstructed lineage trees that resolve $\sim 75\%$ of cell divisions across early embryogenesis \cite{colgan}. However, again due to the destructive nature of transcriptomics, the internal node states (and the state dynamics) cannot be observed directly -- regardless of whether trees are measured by imaging or molecular recording.

The first step to shed light on the full progression of states is to make measurements across developmental times. If we observe both the lineage tree and transcriptional states at two or more timepoints, we can try to integrate them by matching the leaves of an earlier tree to the internal nodes of a later one. In \textit{C. elegans}, every individual organism follows the same sequence of cell divisions, producing the same tree \cite{Sulston1983-dj, Packer2019-xl}. Trees from different organisms can be directly overlaid, and the exact correspondence between cells at different timepoints is known.

In mammals, however, development is non-deterministic, so lineage relationships do not specify a fixed correspondence between cells in different embryos \cite{Tarkowski1967-bn, gardner_complete_1998}. Lineage nevertheless provides a \textit{context} in which to interpret each cell's measured state. This context can be informative both because development is regulative, with cell fates shaped by interactions with other cells \cite{Tarkowski1959-ve}, and because developmental history can bias fate in ways not captured by cell state measurements \cite{Weinreb2020-uk, Jindal2024-ik}. Considering a cell together with its relatives could therefore help distinguish dynamics that are ambiguous without lineage context. This motivates a shift from pairing cells to instead modeling distributions over states on a tree, considering the cell's state and its recorded lineage context simultaneously.

Recent methods have begun to use lineage information for dynamic inference. LineageOT \cite{Forrow2021-vs} and moslin \cite{Lange2024-ts} refine optimal transport couplings between timepoints using lineage information, returning a coupling over observed cells rather than a generative model. Consequently, one cannot directly use these methods to sample trajectories, forecast states, or predict the effects of perturbations. GENOT and related methods \cite{Klein2024-mk, Ventre2026-bg} provide approaches to lift such couplings into generative models, but operate at the single-cell level, predicting a cell's future or past from its current state alone.

Here we show that lineage context enables more faithful modeling of developmental trajectories beyond single-cell transcriptional states. We introduce \textit{lineage inpainting}, a generative approach that jointly samples unobserved cell states across a reconstructed lineage tree using the states of related cells observed at a single time point. The transformer architecture provides a natural foundation for this task: their attention mechanisms can integrate information across variable numbers of cells \cite{Vandergrift2026-aa, Fishman2025-gq}. We develop a tree-structured attention mechanism to incorporate lineage relationships and use it to build a transformer-based stochastic interpolant model \cite{Ho2020,Lipman2022-qc,Albergo2023}. We fit the model using a consistency objective: generated states must match observations from other trees collected at the corresponding developmental times \textit{in distribution}. This objective allows lineage inpainting to integrate measurements across embryos without requiring exact cell-to-cell correspondences or a predefined transport cost between cell states. 

We apply lineage inpainting to PEtracer data spanning mouse embryogenesis from E7.5 to E10.0 \cite{colgan}. The model yields a data-driven map of state transitions and recovers transcriptional convergence events missed by alternatives using gene expression alone, revealing how distinct progenitors can reach a common transcriptional state. We further show that a model trained on unperturbed development can be used to make  predictions about the developmental effects of genetic perturbations at single-cell resolution. Together, these results demonstrate that lineage context can improve the reconstruction and prediction of developmental trajectories.

\section{Results}

\begin{figure}[p]
    \centering
    \includegraphics[width=\linewidth]{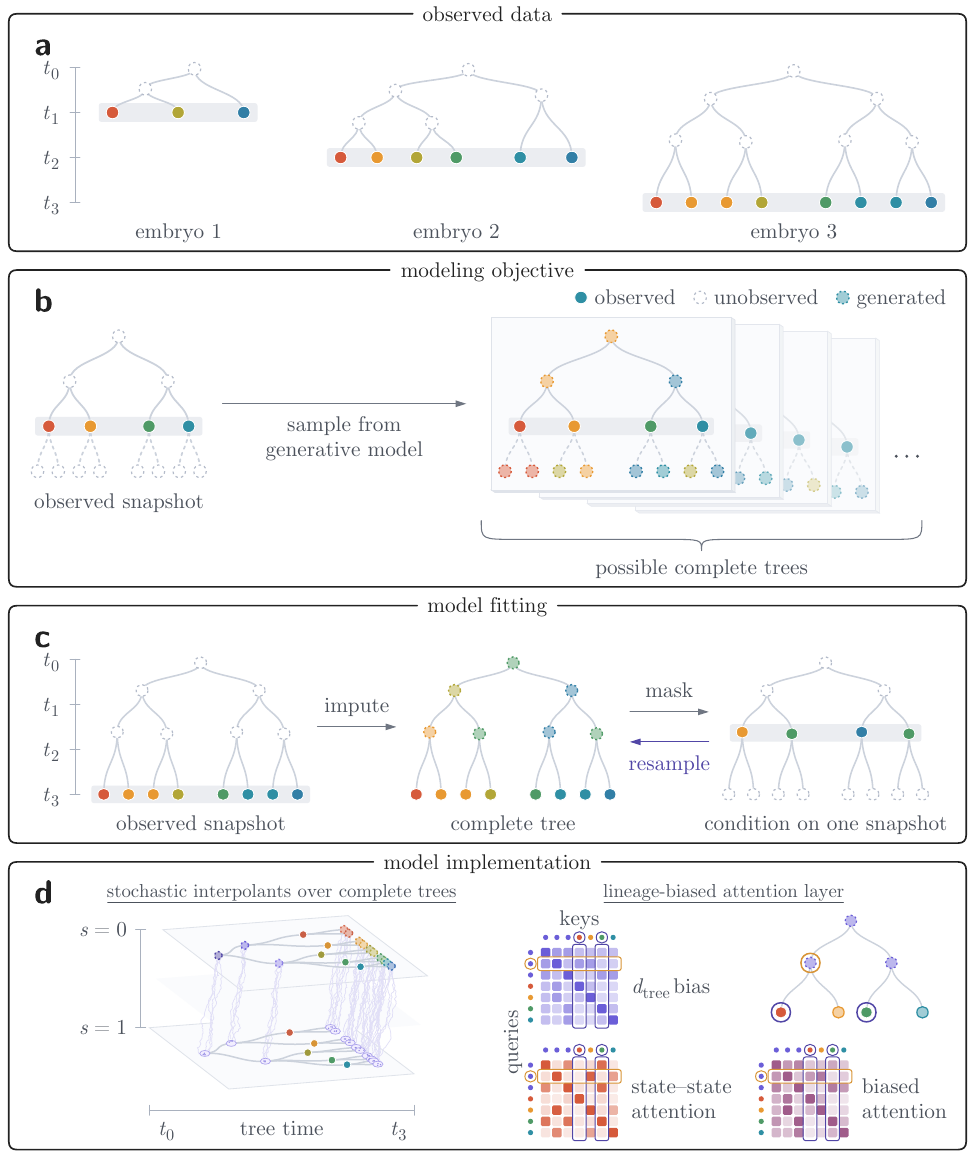}
    \caption{\textbf{Reconstructing developmental dynamics using lineage inpainting.} \captionmore}
    \label{fig:schematic}
\end{figure}

\contcaption{
    \textbf{(a)} Form of preprocessed lineage-traced scRNA-seq snapshots. States at the leaves of the tree are measured, and the tree topology and internal node timings are measured by tree reconstruction methods.
    \textbf{(b)} Our modeling objective: given a population of cells and an associated lineage tree with unobserved cells, sample from the distribution of possible state-filled trees. Nodes represent cells, and bold colors represent observed states. Colorless nodes are cells with unobserved states, pale colored nodes represent predicted node states.
    \textbf{(c)} A general approach to fitting a model of developmental dynamics to lineage-traced snapshots. Since direct measurements only ever provide states at the leaves of a tree, we first fill in internal nodes using an imputation model. We then mask all but one layer of the tree, and fit a generative model to match the masked node states. We refer to the two steps of the process as imputation and sampling.
    \textbf{(d)} (left) Lineage inpainting's sampling model is a stochastic interpolant model over trees, where states across an entire tree are simultaneously denoised using a tree-aware transformer architecture. (right) Lineage inpainting is built on a modified attention layer which incorporates a tree distance ($d_{\rm{tree}}$) bias.
}

The development of an embryo can be represented as a lineage tree branching through time as cells divide \cite{Wilson1892-fp}. Retrospective lineage tracing methods such as PEtracer \cite{Koblan2025-ac, colgan} allow us to reconstruct these trees. For each sequenced cell, these methods provide both its transcriptional state and its position among the leaves of a reconstructed lineage tree. The internal nodes of that tree represent ancestral cells, each with an inferred time from molecular clock-based branch length estimation \cite{Jones2020-jg, Prillo2025-ik}. Because sequencing is destructive each tree is observed only at a single time. We see the ``snapshot'' of cells alive at the time of capture (the leaves of the reconstructed tree) while every earlier ancestor and later descendant is unobserved (\textbf{Fig.~\ref{fig:schematic}a}). Sampling multiple embryos across development yields many trees, observed across different times.

Our goal is to turn these partial observations into a model of developmental dynamics which can condition on any observed snapshot and sample from the distribution of possible ancestors and descendants of each cell in the snapshot (\textbf{Fig.~\ref{fig:schematic}b}). We split this into two subproblems (\textbf{Fig.~\ref{fig:schematic}c}). First, we need to somehow fill in the unobserved states at internal nodes of measured trees, providing us with samples of complete trees. Second, we need to fit a generative model capable of sampling from a distribution of states forward or backward in time, conditional on a single snapshot. We refer to these subproblems as imputation and sampling respectively.

Below, we show how to address these problems. We first introduce a method for filling in internal node states which learns an imputation function rather than imposing any explicit transport cost, a second generative model for sampling states which operates over entire trees rather than individual cells and a tree-aware attention mechanism which underlies both models (\textbf{Fig.~\ref{fig:schematic}d}).

\subsection{Lineage inpainting}

\subsubsection{Learning to impute ancestral states}

Imputing states at the internal nodes of a tree is a classic problem in phylogenetics. In  ancestral state reconstruction, one has access to trait measurements (e.g., tail length) at the leaves of a phylogenetic tree (i.e., extant species), and the goal is to reconstruct the traits of ancestral species represented by internal nodes of the phylogenetic tree \cite{Felsenstein1985-ih, Pagel1999-go, Harmon2018-kk, Xu2026-uy}. This problem has one key distinction from our lineage tracing setting, which is that in phylogenetics only a single evolutionary tree is available, whereas in our developmental system we are able to observe multiple trees which have different observation times. 

A common approach to ancestral state reconstruction is to assume that evolutionary dynamics are Brownian. Under this assumption, Felsenstein's independent contrasts summarizes descendant states by averages weighted by their distance in the tree and their uncertainty \cite{Felsenstein1985-ih, Teo2025-graphicalmodels}.
The assumption of Brownian dynamics is sensible and perhaps necessary when one only has access to a single snapshot of a single tree. But with access to observations across multiple trees at different times, we can hope to learn a more specific ancestral reconstruction operation which is consistent with our observations.

The central challenge of fitting such an ancestral reconstruction model is that no internal node is ever measured, so there are no ground-truth ancestral states to regress against. Instead, we optimize our model using a loss function which encourages imputed node states to be consistent across our trees. Imputing the ancestors of a tree captured late in development yields a prediction for the populations at earlier timepoints, and we can compare these against true measurements of other embryos at earlier timepoints. A good ancestral reconstruction will match the distribution of the states at each time from the trees we observed at each of those times. Because each tree anchors this comparison to real data at a different time, requiring all trees to agree constrains the inferred dynamics to be consistent with our snapshot observations.

We implement this idea as an alignment loss that minimizes the distributional distance, across trees, between what we call \emph{tree features}. The simplest and most essential tree feature is the state of a single cell and its corresponding time. Matching the distribution of these across trees makes the cell population at each time point consistent from one tree to the next. This alone is not sufficient, because a model can reproduce the correct distribution of states at every time while still completely distorting the tree structure. So we also align higher-order tree features, including pairs of sister-cell states and trajectories of states across time (see Section~\ref{sec:coupling} for full details). Our alignment loss is the sum of the distributional distance of several tree features, between each pair of trees (\textbf{Supplemental Figs.~\ref{fig:supp-ising-running-example}--~\ref{fig:supp-ising-phase}}).

We compute these imputations with a transformer that takes the nodes of a tree as input, representing each observed node by its measured state and time, and each unobserved node starts as noise, with the model learning to turn this noise into consistent imputations. Tree structure is encoded in the model through a modified attention layer, where we bias each attention score by the tree distance between the two nodes it relates (\textbf{Box~\ref{box:attention}}). 

This imputation model provides predictions backward in time, filling in internal node states given the leaf states as input. However, it cannot sample from the distribution of descendants of a cell. To make predictions both forward and backward in time, we fit a separate sampling model.

\begin{mybox}[colback=white,colframe=blue!30!black,
title=\textbf{Box 1:} \sf A tree-structured transformer,label=box:attention, before upper={\parindent15pt}]
\begin{spacing}{1}

\small{\sf

A cell's fate and history are not fully determined by its current transcriptional state alone. Cells sharing a recent ancestor inherit common factors that indicate their origin and shape how they develop \cite{Weinreb2020-uk, Jindal2024-ik}. A single-cell level predictor is not able to leverage this shared information. Instead, we would ideally like a model that can process an entire tree --- each cell alongside all of its relatives --- at once. Here, we show how a simple extension of the transformer architecture allows us to do this.

The transformer is best known as the architecture at the heart of modern language modeling \cite{Vaswani2017, Turner2023-df}, but it is also a general purpose tool for processing sets of ``tokens''. In language, each ``token'' is a few characters in a sentence, and the transformer uses a mechanism called attention to allow each token to aggregate information from all others. Here, instead each token represents a cell state. Learned projections map token $i$ to a query $q_i$, key $k_i$, and value $v_i$, and attention updates it by averaging the values according to how well its query matches each key:

\begin{equation}
    h_i = \sum_j \alpha_{ij} v_j,
    \qquad
    \alpha_{ij} \propto \exp\left(\frac{q_i^\top k_j}{\sqrt{d_k}}\right),
\end{equation}

with $\alpha_{ij}$ normalized over $j$ and $d_k$ the query/key dimension. The update therefore combines information from the cells assigned the largest weights.

On its own, attention treats its inputs as an unordered set: tokens can influence one another only through their contents, never through their arrangement. Transformers conventionally supply the missing structure with \emph{positional encodings}---embeddings of each token's position added to its input---but these presume a left-to-right ordering like a token's position in a sentence, which cells across a branching tree do not have. An alternative approach is to encode relative position information in the attention mechanism itself \cite{Press2022}, by adding distance-dependent biases to the attention logits. This, in principle, generalizes to any setting where we have a pairwise distance between tokens that can be used to bias attention (e.g., spatial information).

Here, we use tree distance rather than the distance in the sequence (\textbf{Fig.~\ref{fig:schematic}d}). In particular, we use the \emph{patristic distance} $\operatorname{pat}(i,j)$ between two cells: the total branch length along the path joining them through their most recent common ancestor. We add the patristic distance to each pair's attention logit before normalization, scaled by a learnable weight $\gamma^{(l,h)}$, one per layer $l$ and head $h$:

\begin{equation}
  \alpha_{ij} \propto
  \exp\!\left(\frac{q_i^\top k_j}{\sqrt{d_k}} + \gamma^{(l,h)}\,\operatorname{pat}(i,j)\right)
\end{equation}

Patristic distance is how phylogenetics measures relatedness: under a Brownian model of state drift \cite{Felsenstein1985-ih}, the variance of the difference between two cells' states is proportional to their patristic distance, so biasing attention by it tells the model that recently diverged cells should resemble one another.

Early layers have large negative $\gamma^{(l,h)}$, concentrating attention on closer relatives, with positive weights only occurring in some heads of later layers (layers 3 and 4) favoring more distant relatives \textbf{(Supplemental Fig.~\ref{fig:supp_attn})}.

} %

\end{spacing}

\end{mybox}

\subsubsection{Learning to sample ancestor and descendant states}

Given trees with filled in internal states, our goal is to be able to condition on all the cells alive at any given time and sample the states for all the descendant and ancestral nodes. 
This problem resembles settings in computer vision called ``inpainting'' where we observe one part of an image and try to sample the rest, matching the natural distribution over images \cite{Lugmayr2022,Albergo2024-pq}. Building on these approaches, we introduce a tree-structured generative model that enables conditionally sampling ancestors and descendants based on the observed snapshot (Section~\ref{sec:diffusion}). 
Denoising diffusion models (and related flow matching and stochastic interpolant constructions) are widely used generative models that start from random noise and take denoising steps, with each step leading to sharper generated samples. We use a training recipe similar to the standard diffusion approach \cite{Ho2020}. We pick a snapshot of the filled in tree to condition on, which we treat as the observed states. We then corrupt the remaining nodes -- treated as unobserved -- with noise, and train the model to predict the unnoised clean states. Similar to the imputation model, we use a tree-based transformer architecture for the sampling model, so that the model can learn to integrate information over the entire lineage tree. When we want to sample the ancestors or descendants given an observed snapshot we can then initialize all the unobserved nodes as random noise and iteratively de-noise them, conditioning on the observed nodes throughout (\textbf{Fig.~\ref{fig:schematic}d}).

\subsubsection{Jointly learning to impute and sample}

Although in principle the imputation model and sampling model can be completely separate objects, we optimize the two models jointly (that is, end-to-end and simultaneously). At each training step, the imputation model first completes a partially observed tree, and that completed tree is then used to train the sampling model. 

Because the internal node states used to train the sampler are imperfect synthetic states produced by the imputation model, training directly on these outputs could lead the sampler to reproduce low-quality outputs from the imputer. To avoid this, we instead snap the imputer outputs to their nearest observed real cell at the corresponding or adjacent developmental snapshot, thereby providing sharp (realistic) training targets. We interleave these targets with training steps that retain the imputed states themselves, allowing the sampling loss to remain differentiable through the imputer. In this way, the sampler learns to generate realistic cell states while preserving the developmental correspondence learned by the imputation model (\textbf{Supplemental Fig.~\ref{fig:grid_stopgrad}}). Full training details are provided in Section~\ref{sec:training}.

This decomposition into imputation and sampling is more general than the particular models we use here. The imputation step can be any procedure that supplies a correspondence between states at different times, and the sampling step can be any conditional generative model trained on the resulting completed trees. GENOT, for example, fits into this framework: discrete optimal transport supplies the imputation, and conditional flow matching learns to sample from the resulting coupling. Among the combinations of imputation and sampling methods we evaluated, jointly optimizing tree-structured transformer models for imputation and sampling performs best (\textbf{Supplemental Text~\ref{supp:grid}}, \textbf{Supplemental Figs.~\ref{fig:grid_aligner}--\ref{fig:grid_overview}}).

\subsection{Evaluating lineage inpainting with synthetic developmental processes}

Cells do not always follow minimal displacement paths in gene expression \cite{Packer2019-xl}, and their trajectories are not always independent or Markovian with respect to transcriptomic state of a single cell \cite{Weinreb2020-uk, Tong2026-og}. We test the extent to which lineage inpainting is able to (1) learn non-least-action dynamics, and (2) learn context-dependent dynamics. Neither capability can be directly tested on real data, since no state of an internal node is ever measured in real lineage-traced scRNA-seq data. As such, we turn to synthetic and semi-synthetic datasets.

\subsubsection{Lineage inpainting can learn non-least-action dynamics}

We first ask whether lineage inpainting can learn developmental dynamics that violate least-action optimal transport assumptions. Using what we refer to as branch-modified SDEs (Section~\ref{sec:synthetic}), we construct two synthetic developmental processes that have similar marginal distributions, but distinct dynamics: one follows the least-action principle, while the other has trajectories that cross -- violating optimal transport assumptions. We take snapshots of this process, yielding lineage trees with states at the leaves, at various times, and ask if the true dynamics can be recovered from these snapshots.

We compare lineage inpainting against three baselines: multimarginal flow matching~\cite{Lipman2022-qc} (a deterministic cell-state flow baseline that does not use tree structure), GENOT with Sinkhorn couplings~\cite{Klein2024-mk}, which fits couplings between consecutive time points via entropic optimal transport (also ignoring tree structure), and GENOT with fused Gromov--Wasserstein (FGW) couplings~\cite{Klein2024-mk,Lange2024-ts}, which incorporates tree structure into the coupling cost. Full implementations and training details are given in Section~\ref{sec:baselines}.

Across every dimensionality tested, lineage inpainting recovers the true dynamics with high accuracy (\textbf{Fig.~\ref{fig:synthetic}a,b}, \textbf{Supplemental Fig.~\ref{fig:supp_synthetic}}). The GENOT baselines perform well on the first process, where the dynamics are optimal-transport-like, but degrade on the second, where the crossing structure cannot be recovered by a minimal-displacement coupling alone. The gap persists even for the FGW-based baseline, which does have access to the tree: its coupling still includes a state-space displacement cost, which favors the wrong assignment in the crossing process (Section~\ref{supp:baselines}). Multimarginal flow matching is far from the truth in both processes at every dimension, as expected for a deterministic method that cannot represent branching.

To understand if lineage inpainting can perform with more complicated data distributions, we next construct semi-synthetic datasets where real observed cells are wired together through artificial rules such that the true dynamics are known. Our approach is to construct large lineage trees out of small cell type motifs. A cell type motif is simply a repeated pattern of cell types on a lineage tree. For example, a size three motif could be a cell of type $A$ with daughters of type $B$ and type $C$ \cite{Tran2024-eu}. As visually depicted in \textbf{Fig.~\ref{fig:synthetic}c}, we construct cell type motifs by picking two real E9.0 cell types as progenitors, and four E9.5 cell types as descendants. Given such a motif, we can then construct a full lineage tree by repeatedly patterning the same motif, and randomly sampling different cells of the same type across different instances of the motif.

We test the ability of models to reconstruct the true dynamics from snapshots across 32 random choices of 6-node cell type motifs (\textbf{Fig.~\ref{fig:synthetic}c}). Since these cell types are randomly chosen, they can be optimal-transport-like (where child cell types closely resemble their parent cell types) by chance, but they need not be. We find that GENOT models generally perform well when dynamics are OT-like but fail otherwise, as we might expect. Lineage inpainting similarly performs well on OT-like motifs, but maintains strong performance across a broader range of dynamics. We can  characterize the dynamics where lineage inpainting succeeds. We call a motif Felsenstein-like if assigning the two sibling-group means to their imposed progenitors is cheaper than swapping the groups -- this echoes the inductive bias built into our architecture. We find that lineage inpainting recovers essentially every Felsenstein-like motif and fails only outside that set. In some cases, the randomly chosen cell type motif is neither OT-like or Felsenstein-like. Such cases can be pathological -- such as when two very distinct cell types switch fates (\textbf{Fig.~\ref{fig:synthetic}c}). In the sampled motifs, every motif classified as optimal-transport-like is also Felsenstein-like. Here lineage inpainting captures a broader range of motifs than the OT-based generative models.

\subsubsection{Lineage inpainting can learn context-dependent dynamics}

We next ask if lineage inpainting models are able to learn context-dependent dynamics. That is, if they are able to extract information from a cell's lineage context to improve predictions of its past and future. Again, we first test this using fully synthetic data generated by branch-modified SDEs. We construct a dataset where a progenitor commits stochastically to one of two fates at an early time. For several subsequent divisions the committed fate biases each descendant's state only weakly, such that it can be obscured by noise (with scale $\sigma$). Several generations later the two fates diverge clearly.

When $\sigma$ is large, a method that processes cells one at a time has access only to the very noisy information about early commitment steps. In contrast, a method that aggregates evidence across a clone should be able to make a more accurate prediction by averaging over it. Because the lineage inpainting model attends over cells in a subtree simultaneously, it can in principle perform this aggregation. Indeed, we find that lineage inpainting is able to extract this early fate bias from lineage context (\textbf{Fig.~\ref{fig:synthetic}d,e}). At low noise ($\sigma=0.2$), where the fate bias is clear from a single cell, all methods perform well, and increasing the context adds little. However, as noise increases, lineage inpainting performance improves with context size and further outperforms alternatives.

Having shown that lineage inpainting benefits from lineage context in synthetic datasets, we now move to a semi-synthetic setting with real cells. We again use the same kind of cell type motif, but with a different motif structure, which resembles a transcriptional convergence event. We build trees using a mixture of two motifs that share a child type $B$ (\textbf{Fig.~\ref{fig:synthetic}f}) -- in other words, there are two paths to making the same cell type. In both motifs, the parent of the shared child type is determined by the type of its cousin. To infer the ancestor of a type $B$ cell, a model must therefore copy its cousin's cell type.

Models which operate at the single-cell level cannot perform this context copying operation, since they take only a single cell as input. Lineage inpainting, however, has access to the cousin cell type at test time. Across 10 random instances of these motifs, lineage inpainting is able to correctly assign the cell type of the ancestral cells reliably (\textbf{Fig.~\ref{fig:synthetic}f}).

Together, these benchmarks show that lineage inpainting recovers couplings that minimal displacement cannot, and resolves dynamics that no single cell's state determines, in both simulated and real cell states.
\FloatBarrier

\begin{figure}[p]
  \centering
  \includegraphics[width=\linewidth]{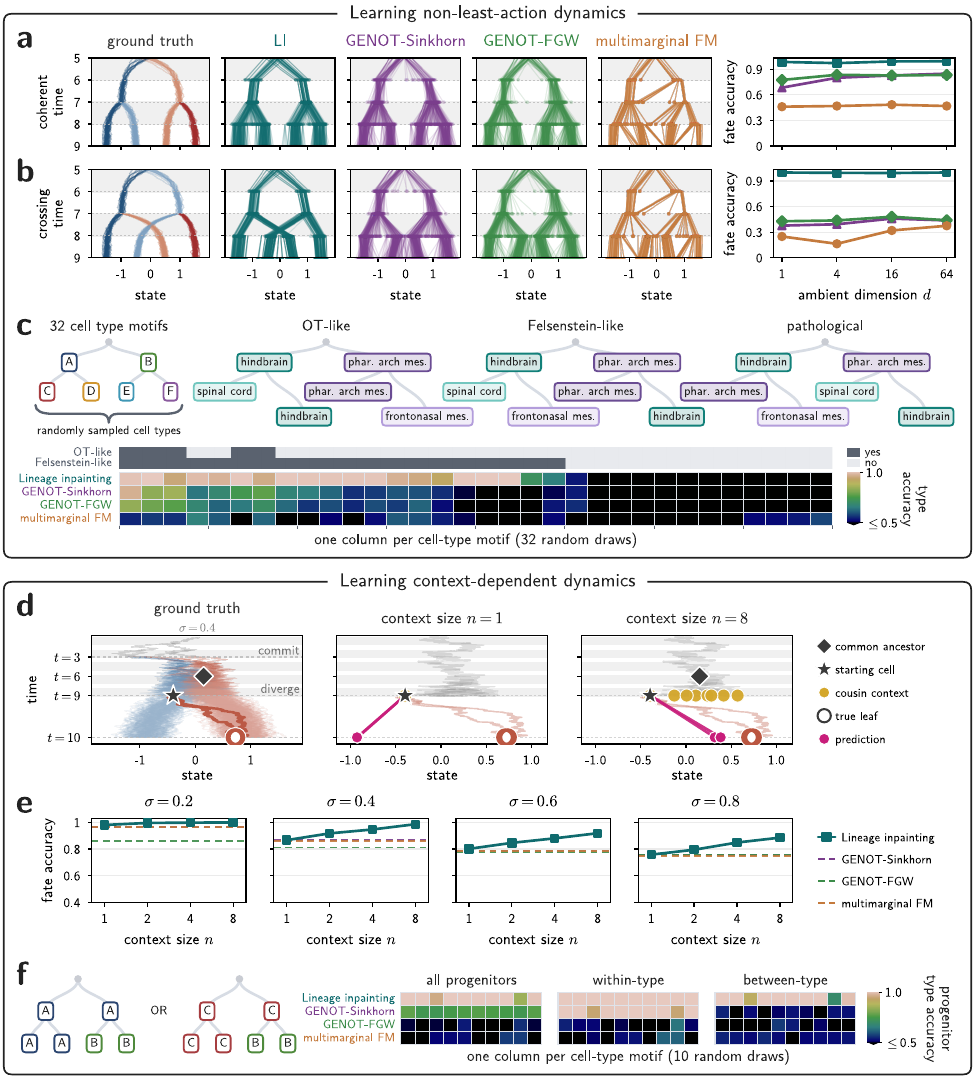}
  \caption{\textbf{Lineage inpainting leverages tree topology and lineage context to reconstruct synthetic developmental processes.} \captionmore}
  \label{fig:synthetic}
\end{figure}

\contcaption{%
  \textbf{(a)} (left) Ground truth and reconstructed dynamics for an optimal-transport-like developmental process. (right) Mean fate accuracy of $7 \to 8$ predictions of each model, across the 1D dynamics linearly projected to higher dimensions.
  \textbf{(b)} Identical to (a), but with crossing trajectories that violate optimal transport assumptions.
  \textbf{(c)} (top left) Cartoon illustrating design scheme for synthetic datasets constructed from random cell type motifs. (top right) Example possible cell type motifs, classified as optimal-transport-like, Felsenstein-like, or neither. (bottom) Performance across the 32 sampled motifs for each method, measured by child-type accuracy (colorbar clipped below at $0.5$).
  \textbf{(d)} Ground truth and reconstructed dynamics for a noisy early fate commitment, across different context sizes. Input cousin context shown in yellow, and lineage inpainting predictions shown in pink.
  \textbf{(e)} Lineage inpainting performance as a function of context size, across early fate commitment dynamics with varying noise levels. Baselines (which cannot scale with context) are shown in dotted lines.
  \textbf{(f)} (left) Cartoon illustrating a design scheme for synthetic datasets with context-dependent dynamics, in which the progenitor identity of type $B$ children is determined by their cousins. (right) Progenitor type accuracy (whether the imputed ancestor is automatically assigned the correct cell type) over 10 random draws of motifs. Accuracy shown across all progenitors and split by whether the transition is within or between types (e.g. $A \to A$ or $A \to B$).
}

\subsection{Modeling mouse embryogenesis using a lineage-traced atlas}

Having established the model's performance on synthetic data, we next apply lineage inpainting to our primary setting: mouse embryogenesis. We assemble a $\sim$4-million-cell atlas spanning E7.5 to E10.0, combining 1.4 million lineage-traced cells from \cite{colgan} with non-lineage-traced cells from the same dataset and from \citet{Qiu2024-cq}. The cells from these two datasets are integrated in a single scVI model \cite{Lopez2018}. We fit a single lineage inpainting model to this combined dataset, operating on the scVI latent space (see Sections~\ref{sec:data} and~\ref{sec:coupling}).

To visualize the ancestor--descendant coupling learned by the model, we construct a cell-type-level transition map spanning E7.5 to E10.0 (\textbf{Fig.~\ref{fig:f3_summary}a}). For each observed time point, we take the trees with leaves at that time point, sample their internal node states near the previous observed time point. We then label these imputed states by their nearest neighbors and aggregate into a cell type transition map (Section~\ref{sec:sankey-transition-maps}). Despite being completely blind to labels during training time, lineage inpainting recovers a cell-type transition map which largely recapitulates known transitions. Compared with the manually curated list of cell state transitions presented in \citet{Qiu2024-cq}, 93\% of inferred single-cell transitions are consistent, with the remainder detailed in \textbf{Supplemental Fig.~\ref{fig:supp_summary}}. Notably, this map captures the multiple mesodermal origins of endothelial cells \cite{Stone2019-hb, pardanaud_two_1996, couly_angiogenic_1995, Sahai-Hernandez2023-uy}. %
We next ask if the individual state transitions are faithful at single-cell resolution. We evaluate the model at the single-cell level using six metrics which evaluate the distribution of states, individual trajectories, and the structure of the tree as a whole (Section~\ref{sec:metrics}). Two metrics are purely data-driven and do not rely on cell type labels or biological prior knowledge: \emph{clonal distortion}, which measures whether pushing a clone back to an ancestral state and then forward recovers the clone's original distribution of states, and \emph{marginal distortion}, which measures the disagreement between the model's imputed distributions and the empirical cell-state distribution at each observed time point. Three metrics assess individual trajectories and draw on cell type annotations: how often inferred transitions cross germ layers (\textit{germ layer crossing rate}) or oscillate between cell-type labels (\textit{lineage oscillation rate}), and how well ancestor and descendant expression is correlated for Hox genes known to remain stable within neural ectoderm and neural crest cell types (\emph{Hox code distortion}) \cite{Alexander2009-hox}. Finally, \emph{linkage distortion} compares the ancestral-linkage matrix \cite{Fang2022-tg, colgan} -- a summary of how recently pairs of cell types shared a common progenitor -- between real and model-imputed trees, quantifying whether the model preserves tree structure rather than individual trajectories or state distributions alone.

Across all six metrics, lineage inpainting outperforms two optimal transport baselines fit to the same atlas, GENOT-Sinkhorn and GENOT-cbFGW \cite{Klein2024-mk} (\textbf{Fig.~\ref{fig:f3_summary}b}, \textbf{Supplemental Fig.~\ref{fig:supp_holdout}}).

We next ask how much lineage inpainting uses lineage context to refine its predictions, separately for ancestor and descendant prediction (Section~\ref{sec:context-scaling}). Because neither direction has ground-truth states, we score both by cycle consistency: whether states pushed to another time and back recover the starting distribution. For ancestral inference, we fix a target internal node and vary the amount of descendant context provided as input. For descendant inference, we infer a fixed target leaf set from ancestral snapshots with varying numbers of cousins and conditioning times.

In both directions, increasing context improves performance (\textbf{Fig.~\ref{fig:f3_summary}c,e}). At the smallest context sizes, lineage inpainting is comparable to or worse than the GENOT baselines. However, with additional context, it continues to improve well beyond them. The benefit of added context is not uniform across cell types (\textbf{Fig.~\ref{fig:f3_summary}d,f}). For ancestral inference, the largest gains appear in epidermal progenitors, while the cervical spinal cord benefits least. For descendant inference, the endothelial cells gain the most and the cervical spinal cord the least.

Taken together, these results indicate that lineage inpainting learns biologically sensible cell-state transitions from the mouse embryogenesis atlas.  Its performance increases as a result of its ability to make use of lineage context.

\begin{figure}[p]
    \centering
    \includegraphics[width=\linewidth]{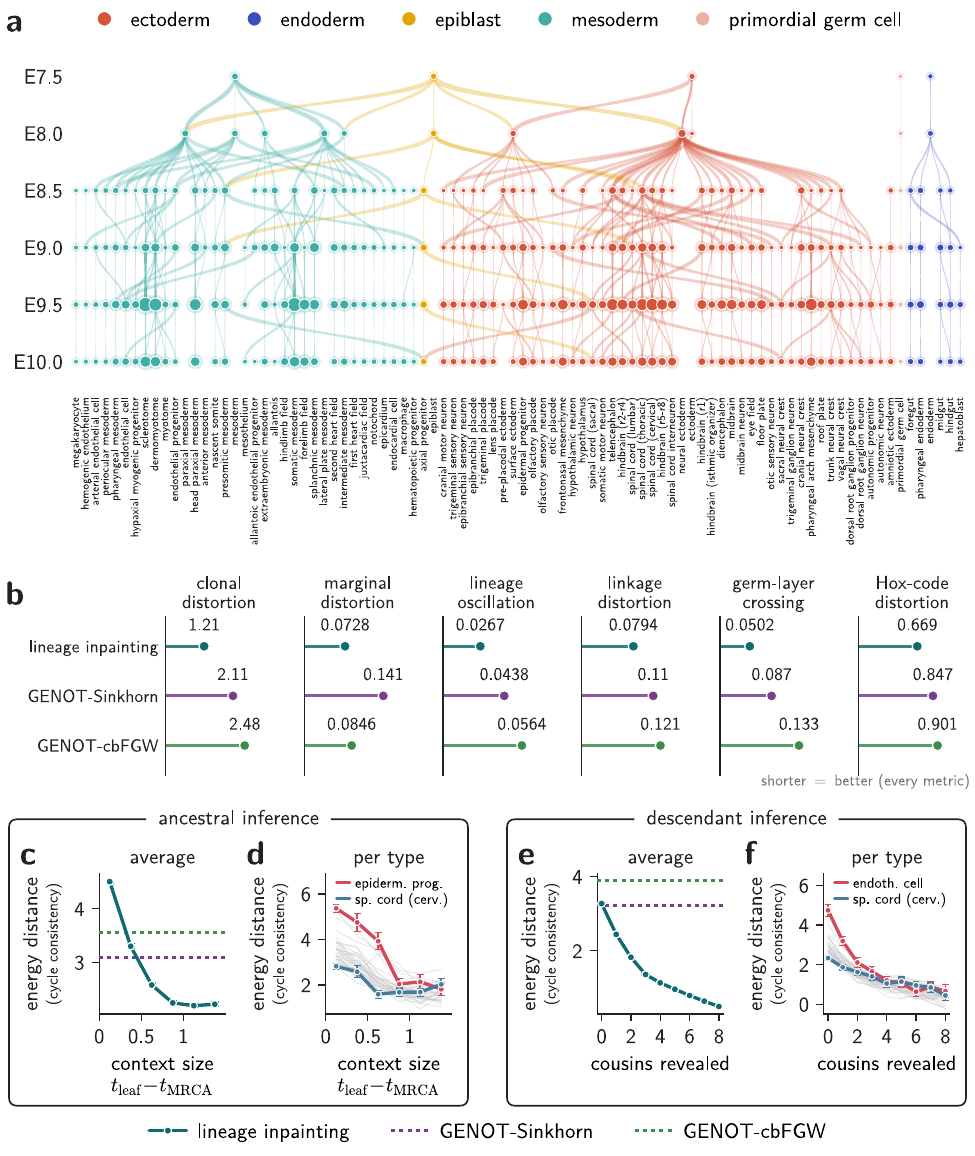}
    \caption{\textbf{Summarizing cell state transitions inferred by lineage inpainting.} \captionmore}
    \label{fig:f3_summary}
\end{figure}

\contcaption{%
    \textbf{(a)} Transition map inferred by lineage inpainting, coarse-grained to the cell type level, ordered by timepoint and colored by germ layer. Node size scales with cell count. Edge thresholding method is provided in Section~\ref{sec:sankey-transition-maps}.
    \textbf{(b)} Quality control metrics across transition maps inferred by lineage inpainting, GENOT-Sinkhorn and GENOT-cbFGW. Lower is better across all metrics. Metric definitions are provided in Section~\ref{sec:metrics}.
    \textbf{(c)} Model performance, measured by cycle consistency (leaves $\to$ root $\to$ leaves), as a function of the temporal extent of revealed context ($t_{\rm leaf}-t_{\rm MRCA}$, in embryo days). Dotted lines are GENOT baselines which do not scale with context.
    \textbf{(d)} Identical to (c), stratified by cell type, each shown in gray lines. The two fates gaining most and least from context are highlighted by red and blue curves respectively. Epiderm. prog. = epidermal progenitor and sp. cord (cerv.) = spinal cord (cervical)
    \textbf{(e)} Cycle consistency performance as a function of the number of cousins used in the descendant inference step.
    \textbf{(f)} Identical to (e), stratified by cell type. Again, the most and least context benefiting cell types are shown in red and blue respectively. Endoth. cell = Endothelial cell. Error bars in (d,f) are descriptive standard errors across evaluation rows.
}

\subsection{Lineage inpainting resolves transcriptional convergence dynamics}

Having fit a lineage inpainting model of mouse development, we ask how its inferred dynamics differ from those of optimal transport in the gene expression latent space. We compare against GENOT-Sinkhorn \cite{Klein2024-mk}. To make the comparison, we sample a discrete coupling between cells from each model (Section~\ref{sec:baselines}). Across all adjacent time points, we measured the disagreement between the models as the L2 distance between their two predicted progenitor states (\textbf{Fig.~\ref{fig:f4_convergence}a}).

The disagreement depends on cell type. For some cell types it is small, and OT and lineage inpainting recovered similar ancestors. For others -- endothelial cells and several classes of differentiated neurons -- the difference is large (\textbf{Fig.~\ref{fig:f4_convergence}b}). We ask what distinguishes these cell types.

From our synthetic and semi-synthetic experiments, we know that cell-level models struggle to capture convergence events, while lineage inpainting is able to resolve them using lineage context (recall \textbf{Fig.~\ref{fig:synthetic}f}). To test if this is the source of the disagreement between the models in the mouse development setting, we need a way to score convergence. Convergence is a statement about origins rather than present states: if cells that look alike today descend from a single progenitor population, their inferred ancestors should also look alike, whereas if the same state was reached independently from diverse origins, those ancestors will be scattered across the expression space.

We turn this intuition into a per-cell score by taking the 100 nearest neighbors of each cell in the scVI latent space and measuring the transcriptional variance among their inferred progenitors, relative to the local spacing of cells (Section~\ref{sec:convergence-analysis}). Because the neighbors are by construction similar in their current state, any remaining variance among their ancestors reflects divergence in origin rather than in present identity. Across cell types, the convergence score aligned with the OT-vs-lineage inpainting disagreement (Spearman $\rho=0.78$), and the same cell types were high on both (\textbf{Fig.~\ref{fig:f4_convergence}b}, \textbf{Supplemental Fig.~\ref{fig:supp_conv}}).

Endothelial cells are the clearest example of transcriptional convergence. Their inferred progenitors fall largely into five cell types: somatic mesoderm, dermomyotome, splanchnic mesoderm, head paraxial mesoderm and allantois (\textbf{Fig.~\ref{fig:f4_convergence}c}), suggesting that endothelium is generated independently by paraxial, lateral plate and extraembryonic mesoderm. This model prediction is consistent with prior literature establishing that endothelium arises from multiple distinct lineage origins \cite{Stone2019-hb, pardanaud_two_1996, couly_angiogenic_1995, Sahai-Hernandez2023-uy}.

The model reaches this conclusion from lineage context rather than expression alone. At E9.5 endothelial cells sit in clades composed predominantly of leaves of a single other type, with different endothelial cells embedded among different types (\textbf{Fig.~\ref{fig:f4_convergence}c}), and the model assigns ancestors accordingly. Cell-level models such as GENOT have no access to this clonal information at test time, and consequently struggle to distinguish endothelial cells of different origins based on their transcriptomes alone.

\subsection{Quantifying transcriptional memory and erasure in convergent differentiation}

When distinct lineages converge on a common cell type, what happens to the differences between them? In 1976, \citet{Lewis1976-zo} proposed that cells could have the same functional identity while retaining distinct signatures associated with their spatial location (or positional information), an idea they called non-equivalence. In transcriptional convergence events where multiple cell types are produced independently in different regions of the embryo, non-equivalence suggests that convergent cell types may retain signatures of their origin.

Targeted experimental studies have tested non-equivalence in individual cell types. Some cell types appear molecularly equivalent across lineages: macrophages arising from distinct origins placed in the same niche become near indistinguishable \cite{van-de-Laar2016-yd}. In contrast, fibroblasts epigenetically maintain their embryonic Hox code expression \cite{Rinn2008-zv}. These individual case studies show that origin-specific differences can persist to varying extents in different cell types. Here, we use lineage inpainting to quantify origin-specific transcriptional memory across multiple convergence events in the embryo.

We focus on three cell types with high convergence scores: endothelial cells, spinal cord interneurons and somatic motor neurons (\textbf{Fig.~\ref{fig:f4_convergence}b}). For endothelium, we compare origins in paraxial mesoderm, a tissue along the sides of the developing spine, and lateral plate mesoderm, which lies further to the sides. For the neurons, we compare origins along the anterior-posterior axis, which we separate using spinal cord and hindbrain annotations \cite{metzis_nervous_2018}. We defined transcriptional programs distinguishing these sources, together with cell type identity programs, by linear discriminant analysis between labeled cells (Section~\ref{sec:convergence-analysis}, \textbf{Fig.~\ref{fig:f4_convergence}d}).

Binning the inferred trajectories by their E8.5 source and following them along both programs shows that across all three convergence events, cells first diversify along the source-associated program, then establish cell type identity with similar dynamics across distinct sources (\textbf{Fig.~\ref{fig:f4_convergence}e}). Notably, the dynamics of diversification are not the same across types: endothelial progenitors are already separated along the paraxial versus lateral plate program at E7.5, before the endothelial identity program is established, whereas the neuron progenitors remain similar on the anterior-posterior program until after E7.5. The stability of the positional transcriptional program is also distinct between ECs and neurons. Endothelial cells begin to lose their spatial distinctness as the identity program establishes: the separation peaks near E8.5 and then narrows through E9.5. In contrast, the neurons hold theirs stably after E8.5.

These results suggest that neurons retain more memory of their origin than endothelial cells do. We next quantify this memory at the level of individual genes. For each convergence event, we measure expression differences between source populations at E8.5, for instance between paraxial-derived and lateral plate-derived endothelial progenitors, and compare them with differences between their differentiated descendants at E9.5 (\textbf{Fig.~\ref{fig:f4_convergence}f}). We call genes with origin-associated differences at both times \textit{memory genes}, and those whose differences are no longer detected at E9.5 \textit{erased genes}.

All convergence events have some memory genes and some erased genes, but the proportions vary across cell types (\textbf{Fig.~\ref{fig:f4_convergence}g}). Endothelium retains a number of memory genes but erases a far larger share of its origin-specific differences than any of the neuronal types, which retain considerably more. This is consistent with the regional specialization of neurons \cite{Jessell2000-vr, forlani_acquisition_2003, wellik_hox_2007}. We find that spatial markers -- Hox genes in particular -- are retained across every event we examined, including \texttt{Hoxa7}, \texttt{Hoxa9} and \texttt{Hoxb9} in endothelium and spinal cord interneurons and \texttt{Hoxc6} and \texttt{Hoxc9} in somatic motor neurons.

Across these convergence events, cells acquire a shared identity while retaining some transcriptional differences between their origins, consistent with the idea of non-equivalence. The extent and dynamics of transcriptional memory retention and erasure vary between cell types.

\FloatBarrier

\begin{figure}[p]
  \centering
  \includegraphics[width=\linewidth]{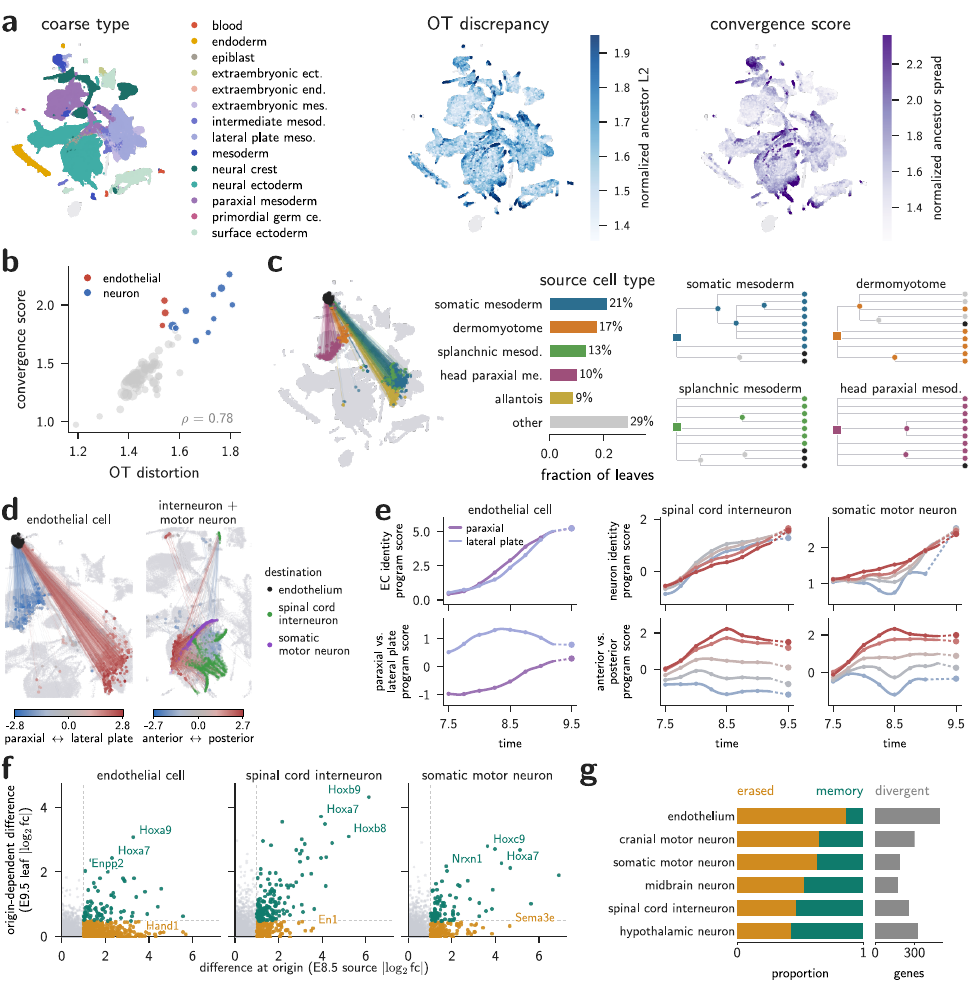}
  \caption{\textbf{Lineage inpainting captures transcriptional convergence dynamics.} \captionmore}
  \label{fig:f4_convergence}
\end{figure}

\contcaption{%
  \textbf{(a)} UMAPs of the E8.0-E10.0 lineage-traced cells. (left) Hued by coarse cell type, (center) by the smoothed discrepancy between GENOT-Sinkhorn and lineage inpainting inferred ancestors, (right) by convergence score (Section~\ref{sec:convergence-analysis}).
  \textbf{(b)} Convergence score and OT discrepancy aggregated at the cell type level (one point per cell type). Neurons and endothelial cell types are highlighted.
  \textbf{(c)} (left) Origin of E9.5 endothelial cell leaves, traced to their inferred E8.5 source ancestor and hued by source type. (center) Bar chart showing source-type composition. (right) Representative subtrees showing observed leaves and inferred ancestors of endothelial cells.
  \textbf{(d)} Origin of E9.5 leaves, hued by spatial program score: endothelial cells hued by lateral plate vs. paraxial program, and spinal cord interneurons together with somatic motor neurons, both hued by an anterior vs. posterior program.
  \textbf{(e)} Program scores along inferred trajectories. Trajectories are binned into equal sizes by spatial program score at E8.5. Hue for neurons matches color bar in (d).
  \textbf{(f)} Quantifying gene expression memory in convergence events, by comparing source-dependent differential expression at E9.5 and differential expression between sources at E8.5. Genes distinct between cells of the two origins at both times are considered memory genes (teal) while those whose E8.5 difference is gone by E9.5 are considered erased (orange).
  \textbf{(g)} Proportion of memory and erased genes across six convergent cell types, with the number of origin-distinct genes behind each bar shown in gray.
}

\begin{figure}[p]
    \centering
    \includegraphics[width=\linewidth]{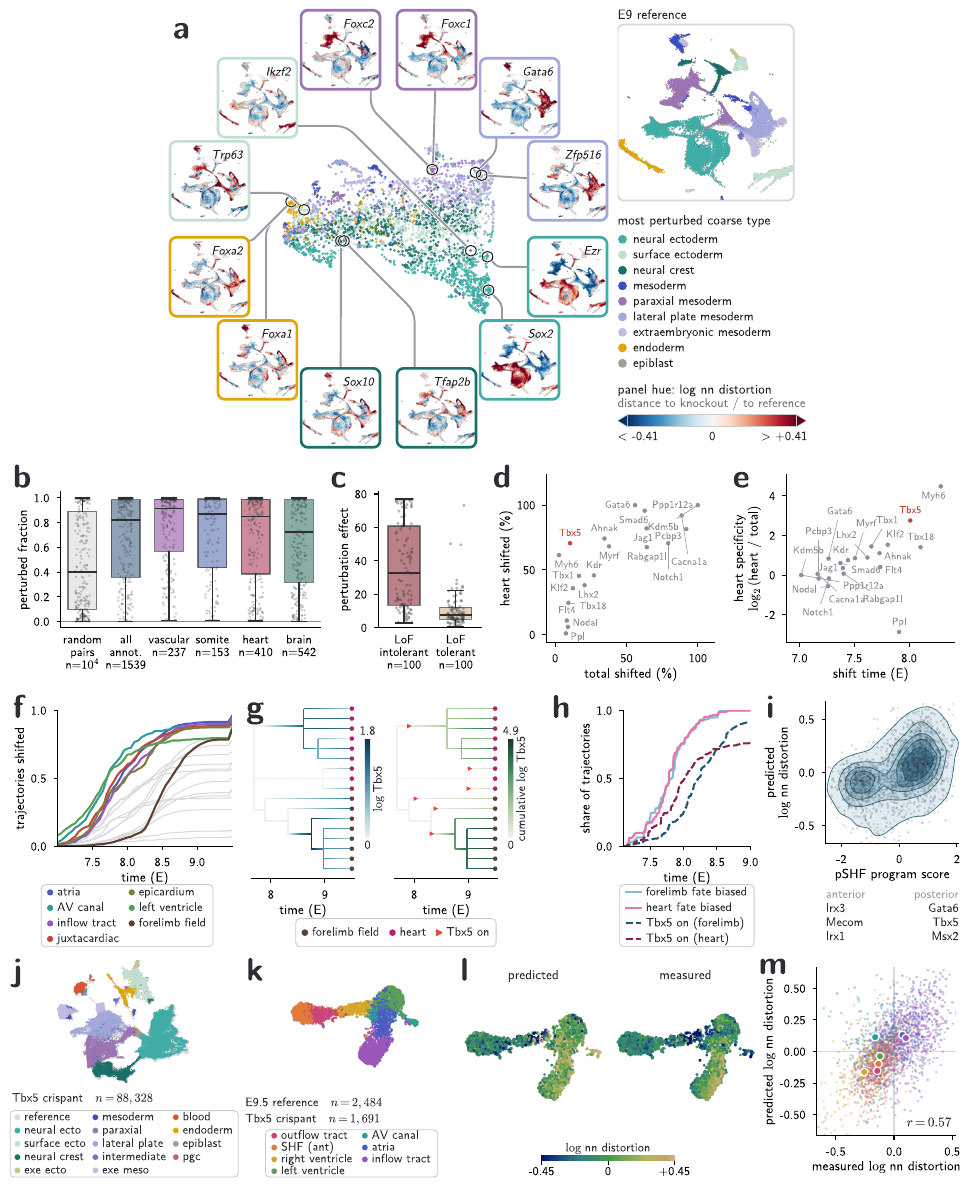}
    \caption{\textbf{Lineage inpainting suggests developmental effects of genetic perturbations.} \captionmore}
    \label{fig:f5_pert}
\end{figure}

\contcaption{%
    \textbf{(a)} Visualization of $3{,}939$ simulated perturbations by UMAP of mean perturbation effect at E9.0. Each point corresponds to a simulated knockout embryo, hued by the coarse cell type label it most depletes. Inset UMAPs show knockout effects at single cell resolution on the E9.0 embryo UMAP (log nearest-neighbor distortion: distance from reference cell to nearest neighbor in simulated knockout embryo, normalized to nearest-neighbor distance of a control; Section~\ref{sec:nn-distortion}), for knockouts which most specifically deplete selected coarse cell types.
    \textbf{(b)} Fraction of cell type $c$ perturbed by knockout of gene $g$, over $1{,}539$ annotated $(g, c)$ pairs, shown against a matched random-pair null and separated by broad tissue of $c$.
    \textbf{(c)} Predicted perturbation effect strength ($L_2$ norm of cell type abundance log fold changes) for the $100$ most and $100$ least loss-of-function constrained screen genes, binned by gnomAD LOEUF.
    \textbf{(d)} Share of heart trajectories shifted versus total trajectories shifted at E9.0 for 21 overlapping genes from \citet{Sierant2025-lx}.
    \textbf{(e)} Heart specificity (defined as log-ratio of heart shift to total shift fraction) against the mean time when trajectories are shifted from the reference tree.
    \textbf{(f)} Trajectories shifted in the \texttt{Tbx5} knockout as a function of time, resolved by cell subtype, with the heart subtypes and the forelimb field highlighted.
    \textbf{(g)} Inferred \texttt{Tbx5} expression along an example subtree pruned to forelimb field and heart, shown with instantaneous expression (left) and accumulated along each path (right), with the onset of expression marked.
    \textbf{(h)} Cumulative share of trajectories that have become biased to a single fate (solid) and that have turned \texttt{Tbx5} on (dashed), separately for heart and forelimb across all E9.5 trees.
    \textbf{(i)} Predicted nearest-neighbor distortion against posterior second heart field program score, for all heart tube reference cells. The transcription factors marking each end of the program axis are listed below the panel.
    \textbf{(j)} Integrated latent space over the E9.5 reference and $88{,}328$ \texttt{Tbx5} crispant cells, hued by lineage.
    \textbf{(k)} The E9.5 heart tube, reference and \texttt{Tbx5} crispant hued by subtype.
    \textbf{(l)} Predicted and measured nearest-neighbor distortion over the E9.5 heart tube.
    \textbf{(m)} Predicted against measured distortion, one point per reference cell, with subtype means overlaid.
}

\subsection{Reasoning about developmental effects of genetic perturbations}

Lineage inpainting provides a model of developmental dynamics in unperturbed embryos. We next ask if this purely observational model can be used to reason about the effects of genetic perturbations. 
Intuitively, a model of developmental lineage trees is a good source for reasonable hypotheses about genetic perturbation effects. For any gene, the model shows which progenitor cells express it and what those progenitors give rise to. It also shows the developmental output of similar progenitors that do not express the gene. Together, these can be used to hypothesize an effect of genetic knockout.

To test this, we treat our observational model as causal: perturbing a gene alters the states of individual cells, but leaves the state transition rules unchanged, such that the same forward model can be applied to cells that a perturbation has displaced in state space. Under this assumption, we simulate a knockout by projecting perturbed cells to their nearest neighbors that do not express target gene \cite{Fijabi2026-sq}, and propagate them forward through the tree, exactly as we do in the unperturbed case. Cells lacking nearby non-expressing observations are considered off-manifold and are not propagated further.

We simulate the knockout of each of the $3{,}956$ highly variable genes on which the model is fit and categorize perturbations based on which coarse cell type they most severely deplete. Within each category, the most specific depleting perturbations are usually its well-known regulators. In particular, \texttt{Sox10} is the most specific neural crest-depleting perturbation \cite{Britsch2001}, \texttt{Foxc1/2} for the paraxial mesoderm \cite{Wilm2004}, \texttt{Foxa1/2} for the endoderm \cite{Lee2005}, and so on. These textbook regulators were not chosen manually, but rather emerge from an untargeted screen over all highly variable genes, ordered by cell type depletion specificity.

To more broadly test lineage inpainting predictions for genetic knockout effects, we compare the predicted cell type abundances to knockout phenotypes annotated in the Mouse Genome Database (MGD) \cite{Eppig2015-bc}. We filtered to MGD phenotype annotations indicating abnormal development of a structure for which we have corresponding cell type annotations. 
This yields $1{,}539$ gene to cell type pairs, each of which represents a cell type known to be affected by a defect in the paired gene. We then measure, for each pair, the extent to which genetic knockout is predicted to perturb the paired cell type (\textbf{Fig.~\ref{fig:f5_pert}b}, \textbf{Supplemental Fig.~\ref{fig:supp_mgi}}). We find that annotated pairs have a larger fraction of perturbed cells (median $82\%$) compared with a background of randomly paired cell types and genes (median $40\%$).

Agreement with the MGD annotations varies by tissue (\textbf{Fig.~\ref{fig:f5_pert}b}). Median perturbed fraction is high for vascular ($91\%$) and cardiac ($85\%$) cell types but  lower for neural cell types ($73\%$). This likely reflects the developmental window of the model: neuronal differentiation is just beginning around E9.0, and many MGD neural phenotypes may arise later.

We next ask if these predictions are consistent with human loss-of-function effects, using gnomAD gene constraint scores \cite{Lek2016-wm, Karczewski2020-ki} as a measure of the consequences of genetic perturbation. If the model captures developmental effects that generalize across species, genes more strongly constrained in humans should have more severe predicted effects. We identify the 100 most and least constrained of the $3{,}591$ screened genes with human orthologs (LOEUF $\leq 0.22$ and $\geq 1.75$, respectively), and evaluate their predicted perturbation severity (measured as the $L_2$ norm of predicted cell type abundance log fold changes from the reference). Highly constrained genes have a predicted median effect score of $32.6$, compared to $7.4$ for unconstrained genes.

Mutational constraint indicates how essential a gene is, but not the nature of its effect. The remainder of this section focuses on the heart, where we can develop a finer-grained understanding of the model predictions and capabilities. %
We first consider the model's predictions for a set of genes implicated in congenital heart disorders (CHD) \cite{Sierant2025-lx}.
Of the 21 genes shared between our highly variable genes and the database, 19 are predicted to deplete heart cells by over 10\% (\textbf{Fig.~\ref{fig:f5_pert}d}). %

Beyond predicting whether a perturbation affects the heart, the model can also predict when a perturbation effect begins. By comparing the counterfactual trees to the control trees, we can identify when a perturbed lineage first diverges from its unperturbed counterpart, which we refer to as the shift time. Across the 21 CHD-related genes, we find that shift time aligns closely with specificity of heart depletion (\textbf{Fig.~\ref{fig:f5_pert}e}). This aligns with the intuition that a perturbation that removes a broadly required gene diverts cells early, whereas one specific to the heart cannot act until that population has been defined.

\subsection{Lineage inpainting predicts the developmental dynamics of \texttt{Tbx5}-perturbed embryos}
We next ask how the predicted effects of \texttt{Tbx5} knockout unfold across developmental trajectories. We first ask when the model predicts that trajectories will shift in the \texttt{Tbx5} knockout, resolved by cell subtype. We find that heart trajectories are affected quite early, many prior to E8.0, while forelimb field trajectories shift much later (\textbf{Fig.~\ref{fig:f5_pert}f}). This suggests the cardiac and forelimb effects of \texttt{Tbx5} arise in separate progenitor populations, rather than through a shared progenitor \cite{ivanovitch_ventricular_2021}.
This can be appreciated by looking at inferred \texttt{Tbx5} expression along the tree, where the gene turns on only after clades have become biased to only forelimb or only heart (\textbf{Fig.~\ref{fig:f5_pert}g}). To test this across all trees, we compute for each heart and forelimb trajectory when it becomes fate-biased (generating only heart or only forelimb descendants) and when it begins to express \texttt{Tbx5}. \texttt{Tbx5} turns on after restriction to the heart or forelimb fate, rather than within earlier progenitors that produce both cell types (\textbf{Fig.~\ref{fig:f5_pert}h}).

Within the heart, the predicted shifts are not uniform. The inflow tract and the atria are strongly shifted, while the outflow tract is much less so (\textbf{Fig.~\ref{fig:f5_pert}f}), so posterior cell types are more affected than anterior ones, consistent with prior work \cite{Bruneau2001-yk, buckingham_building_2005, meilhac_deployment_2018}. We next ask if this pattern persists at a more fine-grained level than cell type labels. We measure shift magnitude as \textit{nearest-neighbor distortion}, defined for each reference cell as its distance to the nearest perturbed cell, normalized by its distance to the nearest reference cell (Section~\ref{sec:nn-distortion}). We find that the pattern persists at the single-cell level: a higher posterior program score corresponds to a higher predicted nearest-neighbor distortion (\textbf{Fig.~\ref{fig:f5_pert}i}).

We next test this prediction of distortion stratified along the anterior-posterior axis experimentally. We performed scRNA-seq at E9.5 on eight \texttt{Tbx5} crispant embryos, recovering roughly $9\cdot10^4$ cells (\textbf{Fig.~\ref{fig:f5_pert}j}, Section~\ref{sec:crispant-data}). These include $1{,}691$ heart tube cells (\textbf{Fig.~\ref{fig:f5_pert}k}). For each reference cell in the heart tube, we compute the distortion of its nearest neighbors in the crispant data (Section~\ref{sec:nn-distortion}), and compare this to the distortion predicted by lineage inpainting (\textbf{Fig.~\ref{fig:f5_pert}l}). The measured and predicted distortions are correlated at the single cell level, with $r=0.57$ (\textbf{Fig.~\ref{fig:f5_pert}m}). 

The model predictions do not perfectly match measured perturbation effects. Within the heart tube, the main disagreement is the atrioventricular canal, which the model predicts to be shifted but which is not. More globally, crispant embryos show a delay in neuronal development that lineage inpainting does not predict (\textbf{Supplemental Fig.~\ref{fig:supp_tbx5}}).

Taken together, these results suggest that a purely observational model of development can recover many known responses to genetic perturbation, despite resting on rather extreme assumptions. And beyond recapitulating known coarse developmental phenotypes, lineage inpainting provides further fine-grained predictions about the timing of a perturbation's effect and the specific transcriptional states it may deplete. In the case of \texttt{Tbx5}, these predictions separate its cardiac and forelimb requirements, and resolve a graded distortion along the anterior-posterior axis of the heart that we confirm by direct measurement of crispant embryos.

\section{Discussion}

Many biological processes unfold over time across communities of cells. However, directly observing changes in transcriptional state over time is typically not possible, as comprehensive measurements of gene expression are destructive. Instead we must reconstruct dynamics from snapshots. Yet the past and future state of a cell are not in general predictable from its current state alone, since cell fate also depends on a cell's history and the external signals it encounters. In this work, we have shown that lineage information provides a rich source of context for the reconstruction of multicellular dynamics.

We established a framework for modeling cell state dynamics across entire developmental trees using lineage context, rather than modeling individual cell trajectories independently. We introduced lineage inpainting, and showed that it can recover non-optimal-transport and context-dependent cell state dynamics. By applying lineage inpainting to mouse embryogenesis data, we resolved transcriptional convergence dynamics inaccessible to alternative approaches. Finally, we showed that despite being fit on observational data alone, lineage inpainting can be used to generate reasonable hypotheses about perturbation effects. In sum, this work takes a step toward what one might call a virtual embryo -- a model which can forecast the fates and origins of cells across the embryo and simulate counterfactuals.

While we have initially applied lineage inpainting to lineage-traced scRNA-seq, in principle the same modeling approach can extend naturally to include other types of cellular context. With lineage-traced spatial transcriptomic readouts \cite{Koblan2025-ac, Jones2026-gt}, one can simply add an additional bias term to the tree-structured transformer (\textbf{Box~\ref{box:attention}}), using a function of spatial distance rather than lineage distance.

Our work raises several threads for future study, in regard to both computational methodology as well as biology. While lineage inpainting is generative in the sense that it samples from a distribution over cell states on a lineage tree, it does not generate the structure of the tree itself and instead takes it as an input. Is it possible to simultaneously generate the lineage tree structure alongside the transcriptional states? Recent work on generative modeling for phylogenetic inference provides early evidence that such a model may be feasible \cite{Baron2026-ye, Zhou2023-hl}. 

While lineage inpainting infers cell states on lineage trees, it does not model mechanisms underlying state transitions directly. It does not have any explicit notion of gene regulatory networks, cell signaling, or cell-cell interactions. Is it possible to build mechanistic information into the model, as recent work has explored for cell-level models \cite{Wang2026-eg}? This may require extending models to integrate with other measurement modalities, such as ATAC-seq \cite{Li2025-jk}, spatial transcriptomics \cite{Chen2021-np}, or readouts of signaling pathway activation \cite{Chen2024-yt, Hao2026-yj}. Mechanistic models may lead to stronger out-of-distribution generalization for testing counterfactuals.

Overall, we believe this work demonstrates the importance of models which move beyond individual cell trajectories to considering multicellular systems across multiple scales. The ideas developed here are more general than embryogenesis alone, and may be useful in the study of cancer, inflammation, infection, regeneration and other multicellular dynamical phenomena.

\section*{Code and data availability}

The code necessary to reproduce the results of this paper are available at \href{https://github.com/g-kl/lineage-inpainting}{this GitHub repository}\footnote{(\texttt{g-kl/lineage-inpainting})}. The data necessary to reproduce the results of this paper, including the Tbx5 mutant scRNA-seq data, are provided on Zenodo, at \texttt{10.5281/zenodo.23750040}.

\section*{Acknowledgments}

We thank Allon Klein, Yunyi Shen, and all members of the Weissman lab for thoughtful discussions. We thank the MIT Laboratory for Information and Decision Systems computational cluster (particularly B. Jones for support), and the Kempner Institute AI cluster for computational resources. We also thank L. Wittler, J. Feidler, A. Landsberger, M. Peetz, and K. Zill of the MPI Transgene Unit for their support and expertise, as well as the MPIMG Sequencing Core.

\section*{Funding}

G.G. was supported by the Tayebati Fellowship Program. P.R. was partially supported by National Science Foundation (NSF) grant DMS-2509011. L.W.K. was supported by a Helen Hay Whitney/HHMI fellowship and the Eunice Kennedy Shriver NICHD Pathway to Independence Award (NIH K99HD118574). J.S.W. was supported by the Howard Hughes Medical Institute, National Institutes of Health (NIH) Centers of Excellence in Genomic Science grant RM1HG009490, Chan Zuckerberg Initiative grant 2024-346405 (5022), and the Whitehead Innovation Initiative. Z.D.S. was supported by National Institutes of Health (NIH) Director's New Innovator Award DP2HD108774, the Mathers Foundation, the Chen Innovation Award, and the Max Planck Society. N.F. and M.Z. gratefully acknowledge support by the Biswas Computational Biology Initiative in partnership with Milken Institute and the Kempner Institute for the Study of Natural and Artificial Intelligence at Harvard University. N.F. is also supported by T32GM135117 from the National Institute of General Medical Sciences. S.A.K is supported by the David Abraham Fellowship.

\section*{Author contributions}

Conceptualization: GG, NF, WNC, LWK, JSW, SB, PR, ZDS.
Methodology: GG, NF, WNC, LWK, JSW, SB, PR, YK, MW.
Software: GG, NF, WNC, YK.
Investigation: GG, NF, WNC, MW, YK, LWK, SAK, AZ, MZ, ZDS, SB, PR, JSW.
Visualization: GG, NF.
Supervision: JSW, PR, SB, ZDS, MZ, LWK.
Writing -- original draft: GG, NF, WNC.
Writing -- review and editing: GG, NF, WNC, MW, YK, LWK, SAK, AZ, MZ, ZDS, SB, PR, JSW.

\section*{Competing interests}

J.S.W. declares outside interests in 5AM Ventures, Amgen, nChroma Bio, DEM Biosciences, KSQ Therapeutics, Maze Therapeutics, Tenaya Therapeutics, Tessera Therapeutics, Thermo Fisher Scientific, and Xaira Therapeutics. Z.D.S. is an academic co-founder of and scientific advisor to Harbinger Health.

\section*{AI use statement}
We acknowledge the use of large language models during the preparation of this manuscript. These models were used
to assist in: (1) the writing of the code, (2) the development of the proofs, and (3) the editing of the manuscript. The authors take full responsibility for the content, scientific claims, citations, and conclusions presented in this work.

\bibliographystyle{unsrtnat}
\bibliography{paperpile,nic,william}

\begin{thebibliography}{120}
\providecommand{\natexlab}[1]{#1}
\providecommand{\url}[1]{\texttt{#1}}
\expandafter\ifx\csname urlstyle\endcsname\relax
  \providecommand{\doi}[1]{doi: #1}\else
  \providecommand{\doi}{doi: \begingroup \urlstyle{rm}\Url}\fi

\bibitem[Whitman(1878)]{Whitman1878-fb}
Charles~Otis Whitman.
\newblock {The embryology of Clepsine}.
\newblock \emph{Journal of Cell Science}, S2-18\penalty0 (71):\penalty0 215--315, 1~July 1878.
\newblock ISSN 0021-9533,1477-9137.
\newblock \doi{10.1242/jcs.s2-18.71.215}.

\bibitem[Wilson(1892)]{Wilson1892-fp}
Edmund~B Wilson.
\newblock {The cell-lineage of Nereis. \textit{A contribution to the cytogeny of the annelid body}}.
\newblock \emph{Journal of Morphology}, 6\penalty0 (3):\penalty0 361--480, July 1892.
\newblock ISSN 0362-2525,1097-4687.
\newblock \doi{10.1002/jmor.1050060301}.

\bibitem[Wagner and Klein(2020)]{Wagner2020-xr}
Daniel~E Wagner and Allon~M Klein.
\newblock {Lineage tracing meets single-cell omics: opportunities and challenges}.
\newblock \emph{Nature reviews. Genetics}, 21\penalty0 (7):\penalty0 410--427, July 2020.
\newblock ISSN 1471-0056,1471-0064.
\newblock \doi{10.1038/s41576-020-0223-2}.

\bibitem[Liberali and Schier(2024)]{Liberali2024-uy}
Prisca Liberali and Alexander~F Schier.
\newblock {The evolution of developmental biology through conceptual and technological revolutions}.
\newblock \emph{Cell}, 187\penalty0 (14):\penalty0 3461--3495, 11~July 2024.
\newblock ISSN 0092-8674,1097-4172.
\newblock \doi{10.1016/j.cell.2024.05.053}.

\bibitem[Wagner et~al.(2018)Wagner, Weinreb, Collins, Briggs, Megason, and Klein]{Wagner2018-cc}
Daniel~E Wagner, Caleb Weinreb, Zach~M Collins, James~A Briggs, Sean~G Megason, and Allon~M Klein.
\newblock {Single-cell mapping of gene expression landscapes and lineage in the zebrafish embryo}.
\newblock \emph{Science (New York, N.Y.)}, 360\penalty0 (6392):\penalty0 981--987, 1~June 2018.
\newblock ISSN 0036-8075,1095-9203.
\newblock \doi{10.1126/science.aar4362}.

\bibitem[Briggs et~al.(2018)Briggs, Weinreb, Wagner, Megason, Peshkin, Kirschner, and Klein]{Briggs2018-po}
James~A Briggs, Caleb Weinreb, Daniel~E Wagner, Sean Megason, Leonid Peshkin, Marc~W Kirschner, and Allon~M Klein.
\newblock {The dynamics of gene expression in vertebrate embryogenesis at single-cell resolution}.
\newblock \emph{Science}, 360\penalty0 (6392), 1~June 2018.
\newblock ISSN 0036-8075,1095-9203.
\newblock \doi{10.1126/science.aar5780}.

\bibitem[Packer et~al.(2019)Packer, Zhu, Huynh, Sivaramakrishnan, Preston, Dueck, Stefanik, Tan, Trapnell, Kim, Waterston, and Murray]{Packer2019-xl}
Jonathan~S Packer, Qin Zhu, Chau Huynh, Priya Sivaramakrishnan, Elicia Preston, Hannah Dueck, Derek Stefanik, Kai Tan, Cole Trapnell, Junhyong Kim, Robert~H Waterston, and John~I Murray.
\newblock {A lineage-resolved molecular atlas of C. elegans embryogenesis at single-cell resolution}.
\newblock \emph{Science (New York, N.Y.)}, 365\penalty0 (6459):\penalty0 eaax1971, 20~September 2019.
\newblock ISSN 0036-8075,1095-9203.
\newblock \doi{10.1126/science.aax1971}.

\bibitem[Qiu et~al.(2022{\natexlab{a}})Qiu, Cao, Martin, Li, Welsh, Srivatsan, Huang, Calderon, Noble, Disteche, Murray, Spielmann, Moens, Trapnell, and Shendure]{Qiu2022-xl}
Chengxiang Qiu, Junyue Cao, Beth~K Martin, Tony Li, Ian~C Welsh, Sanjay Srivatsan, Xingfan Huang, Diego Calderon, William~Stafford Noble, Christine~M Disteche, Stephen~A Murray, Malte Spielmann, Cecilia~B Moens, Cole Trapnell, and Jay Shendure.
\newblock {Systematic reconstruction of cellular trajectories across mouse embryogenesis}.
\newblock \emph{Nature Genetics}, 54\penalty0 (3):\penalty0 328--341, 14~March 2022{\natexlab{a}}.
\newblock ISSN 1061-4036,1546-1718.
\newblock \doi{10.1038/s41588-022-01018-x}.

\bibitem[Qiu et~al.(2024)Qiu, Martin, Welsh, Daza, Le, Huang, Nichols, Taylor, Fulton, O'Day, Gomes, Ilcisin, Srivatsan, Deng, Disteche, Noble, Hamazaki, Moens, Kimelman, Cao, Schier, Spielmann, Murray, Trapnell, and Shendure]{Qiu2024-cq}
Chengxiang Qiu, Beth~K Martin, Ian~C Welsh, Riza~M Daza, Truc-Mai Le, Xingfan Huang, Eva~K Nichols, Megan~L Taylor, Olivia Fulton, Diana~R O'Day, Anne~Roshella Gomes, Saskia Ilcisin, Sanjay Srivatsan, Xinxian Deng, Christine~M Disteche, William~Stafford Noble, Nobuhiko Hamazaki, Cecilia~B Moens, David Kimelman, Junyue Cao, Alexander~F Schier, Malte Spielmann, Stephen~A Murray, Cole Trapnell, and Jay Shendure.
\newblock {A single-cell time-lapse of mouse prenatal development from gastrula to birth}.
\newblock \emph{Nature}, 626\penalty0 (8001):\penalty0 1084--1093, 14~February 2024.
\newblock ISSN 0028-0836,1476-4687.
\newblock \doi{10.1038/s41586-024-07069-w}.

\bibitem[Rashid et~al.(2017)Rashid, Kotton, and Bar-Joseph]{Rashid2017-wa}
Sabrina Rashid, Darrell~N Kotton, and Ziv Bar-Joseph.
\newblock {TASIC: determining branching models from time series single cell data}.
\newblock \emph{Bioinformatics (Oxford, England)}, 33\penalty0 (16):\penalty0 2504--2512, 15~August 2017.
\newblock ISSN 1367-4803,1367-4811.
\newblock \doi{10.1093/bioinformatics/btx173}.

\bibitem[Schiebinger et~al.(2019)Schiebinger, Shu, Tabaka, Cleary, Subramanian, Solomon, Gould, Liu, Lin, Berube, Lee, Chen, Brumbaugh, Rigollet, Hochedlinger, Jaenisch, Regev, and Lander]{Schiebinger2019-kc}
Geoffrey Schiebinger, Jian Shu, Marcin Tabaka, Brian Cleary, Vidya Subramanian, Aryeh Solomon, Joshua Gould, Siyan Liu, Stacie Lin, Peter Berube, Lia Lee, Jenny Chen, Justin Brumbaugh, Philippe Rigollet, Konrad Hochedlinger, Rudolf Jaenisch, Aviv Regev, and Eric~S Lander.
\newblock {Optimal-Transport Analysis of Single-Cell Gene Expression Identifies Developmental Trajectories in Reprogramming}.
\newblock \emph{Cell}, 176\penalty0 (4):\penalty0 928--943.e22, 7~February 2019.
\newblock ISSN 0092-8674,1097-4172.
\newblock \doi{10.1016/j.cell.2019.01.006}.

\bibitem[Tong et~al.(2020)Tong, Huang, Wolf, Van~Dijk, and Krishnaswamy]{Tong2020-ix}
Alexander Tong, Jessie Huang, Guy Wolf, David Van~Dijk, and Smita Krishnaswamy.
\newblock {TrajectoryNet: A Dynamic Optimal Transport Network for Modeling Cellular Dynamics}.
\newblock In \emph{{International Conference on Machine Learning}}, pages 9526--9536. PMLR, 21~November 2020.

\bibitem[Qiu et~al.(2022{\natexlab{b}})Qiu, Zhang, Martin-Rufino, Weng, Hosseinzadeh, Yang, Pogson, Hein, Hoi Joseph~Min, Wang, Grody, Shurtleff, Yuan, Xu, Ma, Replogle, Lander, Darmanis, Bahar, Sankaran, Xing, and Weissman]{Qiu2022-ms}
Xiaojie Qiu, Yan Zhang, Jorge~D Martin-Rufino, Chen Weng, Shayan Hosseinzadeh, Dian Yang, Angela~N Pogson, Marco~Y Hein, Kyung Hoi Joseph~Min, Li~Wang, Emanuelle~I Grody, Matthew~J Shurtleff, Ruoshi Yuan, Song Xu, Yian Ma, Joseph~M Replogle, Eric~S Lander, Spyros Darmanis, Ivet Bahar, Vijay~G Sankaran, Jianhua Xing, and Jonathan~S Weissman.
\newblock {Mapping transcriptomic vector fields of single cells}.
\newblock \emph{Cell}, 185\penalty0 (4):\penalty0 690--711.e45, 17~February 2022{\natexlab{b}}.
\newblock ISSN 0092-8674,1097-4172.
\newblock \doi{10.1016/j.cell.2021.12.045}.

\bibitem[Huguet et~al.(2022)Huguet, Magruder, Tong, Fasina, Kuchroo, Wolf, and Krishnaswamy]{Huguet2022-sm}
Guillaume Huguet, Daniel~Sumner Magruder, Alexander Tong, Oluwadamilola Fasina, Manik Kuchroo, Guy Wolf, and Smita Krishnaswamy.
\newblock {Manifold interpolating optimal-transport flows for trajectory inference}.
\newblock In S~Koyejo, S~Mohamed, A~Agarwal, D~Belgrave, K~Cho, and A~Oh, editors, \emph{{Advances in Neural Information Processing Systems 35}}, volume~35, pages 29705--29718, San Diego, California, USA, 2022. Neural Information Processing Systems Foundation, Inc. (NeurIPS).
\newblock \doi{10.52202/068431-2154}.

\bibitem[Tong et~al.(2023)Tong, Malkin, Fatras, Atanackovic, Zhang, Huguet, Wolf, and Bengio]{Tong2023-mf}
Alexander Tong, Nikolay Malkin, Kilian Fatras, Lazar Atanackovic, Yanlei Zhang, Guillaume Huguet, Guy Wolf, and Yoshua Bengio.
\newblock {Simulation-free Schr{\"{o}}dinger bridges via score and flow matching}.
\newblock \emph{arXiv [cs.LG]}, 7~July 2023.

\bibitem[Bunne et~al.(2024)Bunne, Schiebinger, Krause, Regev, and Cuturi]{Bunne2024-lq}
Charlotte Bunne, Geoffrey Schiebinger, Andreas Krause, Aviv Regev, and Marco Cuturi.
\newblock {Optimal transport for single-cell and spatial omics}.
\newblock \emph{Nature Reviews. Methods Primers}, 4\penalty0 (1):\penalty0 58, 14~August 2024.
\newblock ISSN 2662-8449,2662-8449.
\newblock \doi{10.1038/s43586-024-00334-2}.

\bibitem[Shen et~al.(2024)Shen, Berlinghieri, and Broderick]{Shen2024-ey}
Yunyi Shen, Renato Berlinghieri, and Tamara Broderick.
\newblock {Multi-marginal Schr{\"{o}}dinger Bridges with Iterative Reference Refinement}.
\newblock \emph{arXiv [stat.ML]}, 12~August 2024.
\newblock \doi{10.48550/arXiv.2408.06277}.

\bibitem[Weinreb et~al.(2018)Weinreb, Wolock, Tusi, Socolovsky, and Klein]{Weinreb2018-wy}
Caleb Weinreb, Samuel Wolock, Betsabeh~K Tusi, Merav Socolovsky, and Allon~M Klein.
\newblock {Fundamental limits on dynamic inference from single-cell snapshots}.
\newblock \emph{Proceedings of the National Academy of Sciences of the United States of America}, 115\penalty0 (10):\penalty0 E2467--E2476, 6~March 2018.
\newblock ISSN 0027-8424,1091-6490.
\newblock \doi{10.1073/pnas.1714723115}.

\bibitem[Weinreb et~al.(2020)Weinreb, Rodriguez-Fraticelli, Camargo, and Klein]{Weinreb2020-uk}
Caleb Weinreb, Alejo Rodriguez-Fraticelli, Fernando~D Camargo, and Allon~M Klein.
\newblock {Lineage tracing on transcriptional landscapes links state to fate during differentiation}.
\newblock \emph{Science}, 367\penalty0 (6479), 14~February 2020.
\newblock ISSN 0036-8075,1095-9203.
\newblock \doi{10.1126/science.aaw3381}.

\bibitem[Sulston et~al.(1983)Sulston, Schierenberg, White, and Thomson]{Sulston1983-dj}
J~E Sulston, E~Schierenberg, J~G White, and J~N Thomson.
\newblock {The embryonic cell lineage of the nematode Caenorhabditis elegans}.
\newblock \emph{Developmental Biology}, 100\penalty0 (1):\penalty0 64--119, November 1983.
\newblock ISSN 0012-1606,1095-564X.
\newblock \doi{10.1016/0012-1606(83)90201-4}.

\bibitem[Olivier et~al.(2010)Olivier, Luengo-Oroz, Duloquin, Faure, Savy, Veilleux, Solinas, Débarre, Bourgine, Santos, Peyriéras, and Beaurepaire]{olivier_cell_2010}
Nicolas Olivier, Miguel~A. Luengo-Oroz, Louise Duloquin, Emmanuel Faure, Thierry Savy, Israël Veilleux, Xavier Solinas, Delphine Débarre, Paul Bourgine, Andrés Santos, Nadine Peyriéras, and Emmanuel Beaurepaire.
\newblock Cell lineage reconstruction of early zebrafish embryos using label-free nonlinear microscopy.
\newblock \emph{Science}, 329\penalty0 (5994):\penalty0 967--971, August 2010.
\newblock ISSN 1095-9203.
\newblock \doi{10.1126/science.1189428}.

\bibitem[Kwak et~al.(2013)Kwak, Park, Jung, Hwang, Kwon, and Kee]{kwak_live_2013}
Jina Kwak, Ok~Kyu Park, Yoo~Jung Jung, Byung~Joon Hwang, Seung-Hae Kwon, and Yun Kee.
\newblock Live image profiling of neural crest lineages in zebrafish transgenic lines.
\newblock \emph{Molecules and Cells}, 35\penalty0 (3):\penalty0 255--260, March 2013.
\newblock ISSN 0219-1032.
\newblock \doi{10.1007/s10059-013-0001-5}.

\bibitem[Lawson et~al.(1991)Lawson, Meneses, and Pedersen]{lawson_clonal_1991}
Kirstie~A. Lawson, Juanito~J. Meneses, and Roger~A. Pedersen.
\newblock Clonal analysis of epiblast fate during germ layer formation in the mouse embryo.
\newblock \emph{Development}, 113\penalty0 (3):\penalty0 891--911, November 1991.
\newblock ISSN 0950-1991.
\newblock \doi{10.1242/dev.113.3.891}.
\newblock URL \url{https://doi.org/10.1242/dev.113.3.891}.

\bibitem[Kretzschmar and Watt(2012)]{kretzschmar_lineage_2012}
Kai Kretzschmar and Fiona~M. Watt.
\newblock Lineage tracing.
\newblock \emph{Cell}, 148\penalty0 (1-2):\penalty0 33--45, January 2012.
\newblock ISSN 1097-4172.
\newblock \doi{10.1016/j.cell.2012.01.002}.

\bibitem[Rodriguez-Fraticelli and Parreno(2026)]{rodriguez-fraticelli_charting_2026}
Alejo~E. Rodriguez-Fraticelli and Victoria Parreno.
\newblock Charting single-cell lineages with synthetic and natural barcodes.
\newblock \emph{Nature Reviews Genetics}, February 2026.
\newblock ISSN 1471-0056, 1471-0064.
\newblock \doi{10.1038/s41576-026-00943-5}.
\newblock URL \url{https://www.nature.com/articles/s41576-026-00943-5}.

\bibitem[Frieda et~al.(2017)Frieda, Linton, Hormoz, Choi, Chow, Singer, Budde, Elowitz, and Cai]{frieda_synthetic_2017}
Kirsten~L. Frieda, James~M. Linton, Sahand Hormoz, Joonhyuk Choi, Ke-Huan~K. Chow, Zakary~S. Singer, Mark~W. Budde, Michael~B. Elowitz, and Long Cai.
\newblock Synthetic recording and in situ readout of lineage information in single cells.
\newblock \emph{Nature}, 541\penalty0 (7635):\penalty0 107--111, January 2017.
\newblock ISSN 1476-4687.
\newblock \doi{10.1038/nature20777}.
\newblock URL \url{https://www.nature.com/articles/nature20777}.

\bibitem[McKenna et~al.(2016)McKenna, Findlay, Gagnon, Horwitz, Schier, and Shendure]{mckenna_whole-organism_2016}
Aaron McKenna, Gregory~M. Findlay, James~A. Gagnon, Marshall~S. Horwitz, Alexander~F. Schier, and Jay Shendure.
\newblock Whole-organism lineage tracing by combinatorial and cumulative genome editing.
\newblock \emph{Science}, 353\penalty0 (6298):\penalty0 aaf7907, July 2016.
\newblock \doi{10.1126/science.aaf7907}.
\newblock URL \url{https://www.science.org/doi/10.1126/science.aaf7907}.

\bibitem[Alemany et~al.(2018)Alemany, Florescu, Baron, Peterson-Maduro, and van Oudenaarden]{alemany_whole-organism_2018}
Anna Alemany, Maria Florescu, Chloé~S. Baron, Josi Peterson-Maduro, and Alexander van Oudenaarden.
\newblock Whole-organism clone tracing using single-cell sequencing.
\newblock \emph{Nature}, 556\penalty0 (7699):\penalty0 108--112, April 2018.
\newblock ISSN 1476-4687.
\newblock \doi{10.1038/nature25969}.
\newblock URL \url{https://www.nature.com/articles/nature25969}.

\bibitem[Spanjaard et~al.(2018)Spanjaard, Hu, Mitic, Olivares-Chauvet, Janjuha, Ninov, and Junker]{spanjaard_simultaneous_2018}
Bastiaan Spanjaard, Bo~Hu, Nina Mitic, Pedro Olivares-Chauvet, Sharan Janjuha, Nikolay Ninov, and Jan~Philipp Junker.
\newblock Simultaneous lineage tracing and cell-type identification using {CRISPR}-{Cas9}-induced genetic scars.
\newblock \emph{Nature Biotechnology}, 36\penalty0 (5):\penalty0 469--473, June 2018.
\newblock ISSN 1546-1696.
\newblock \doi{10.1038/nbt.4124}.

\bibitem[Raj et~al.(2018)Raj, Wagner, McKenna, Pandey, Klein, Shendure, Gagnon, and Schier]{raj_simultaneous_2018}
Bushra Raj, Daniel~E. Wagner, Aaron McKenna, Shristi Pandey, Allon~M. Klein, Jay Shendure, James~A. Gagnon, and Alexander~F. Schier.
\newblock Simultaneous single-cell profiling of lineages and cell types in the vertebrate brain.
\newblock \emph{Nature Biotechnology}, 36\penalty0 (5):\penalty0 442--450, June 2018.
\newblock ISSN 1546-1696.
\newblock \doi{10.1038/nbt.4103}.

\bibitem[Kalhor et~al.(2018)Kalhor, Kalhor, Mejia, Leeper, Graveline, Mali, and Church]{kalhor_developmental_2018}
Reza Kalhor, Kian Kalhor, Leo Mejia, Kathleen Leeper, Amanda Graveline, Prashant Mali, and George~M. Church.
\newblock Developmental barcoding of whole mouse via homing {CRISPR}.
\newblock \emph{Science}, 361\penalty0 (6405):\penalty0 eaat9804, August 2018.
\newblock \doi{10.1126/science.aat9804}.
\newblock URL \url{https://www.science.org/doi/10.1126/science.aat9804}.

\bibitem[Bowling et~al.(2020)Bowling, Sritharan, Osorio, Nguyen, Cheung, Rodriguez-Fraticelli, Patel, Yuan, Fujiwara, Li, Orkin, Hormoz, and Camargo]{bowling_engineered_2020}
Sarah Bowling, Duluxan Sritharan, Fernando~G. Osorio, Maximilian Nguyen, Priscilla Cheung, Alejo Rodriguez-Fraticelli, Sachin Patel, Wei-Chien Yuan, Yuko Fujiwara, Bin~E. Li, Stuart~H. Orkin, Sahand Hormoz, and Fernando~D. Camargo.
\newblock An {Engineered} {CRISPR}-{Cas9} {Mouse} {Line} for {Simultaneous} {Readout} of {Lineage} {Histories} and {Gene} {Expression} {Profiles} in {Single} {Cells}.
\newblock \emph{Cell}, 181\penalty0 (6):\penalty0 1410--1422.e27, June 2020.
\newblock ISSN 0092-8674.
\newblock \doi{10.1016/j.cell.2020.04.048}.
\newblock URL \url{https://www.sciencedirect.com/science/article/pii/S0092867420305547}.

\bibitem[Koblan et~al.(2025)Koblan, Yost, Zheng, Colgan, Jones, Yang, Kumar, Sandhu, Schnell, Sun, Ergen, Saunders, Zhuang, Allen, Yosef, and Weissman]{Koblan2025-ac}
Luke~W Koblan, Kathryn~E Yost, Pu~Zheng, William~N Colgan, Matthew~G Jones, Dian Yang, Arhan Kumar, Jaspreet Sandhu, Alexandra Schnell, Dawei Sun, Can Ergen, Reuben~A Saunders, Xiaowei Zhuang, William~E Allen, Nir Yosef, and Jonathan~S Weissman.
\newblock {High-resolution spatial mapping of cell state and lineage dynamics in vivo with {PEtracer}}.
\newblock \emph{Science (New York, N.Y.)}, 390\penalty0 (6770):\penalty0 eadx3800, 16~October 2025.
\newblock ISSN 0036-8075,1095-9203.
\newblock \doi{10.1126/science.adx3800}.

\bibitem[Choi et~al.(2022)Choi, Chen, Minkina, Chardon, Suiter, Regalado, Domcke, Hamazaki, Lee, Martin, Daza, and Shendure]{Choi2022-yv}
Junhong Choi, Wei Chen, Anna Minkina, Florence~M Chardon, Chase~C Suiter, Samuel~G Regalado, Silvia Domcke, Nobuhiko Hamazaki, Choli Lee, Beth Martin, Riza~M Daza, and Jay Shendure.
\newblock {A time-resolved, multi-symbol molecular recorder via sequential genome editing}.
\newblock \emph{Nature}, pages 1--10, 6~July 2022.
\newblock ISSN 0028-0836.
\newblock \doi{10.1038/s41586-022-04922-8}.

\bibitem[Yu et~al.(2026)Yu, Kim, Seidel, Acosta-Clark, Martin, O'Connor, Daza, Gasperini, Nathans, Lam, Gamo, Vijay~Kumar, Kuo, Lalanne, Simeonov, Pepper, Trapnell, Gray, Choi, Qiu, and Shendure]{Yu2026-bh}
Qi~Yu, Haedong Kim, Sophie Seidel, James~F Acosta-Clark, Beth~K Martin, Kyle O'Connor, Riza~M Daza, Molly Gasperini, Jenny~F Nathans, Maggie Lam, Elena Gamo, Shruthi Vijay~Kumar, Lauren Kuo, Jean-Beno\^{\i}t Lalanne, Kamen~P Simeonov, Marion Pepper, Cole Trapnell, Jesse~M Gray, Junhong Choi, Chengxiang Qiu, and Jay Shendure.
\newblock {A DNA Typewriter records the cell lineage history of a mouse, from zygote to late organogenesis}.
\newblock \emph{Science (New York, N.Y.)}, \penalty0 (eael0508):\penalty0 eael0508, 8~October 2026.
\newblock ISSN 0036-8075,1095-9203.
\newblock \doi{10.1126/science.ael0508}.

\bibitem[Colgan et~al.(2026)Colgan, Koblan, Villagrana, Hou, Wang, Gowri, Chandler, Sep{\'u}lveda, Ciftci, Smolyar, Young, Wittler, Markoulaki, Loh, Zhuang, Yosef, Smith, and Weissman]{colgan}
W.~N. Colgan, Luke~W. Koblan, JoAnne Villagrana, Tien-Chi~Jason Hou, Minming Wang, Gokul Gowri, Whitney Chandler, Leonardo~A. Sep{\'u}lveda, Didar Ciftci, Karina Smolyar, Alicia Young, Lars Wittler, Styliani Markoulaki, Kyle~M. Loh, Xiaowei Zhuang, Nir Yosef, Zachary~D. Smith, and Jonathan~S. Weissman.
\newblock {Comprehensive Lineage Tracing Maps the Landscape of Cell Fate Decisions in Mouse Embryogenesis}.
\newblock \emph{bioRxiv}, 2026.
\newblock \doi{10.64898/2026.05.07.722278}.
\newblock URL \url{https://www.biorxiv.org/content/10.64898/2026.05.07.722278v1}.

\bibitem[Tarkowski and Wr\'{o}blewska(1967)]{Tarkowski1967-bn}
Andrzej~K Tarkowski and Joanna Wr\'{o}blewska.
\newblock {Development of blastomeres of mouse eggs isolated at the 4- and 8-cell stage}.
\newblock \emph{Development (Cambridge, England)}, 18\penalty0 (1):\penalty0 155--180, 1~August 1967.
\newblock ISSN 0950-1991,1477-9129.
\newblock \doi{10.1242/dev.18.1.155}.

\bibitem[Gardner and Cockroft(1998)]{gardner_complete_1998}
R.~L. Gardner and D.~L. Cockroft.
\newblock Complete dissipation of coherent clonal growth occurs before gastrulation in mouse epiblast.
\newblock \emph{Development}, 125\penalty0 (13):\penalty0 2397--2402, July 1998.
\newblock ISSN 0950-1991.
\newblock \doi{10.1242/dev.125.13.2397}.

\bibitem[Tarkowski(1959)]{Tarkowski1959-ve}
A~K Tarkowski.
\newblock {Experiments on the development of isolated blastomeres of mouse eggs}.
\newblock \emph{Nature}, 184\penalty0 (4695):\penalty0 1286--1287, 24~October 1959.
\newblock ISSN 0028-0836,1476-4687.
\newblock \doi{10.1038/1841286a0}.

\bibitem[Jindal et~al.(2024)Jindal, Adil, Yamaguchi, Yang, Wang, Kamimoto, Rivera-Gonzalez, and Morris]{Jindal2024-ik}
Kunal Jindal, Mohd~Tayyab Adil, Naoto Yamaguchi, Xue Yang, Helen~C Wang, Kenji Kamimoto, Guillermo~C Rivera-Gonzalez, and Samantha~A Morris.
\newblock {Single-cell lineage capture across genomic modalities with CellTag-multi reveals fate-specific gene regulatory changes}.
\newblock \emph{Nature Biotechnology}, 42\penalty0 (6):\penalty0 946--959, June 2024.
\newblock ISSN 1087-0156,1546-1696.
\newblock \doi{10.1038/s41587-023-01931-4}.

\bibitem[Forrow and Schiebinger(2021)]{Forrow2021-vs}
Aden Forrow and Geoffrey Schiebinger.
\newblock {LineageOT is a unified framework for lineage tracing and trajectory inference}.
\newblock \emph{Nature Communications}, 12\penalty0 (1):\penalty0 4940, 16~August 2021.
\newblock ISSN 2041-1723,2041-1723.
\newblock \doi{10.1038/s41467-021-25133-1}.

\bibitem[Lange et~al.(2024)Lange, Piran, Klein, Spanjaard, Klein, Junker, Theis, and Nitzan]{Lange2024-ts}
Marius Lange, Zoe Piran, Michal Klein, Bastiaan Spanjaard, Dominik Klein, Jan~Philipp Junker, Fabian~J Theis, and Mor Nitzan.
\newblock {Mapping lineage-traced cells across time points with moslin}.
\newblock \emph{Genome Biology}, 25\penalty0 (1):\penalty0 277, 21~October 2024.
\newblock ISSN 1474-7596,1474-760X.
\newblock \doi{10.1186/s13059-024-03422-4}.

\bibitem[Klein et~al.(2024)Klein, Uscidda, Theis, and Cuturi]{Klein2024-mk}
Dominik Klein, Th\'{e}o Uscidda, Fabian Theis, and Marco Cuturi.
\newblock {GENOT: Entropic (Gromov) Wasserstein Flow Matching with Applications to Single-Cell Genomics}.
\newblock In A~Globerson, L~Mackey, D~Belgrave, A~Fan, U~Paquet, J~Tomczak, and C~Zhang, editors, \emph{{Advances in Neural Information Processing Systems 37}}, volume~37, pages 103897--103944, San Diego, California, USA, 2024. Neural Information Processing Systems Foundation, Inc. (NeurIPS).
\newblock \doi{10.52202/079017-3301}.

\bibitem[Ventre et~al.(2026)Ventre, Forrow, Gadhiwala, Chakraborty, Angel, and Schiebinger]{Ventre2026-bg}
Elias Ventre, Aden Forrow, Nitya Gadhiwala, Parijat Chakraborty, Omer Angel, and Geoffrey Schiebinger.
\newblock {Trajectory inference for a branching SDE model of cell differentiation via lineage tracing}.
\newblock \emph{Journal of Mathematical Biology}, 92\penalty0 (5), 17~April 2026.
\newblock ISSN 0303-6812,1432-1416.
\newblock \doi{10.1007/s00285-026-02375-5}.

\bibitem[Vandergrift et~al.(2026)Vandergrift, White, Polyanskiy, Rigollet, and Atanackovic]{Vandergrift2026-aa}
Matthew Vandergrift, Martha White, Yury Polyanskiy, Philippe Rigollet, and Lazar Atanackovic.
\newblock {Measure-to-measure regression with transformers}.
\newblock \emph{arXiv [cs.LG]}, 27~May 2026.
\newblock \doi{10.48550/arXiv.2605.28075}.

\bibitem[Fishman et~al.(2025)Fishman, Gowri, Yin, Gootenberg, and Abudayyeh]{Fishman2025-gq}
Nic Fishman, Gokul Gowri, Peng Yin, Jonathan~S Gootenberg, and Omar Abudayyeh.
\newblock {Generative Distribution Embeddings: Lifting autoencoders to the space of distributions for multiscale representation learning}.
\newblock In \emph{{The Thirty-ninth Annual Conference on Neural Information Processing Systems}}, 29~October 2025.

\bibitem[Ho et~al.(2020)Ho, Jain, and Abbeel]{Ho2020}
Jonathan Ho, Ajay Jain, and Pieter Abbeel.
\newblock {Denoising Diffusion Probabilistic Models}.
\newblock In \emph{Advances in Neural Information Processing Systems}, 2020.
\newblock URL \url{https://proceedings.neurips.cc/paper/2020/hash/4c5bcfec8584af0d967f1ab10179ca4b-Abstract.html}.

\bibitem[Lipman et~al.(2022)Lipman, Chen, Ben-Hamu, Nickel, and Le]{Lipman2022-qc}
Yaron Lipman, Ricky T~Q Chen, Heli Ben-Hamu, Maximilian Nickel, and Matt Le.
\newblock {Flow Matching for generative modeling}.
\newblock \emph{arXiv [cs.LG]}, 6~October 2022.
\newblock \doi{10.48550/arXiv.2210.02747}.

\bibitem[Albergo and Vanden-Eijnden(2023)]{Albergo2023}
Michael~S Albergo and Eric Vanden-Eijnden.
\newblock {Building Normalizing Flows with Stochastic Interpolants}.
\newblock In \emph{International Conference on Learning Representations}, 2023.
\newblock URL \url{https://arxiv.org/abs/2209.15571}.

\bibitem[Jones et~al.(2020)Jones, Khodaverdian, Quinn, Chan, Hussmann, Wang, Xu, Weissman, and Yosef]{Jones2020-jg}
Matthew~G Jones, Alex Khodaverdian, Jeffrey~J Quinn, Michelle~M Chan, Jeffrey~A Hussmann, Robert Wang, Chenling Xu, Jonathan~S Weissman, and Nir Yosef.
\newblock {Inference of single-cell phylogenies from lineage tracing data using Cassiopeia}.
\newblock \emph{Genome biology}, 21\penalty0 (1):\penalty0 92, 14~April 2020.
\newblock ISSN 1474-7596,1474-760X.
\newblock \doi{10.1186/s13059-020-02000-8}.

\bibitem[Prillo et~al.(2025)Prillo, Ravoor, Yosef, and Song]{Prillo2025-ik}
Sebastian Prillo, Akshay Ravoor, Nir Yosef, and Yun~S Song.
\newblock {ConvexML: Fast and accurate branch length estimation under irreversible mutation models, illustrated through applications to CRISPR/Cas9-based lineage tracing}.
\newblock \emph{Systematic biology}, \penalty0 (syaf054):\penalty0 syaf054, 8~August 2025.
\newblock ISSN 1063-5157,1076-836X.
\newblock \doi{10.1093/sysbio/syaf054}.

\bibitem[Felsenstein(1985)]{Felsenstein1985-ih}
Joseph Felsenstein.
\newblock {Phylogenies and the Comparative Method}.
\newblock \emph{The American Naturalist}, 125\penalty0 (1):\penalty0 1--15, January 1985.
\newblock ISSN 0003-0147,1537-5323.
\newblock \doi{10.1086/284325}.

\bibitem[Pagel(1999)]{Pagel1999-go}
M~Pagel.
\newblock {Inferring the historical patterns of biological evolution}.
\newblock \emph{Nature}, 401\penalty0 (6756):\penalty0 877--884, 28~October 1999.
\newblock ISSN 0028-0836,1476-4687.
\newblock \doi{10.1038/44766}.

\bibitem[Harmon(2018)]{Harmon2018-kk}
Luke~J Harmon.
\newblock \emph{{Phylogenetic comparative methods: Learning from trees}}.
\newblock Createspace Independent Publishing Platform, North Charleston, SC, 23~May 2018.
\newblock ISBN 9781719584463.

\bibitem[Xu et~al.(2026)Xu, Narasimhan, Gorstein, Perea, Shen, and Sol\'{\i}s-Lemus]{Xu2026-uy}
Tianyi Xu, Shrinaath Narasimhan, Evan Gorstein, Santiago Perea, Yunyi Shen, and Claudia Sol\'{\i}s-Lemus.
\newblock {Generating the unheard: Phylogeny-guided latent generation for ancestral sound reconstruction}.
\newblock \emph{arXiv [cs.SD]}, 14~September 2026.
\newblock \doi{10.48550/arXiv.2609.15240}.

\bibitem[Teo et~al.(2025)Teo, Bastide, and An{\'e}]{Teo2025-graphicalmodels}
Benjamin Teo, Paul Bastide, and C{\'e}cile An{\'e}.
\newblock Leveraging graphical model techniques to study evolution on phylogenetic networks.
\newblock \emph{Philosophical Transactions of the Royal Society B: Biological Sciences}, 380\penalty0 (1919):\penalty0 20230310, 2025.
\newblock \doi{10.1098/rstb.2023.0310}.

\bibitem[Vaswani et~al.(2017)Vaswani, Shazeer, Parmar, Uszkoreit, Jones, Gomez, Kaiser, and Polosukhin]{Vaswani2017}
Ashish Vaswani, Noam Shazeer, Niki Parmar, Jakob Uszkoreit, Llion Jones, Aidan~N Gomez, {\L}ukasz Kaiser, and Illia Polosukhin.
\newblock {Attention Is All You Need}.
\newblock In \emph{Advances in Neural Information Processing Systems}, 2017.
\newblock URL \url{https://proceedings.neurips.cc/paper/2017/hash/3f5ee243547dee91fbd053c1c4a845aa-Abstract.html}.

\bibitem[Turner(2023)]{Turner2023-df}
Richard~E Turner.
\newblock {An Introduction to Transformers}.
\newblock \emph{arXiv [cs.LG]}, 20~April 2023.
\newblock \doi{10.48550/arXiv.2304.10557}.

\bibitem[Press et~al.(2022)Press, Smith, and Lewis]{Press2022}
Ofir Press, Noah~A Smith, and Mike Lewis.
\newblock {Train Short, Test Long: Attention with Linear Biases Enables Input Length Extrapolation}.
\newblock In \emph{International Conference on Learning Representations}, 2022.
\newblock URL \url{https://arxiv.org/abs/2108.12409}.

\bibitem[Lugmayr et~al.(2022)Lugmayr, Danelljan, Romero, Yu, Timofte, and Van~Gool]{Lugmayr2022}
Andreas Lugmayr, Martin Danelljan, Andr{\'e}s Romero, Fisher Yu, Radu Timofte, and Luc Van~Gool.
\newblock {RePaint: Inpainting using Denoising Diffusion Probabilistic Models}.
\newblock In \emph{IEEE/CVF Conference on Computer Vision and Pattern Recognition}, 2022.
\newblock URL \url{https://openaccess.thecvf.com/content/CVPR2022/html/Lugmayr_RePaint_Inpainting_Using_Denoising_Diffusion_Probabilistic_Models_CVPR_2022_paper.html}.

\bibitem[Albergo et~al.(2024)Albergo, Goldstein, Boffi, Ranganath, and Vanden-Eijnden]{Albergo2024-pq}
Michael~Samuel Albergo, Mark Goldstein, Nicholas~Matthew Boffi, Rajesh Ranganath, and Eric Vanden-Eijnden.
\newblock {Stochastic Interpolants with Data-Dependent Couplings}.
\newblock In \emph{{International Conference on Machine Learning}}, pages 921--937. PMLR, 8~July 2024.

\bibitem[Tong et~al.(2026)Tong, Li, Jiang, Yuan, Liu, He, Liang, Han, and Guo]{Tong2026-og}
Dong Tong, Anqi Li, Quanquan Jiang, Qili Yuan, Xiaozhao Liu, Ximiao He, Jiejunyi Liang, Yunyun Han, and Zheng Guo.
\newblock {Intestinal stem cells count self-renewal divisions to switch multipotency}.
\newblock \emph{Nature}, pages 1--10, 29~July 2026.
\newblock ISSN 0028-0836,1476-4687.
\newblock \doi{10.1038/s41586-026-10814-y}.

\bibitem[Tran et~al.(2024)Tran, Askary, and Elowitz]{Tran2024-eu}
Martin Tran, Amjad Askary, and Michael~B Elowitz.
\newblock {Lineage motifs as developmental modules for control of cell type proportions}.
\newblock \emph{Developmental Cell}, 59\penalty0 (6):\penalty0 812--826.e3, 25~March 2024.
\newblock ISSN 1534-5807,1878-1551.
\newblock \doi{10.1016/j.devcel.2024.01.017}.

\bibitem[Lopez et~al.(2018)Lopez, Regier, Cole, Jordan, and Yosef]{Lopez2018}
Romain Lopez, Jeffrey Regier, Michael~B Cole, Michael~I Jordan, and Nir Yosef.
\newblock {Deep generative modeling for single-cell transcriptomics}.
\newblock \emph{Nature Methods}, 15\penalty0 (12):\penalty0 1053--1058, 2018.
\newblock \doi{10.1038/s41592-018-0229-2}.

\bibitem[Stone and Stainier(2019)]{Stone2019-hb}
Oliver~A Stone and Didier Y~R Stainier.
\newblock {Paraxial mesoderm is the major source of lymphatic endothelium}.
\newblock \emph{Developmental Cell}, 50\penalty0 (2):\penalty0 247--255.e3, 22~July 2019.
\newblock ISSN 1534-5807,1878-1551.
\newblock \doi{10.1016/j.devcel.2019.04.034}.

\bibitem[Pardanaud et~al.(1996)Pardanaud, Luton, Prigent, Bourcheix, Catala, and Dieterlen-Lievre]{pardanaud_two_1996}
L.~Pardanaud, D.~Luton, M.~Prigent, L.~M. Bourcheix, M.~Catala, and F.~Dieterlen-Lievre.
\newblock Two distinct endothelial lineages in ontogeny, one of them related to hemopoiesis.
\newblock \emph{Development}, 122\penalty0 (5):\penalty0 1363--1371, May 1996.
\newblock ISSN 0950-1991.
\newblock \doi{10.1242/dev.122.5.1363}.

\bibitem[Couly et~al.(1995)Couly, Coltey, Eichmann, and Le~Douarin]{couly_angiogenic_1995}
G.~Couly, P.~Coltey, A.~Eichmann, and N.~M. Le~Douarin.
\newblock The angiogenic potentials of the cephalic mesoderm and the origin of brain and head blood vessels.
\newblock \emph{Mechanisms of Development}, 53\penalty0 (1):\penalty0 97--112, September 1995.
\newblock ISSN 0925-4773.
\newblock \doi{10.1016/0925-4773(95)00428-9}.

\bibitem[Sahai-Hernandez et~al.(2023)Sahai-Hernandez, Pouget, Eyal, Svoboda, Chacon, Grimm, Gj\o{}en, and Traver]{Sahai-Hernandez2023-uy}
Pankaj Sahai-Hernandez, Claire Pouget, Shai Eyal, Ondrej Svoboda, Jose Chacon, Lin Grimm, Tor Gj\o{}en, and David Traver.
\newblock {Dermomyotome-derived endothelial cells migrate to the dorsal aorta to support hematopoietic stem cell emergence}.
\newblock \emph{eLife}, 12\penalty0 (e58300):\penalty0 e58300, 11~September 2023.
\newblock ISSN 2050-084X.
\newblock \doi{10.7554/eLife.58300}.

\bibitem[Alexander et~al.(2009)Alexander, Nolte, and Krumlauf]{Alexander2009-hox}
Tara Alexander, Christof Nolte, and Robb Krumlauf.
\newblock {Hox Genes and Segmentation of the Hindbrain and Axial Skeleton}.
\newblock \emph{Annual Review of Cell and Developmental Biology}, 25:\penalty0 431--456, 2009.
\newblock \doi{10.1146/annurev.cellbio.042308.113423}.

\bibitem[Fang et~al.(2022)Fang, Bell, Sapirstein, Asami, Leeper, Zack, Ji, and Kalhor]{Fang2022-tg}
Weixiang Fang, Claire~M Bell, Abel Sapirstein, Soichiro Asami, Kathleen Leeper, Donald~J Zack, Hongkai Ji, and Reza Kalhor.
\newblock {Quantitative fate mapping: A general framework for analyzing progenitor state dynamics via retrospective lineage barcoding}.
\newblock \emph{Cell}, 185\penalty0 (24):\penalty0 4604--4620.e32, 23~November 2022.
\newblock ISSN 0092-8674,1097-4172.
\newblock \doi{10.1016/j.cell.2022.10.028}.

\bibitem[Lewis and Wolpert(1976)]{Lewis1976-zo}
J~H Lewis and L~Wolpert.
\newblock {The principle of non-equivalence in development}.
\newblock \emph{Journal of Theoretical Biology}, 62\penalty0 (2):\penalty0 479--490, 21~October 1976.
\newblock ISSN 0022-5193,1095-8541.
\newblock \doi{10.1016/0022-5193(76)90132-6}.

\bibitem[van~de Laar et~al.(2016)van~de Laar, Saelens, De~Prijck, Martens, Scott, Van~Isterdael, Hoffmann, Beyaert, Saeys, Lambrecht, and Guilliams]{van-de-Laar2016-yd}
Lianne van~de Laar, Wouter Saelens, Sofie De~Prijck, Liesbet Martens, Charlotte~L Scott, Gert Van~Isterdael, Eik Hoffmann, Rudi Beyaert, Yvan Saeys, Bart~N Lambrecht, and Martin Guilliams.
\newblock {Yolk sac macrophages, fetal liver, and adult monocytes can colonize an empty niche and develop into functional tissue-resident macrophages}.
\newblock \emph{Immunity}, 44\penalty0 (4):\penalty0 755--768, 19~April 2016.
\newblock ISSN 1074-7613,1097-4180.
\newblock \doi{10.1016/j.immuni.2016.02.017}.

\bibitem[Rinn et~al.(2008)Rinn, Wang, Allen, Brugmann, Mikels, Liu, Ridky, Stadler, Nusse, Helms, and Chang]{Rinn2008-zv}
John~L Rinn, Jordon~K Wang, Nancy Allen, Samantha~A Brugmann, Amanda~J Mikels, Helen Liu, Todd~W Ridky, H~Scott Stadler, Roel Nusse, Jill~A Helms, and Howard~Y Chang.
\newblock {A dermal HOX transcriptional program regulates site-specific epidermal fate}.
\newblock \emph{Genes \& Development}, 22\penalty0 (3):\penalty0 303--307, 1~February 2008.
\newblock ISSN 0890-9369,1549-5477.
\newblock \doi{10.1101/gad.1610508}.

\bibitem[Metzis et~al.(2018)Metzis, Steinhauser, Pakanavicius, Gouti, Stamataki, Ivanovitch, Watson, Rayon, Mousavy~Gharavy, Lovell-Badge, Luscombe, and Briscoe]{metzis_nervous_2018}
Vicki Metzis, Sebastian Steinhauser, Edvinas Pakanavicius, Mina Gouti, Despina Stamataki, Kenzo Ivanovitch, Thomas Watson, Teresa Rayon, S.~Neda Mousavy~Gharavy, Robin Lovell-Badge, Nicholas~M. Luscombe, and James Briscoe.
\newblock Nervous {System} {Regionalization} {Entails} {Axial} {Allocation} before {Neural} {Differentiation}.
\newblock \emph{Cell}, 175\penalty0 (4):\penalty0 1105--1118.e17, November 2018.
\newblock ISSN 1097-4172.
\newblock \doi{10.1016/j.cell.2018.09.040}.

\bibitem[Jessell(2000)]{Jessell2000-vr}
T~M Jessell.
\newblock {Neuronal specification in the spinal cord: inductive signals and transcriptional codes}.
\newblock \emph{Nature Reviews. Genetics}, 1\penalty0 (1):\penalty0 20--29, October 2000.
\newblock ISSN 1471-0056,1471-0064.
\newblock \doi{10.1038/35049541}.

\bibitem[Forlani et~al.(2003)Forlani, Lawson, and Deschamps]{forlani_acquisition_2003}
Sylvie Forlani, Kirstie~A. Lawson, and Jacqueline Deschamps.
\newblock Acquisition of {Hox} codes during gastrulation and axial elongation in the mouse embryo.
\newblock \emph{Development}, 130\penalty0 (16):\penalty0 3807--3819, August 2003.
\newblock ISSN 0950-1991.
\newblock \doi{10.1242/dev.00573}.

\bibitem[Wellik(2007)]{wellik_hox_2007}
Deneen~M. Wellik.
\newblock Hox patterning of the vertebrate axial skeleton.
\newblock \emph{Developmental Dynamics: An Official Publication of the American Association of Anatomists}, 236\penalty0 (9):\penalty0 2454--2463, September 2007.
\newblock ISSN 1058-8388.
\newblock \doi{10.1002/dvdy.21286}.

\bibitem[Sierant et~al.(2025)Sierant, Jin, Bilguvar, Morton, Dong, Jiang, Lu, Li, L\'{o}pez-Gir\'{a}ldez, Tikhonova, Zeng, Lu, Choi, Zhang, Nelson-Williams, Knight, Zhao, Cao, Mane, Sedore, Gruber, Lek, Goldmuntz, Deanfield, Giardini, Mital, Russell, Gaynor, King, Wagner, Srivastava, Shen, Bernstein, Porter, Newburger, Seidman, Roberts, Yandell, Yost, Tristani-Firouzi, Kim, Chung, Gelb, Seidman, Brueckner, and Lifton]{Sierant2025-lx}
Michael~C Sierant, Sheng~Chih Jin, Kaya Bilguvar, Sarah~U Morton, Weilai Dong, Wei Jiang, Ziyu Lu, Boyang Li, Francesc L\'{o}pez-Gir\'{a}ldez, Irina Tikhonova, Xue Zeng, Qiongshi Lu, Jungmin Choi, Junhui Zhang, Carol Nelson-Williams, James~R Knight, Hongyu Zhao, Junyue Cao, Shrikant Mane, Stanley~C Sedore, Peter~J Gruber, Monkol Lek, Elizabeth Goldmuntz, John Deanfield, Alessandro Giardini, Seema Mital, Mark Russell, J~William Gaynor, Eileen King, Michael Wagner, Deepak Srivastava, Yufeng Shen, Daniel Bernstein, George~A Porter, Jr, Jane~W Newburger, Jonathan~G Seidman, Amy~E Roberts, Mark Yandell, H~Joseph Yost, Martin Tristani-Firouzi, Richard Kim, Wendy~K Chung, Bruce~D Gelb, Christine~E Seidman, Martina Brueckner, and Richard~P Lifton.
\newblock {Genomic analysis of 11,555 probands identifies 60 dominant congenital heart disease genes}.
\newblock \emph{Proceedings of the National Academy of Sciences of the United States of America}, 122\penalty0 (13):\penalty0 e2420343122, 1~April 2025.
\newblock ISSN 0027-8424,1091-6490.
\newblock \doi{10.1073/pnas.2420343122}.

\bibitem[Fijabi et~al.(2026)Fijabi, Teague, Freeburne, Khan, Johnson, Br{\"{u}}ckner, and Heemskerk]{Fijabi2026-sq}
Abdul-Bassit Fijabi, Seth Teague, Emily Freeburne, Hina~Aftab Khan, Craig Johnson, David~B Br{\"{u}}ckner, and Idse Heemskerk.
\newblock {Interpretable decoding of cell fate from a snapshot of combinatorial signaling}.
\newblock \emph{bioRxiv.org: The Preprint Server for Biology}, page 2026.05. 17.725652, 18~May 2026.
\newblock \doi{10.64898/2026.05.17.725652}.

\bibitem[Britsch et~al.(2001)Britsch, Goerich, Riethmacher, Peirano, Rossner, Nave, Birchmeier, and Wegner]{Britsch2001}
Stefan Britsch, Derk~E. Goerich, Dieter Riethmacher, Reto~I. Peirano, Moritz Rossner, Klaus-Armin Nave, Carmen Birchmeier, and Michael Wegner.
\newblock The transcription factor {Sox10} is a key regulator of peripheral glial development.
\newblock \emph{Genes \& Development}, 15\penalty0 (1):\penalty0 66--78, 2001.
\newblock \doi{10.1101/gad.186601}.

\bibitem[Wilm et~al.(2004)Wilm, James, Schultheiss, and Hogan]{Wilm2004}
Bettina Wilm, Richard~G. James, Thomas~M. Schultheiss, and Brigid L.~M. Hogan.
\newblock The forkhead genes, {Foxc1} and {Foxc2}, regulate paraxial versus intermediate mesoderm cell fate.
\newblock \emph{Developmental Biology}, 271\penalty0 (1):\penalty0 176--189, 2004.
\newblock \doi{10.1016/j.ydbio.2004.03.034}.

\bibitem[Lee et~al.(2005)Lee, Friedman, Fulmer, and Kaestner]{Lee2005}
Catherine~S. Lee, Joshua~R. Friedman, James~T. Fulmer, and Klaus~H. Kaestner.
\newblock The initiation of liver development is dependent on {Foxa} transcription factors.
\newblock \emph{Nature}, 435\penalty0 (7044):\penalty0 944--947, 2005.
\newblock \doi{10.1038/nature03649}.

\bibitem[Eppig et~al.(2015)Eppig, Blake, Bult, Kadin, Richardson, and {Mouse Genome Database Group}]{Eppig2015-bc}
Janan~T Eppig, Judith~A Blake, Carol~J Bult, James~A Kadin, Joel~E Richardson, and {Mouse Genome Database Group}.
\newblock {The Mouse Genome Database (MGD): facilitating mouse as a model for human biology and disease}.
\newblock \emph{Nucleic Acids Research}, 43\penalty0 (Database issue):\penalty0 D726--36, January 2015.
\newblock ISSN 1362-4962,0305-1048.
\newblock \doi{10.1093/nar/gku967}.

\bibitem[Lek et~al.(2016)Lek, Karczewski, Minikel, Samocha, Banks, Fennell, O'Donnell-Luria, Ware, Hill, Cummings, Tukiainen, Birnbaum, Kosmicki, Duncan, Estrada, Zhao, Zou, Pierce-Hoffman, Berghout, Cooper, Deflaux, DePristo, Do, Flannick, Fromer, Gauthier, Goldstein, Gupta, Howrigan, Kiezun, Kurki, Moonshine, Natarajan, Orozco, Peloso, Poplin, Rivas, Ruano-Rubio, Rose, Ruderfer, Shakir, Stenson, Stevens, Thomas, Tiao, Tusie-Luna, Weisburd, Won, Yu, Altshuler, Ardissino, Boehnke, Danesh, Donnelly, Elosua, Florez, Gabriel, Getz, Glatt, Hultman, Kathiresan, Laakso, McCarroll, McCarthy, McGovern, McPherson, Neale, Palotie, Purcell, Saleheen, Scharf, Sklar, Sullivan, Tuomilehto, Tsuang, Watkins, Wilson, Daly, MacArthur, and {Exome Aggregation Consortium}]{Lek2016-wm}
Monkol Lek, Konrad~J Karczewski, Eric~V Minikel, Kaitlin~E Samocha, Eric Banks, Timothy Fennell, Anne~H O'Donnell-Luria, James~S Ware, Andrew~J Hill, Beryl~B Cummings, Taru Tukiainen, Daniel~P Birnbaum, Jack~A Kosmicki, Laramie~E Duncan, Karol Estrada, Fengmei Zhao, James Zou, Emma Pierce-Hoffman, Joanne Berghout, David~N Cooper, Nicole Deflaux, Mark DePristo, Ron Do, Jason Flannick, Menachem Fromer, Laura Gauthier, Jackie Goldstein, Namrata Gupta, Daniel Howrigan, Adam Kiezun, Mitja~I Kurki, Ami~Levy Moonshine, Pradeep Natarajan, Lorena Orozco, Gina~M Peloso, Ryan Poplin, Manuel~A Rivas, Valentin Ruano-Rubio, Samuel~A Rose, Douglas~M Ruderfer, Khalid Shakir, Peter~D Stenson, Christine Stevens, Brett~P Thomas, Grace Tiao, Maria~T Tusie-Luna, Ben Weisburd, Hong-Hee Won, Dongmei Yu, David~M Altshuler, Diego Ardissino, Michael Boehnke, John Danesh, Stacey Donnelly, Roberto Elosua, Jose~C Florez, Stacey~B Gabriel, Gad Getz, Stephen~J Glatt, Christina~M Hultman, Sekar Kathiresan, Markku Laakso, Steven McCarroll,
  Mark~I McCarthy, Dermot McGovern, Ruth McPherson, Benjamin~M Neale, Aarno Palotie, Shaun~M Purcell, Danish Saleheen, Jeremiah~M Scharf, Pamela Sklar, Patrick~F Sullivan, Jaakko Tuomilehto, Ming~T Tsuang, Hugh~C Watkins, James~G Wilson, Mark~J Daly, Daniel~G MacArthur, and {Exome Aggregation Consortium}.
\newblock {Analysis of protein-coding genetic variation in 60,706 humans}.
\newblock \emph{Nature}, 536\penalty0 (7616):\penalty0 285--291, 18~August 2016.
\newblock ISSN 0028-0836,1476-4687.
\newblock \doi{10.1038/nature19057}.

\bibitem[Karczewski et~al.(2020)Karczewski, Francioli, Tiao, Cummings, Alf{\"{o}}ldi, Wang, Collins, Laricchia, Ganna, Birnbaum, Gauthier, Brand, Solomonson, Watts, Rhodes, Singer-Berk, England, Seaby, Kosmicki, Walters, Tashman, Farjoun, Banks, Poterba, Wang, Seed, Whiffin, Chong, Samocha, Pierce-Hoffman, Zappala, O'Donnell-Luria, Minikel, Weisburd, Lek, Ware, Vittal, Armean, Bergelson, Cibulskis, Connolly, Covarrubias, Donnelly, Ferriera, Gabriel, Gentry, Gupta, Jeandet, Kaplan, Llanwarne, Munshi, Novod, Petrillo, Roazen, Ruano-Rubio, Saltzman, Schleicher, Soto, Tibbetts, Tolonen, Wade, Talkowski, {Genome Aggregation Database Consortium}, Neale, Daly, and MacArthur]{Karczewski2020-ki}
Konrad~J Karczewski, Laurent~C Francioli, Grace Tiao, Beryl~B Cummings, Jessica Alf{\"{o}}ldi, Qingbo Wang, Ryan~L Collins, Kristen~M Laricchia, Andrea Ganna, Daniel~P Birnbaum, Laura~D Gauthier, Harrison Brand, Matthew Solomonson, Nicholas~A Watts, Daniel Rhodes, Moriel Singer-Berk, Eleina~M England, Eleanor~G Seaby, Jack~A Kosmicki, Raymond~K Walters, Katherine Tashman, Yossi Farjoun, Eric Banks, Timothy Poterba, Arcturus Wang, Cotton Seed, Nicola Whiffin, Jessica~X Chong, Kaitlin~E Samocha, Emma Pierce-Hoffman, Zachary Zappala, Anne~H O'Donnell-Luria, Eric~Vallabh Minikel, Ben Weisburd, Monkol Lek, James~S Ware, Christopher Vittal, Irina~M Armean, Louis Bergelson, Kristian Cibulskis, Kristen~M Connolly, Miguel Covarrubias, Stacey Donnelly, Steven Ferriera, Stacey Gabriel, Jeff Gentry, Namrata Gupta, Thibault Jeandet, Diane Kaplan, Christopher Llanwarne, Ruchi Munshi, Sam Novod, Nikelle Petrillo, David Roazen, Valentin Ruano-Rubio, Andrea Saltzman, Molly Schleicher, Jose Soto, Kathleen Tibbetts, Charlotte
  Tolonen, Gordon Wade, Michael~E Talkowski, {Genome Aggregation Database Consortium}, Benjamin~M Neale, Mark~J Daly, and Daniel~G MacArthur.
\newblock {The mutational constraint spectrum quantified from variation in 141,456 humans}.
\newblock \emph{Nature}, 581\penalty0 (7809):\penalty0 434--443, 27~May 2020.
\newblock ISSN 0028-0836,1476-4687.
\newblock \doi{10.1038/s41586-020-2308-7}.

\bibitem[Ivanovitch et~al.(2021)Ivanovitch, Soro-Barrio, Chakravarty, Jones, Bell, Mousavy~Gharavy, Stamataki, Delile, Smith, and Briscoe]{ivanovitch_ventricular_2021}
Kenzo Ivanovitch, Pablo Soro-Barrio, Probir Chakravarty, Rebecca~A. Jones, Donald~M. Bell, S.~Neda Mousavy~Gharavy, Despina Stamataki, Julien Delile, James~C. Smith, and James Briscoe.
\newblock Ventricular, atrial, and outflow tract heart progenitors arise from spatially and molecularly distinct regions of the primitive streak.
\newblock \emph{PLoS biology}, 19\penalty0 (5):\penalty0 e3001200, May 2021.
\newblock ISSN 1545-7885.
\newblock \doi{10.1371/journal.pbio.3001200}.

\bibitem[Bruneau et~al.(2001)Bruneau, Nemer, Schmitt, Charron, Robitaille, Caron, Conner, Gessler, Nemer, Seidman, and Seidman]{Bruneau2001-yk}
B~G Bruneau, G~Nemer, J~P Schmitt, F~Charron, L~Robitaille, S~Caron, D~A Conner, M~Gessler, M~Nemer, C~E Seidman, and J~G Seidman.
\newblock {A murine model of Holt-Oram syndrome defines roles of the T-box transcription factor Tbx5 in cardiogenesis and disease}.
\newblock \emph{Cell}, 106\penalty0 (6):\penalty0 709--721, 21~September 2001.
\newblock ISSN 0092-8674,1097-4172.
\newblock \doi{10.1016/s0092-8674(01)00493-7}.

\bibitem[Buckingham et~al.(2005)Buckingham, Meilhac, and Zaffran]{buckingham_building_2005}
Margaret Buckingham, Sigolène Meilhac, and Stéphane Zaffran.
\newblock Building the mammalian heart from two sources of myocardial cells.
\newblock \emph{Nature Reviews. Genetics}, 6\penalty0 (11):\penalty0 826--835, November 2005.
\newblock ISSN 1471-0056.
\newblock \doi{10.1038/nrg1710}.

\bibitem[Meilhac and Buckingham(2018)]{meilhac_deployment_2018}
Sigolène~M. Meilhac and Margaret~E. Buckingham.
\newblock The deployment of cell lineages that form the mammalian heart.
\newblock \emph{Nature Reviews. Cardiology}, 15\penalty0 (11):\penalty0 705--724, November 2018.
\newblock ISSN 1759-5010.
\newblock \doi{10.1038/s41569-018-0086-9}.

\bibitem[Jones et~al.(2026)Jones, Sun, Min, Colgan, Wang, T{\"{o}}r{\"{o}}k, Ribeiro, Xue, Cardoso, Rong, Tian, Weir, Chen, Koblan, Yost, Mathey-Andrews, D'Souza, Russell, Stickels, Balderrama, Rideout, Dai, Marrero, Kumar, Saqi, Herzberg, Izar, Chang, Lee, Jacks, Chen, Weissman, Yosef, and Yang]{Jones2026-gt}
Matthew~G Jones, Dawei Sun, Kyung Hoi~Joseph Min, William~N Colgan, Haoyu Wang, Tivadar T{\"{o}}r{\"{o}}k, Juliano Ribeiro, Jessie Xue, Erik~C Cardoso, Yao Rong, Luyi Tian, Jackson~A Weir, Victor~Z Chen, Luke~W Koblan, Kathryn~E Yost, Nicolas Mathey-Andrews, Edridge D'Souza, Andrew J~C Russell, Robert~R Stickels, Karol~S Balderrama, William~M Rideout, 3rd, Min Dai, Giovanni Marrero, Vipin Kumar, Anjali Saqi, Benjamin Herzberg, Benjamin Izar, Howard~Y Chang, Joo-Hyeon Lee, Tyler Jacks, Fei Chen, Jonathan~S Weissman, Nir Yosef, and Dian Yang.
\newblock {Spatiotemporal lineage tracing reveals the dynamic spatial architecture of tumor growth and metastasis}.
\newblock \emph{Nature Genetics}, 58\penalty0 (9):\penalty0 2398--2410, 2~September 2026.
\newblock ISSN 1061-4036,1546-1718.
\newblock \doi{10.1038/s41588-026-02739-z}.

\bibitem[Baron et~al.(2026)Baron, Amin, and Wilson]{Baron2026-ye}
Ethan Baron, Alan~N Amin, and Andrew~Gordon Wilson.
\newblock {Deep generative models for phylogenetic inference with complex evolutionary processes}.
\newblock \emph{arXiv [cs.LG]}, 13~October 2026.

\bibitem[Zhou et~al.(2023)Zhou, Yan, Layne, Whitammer, Zhang, Jain, Blanchette, and Bengio]{Zhou2023-hl}
Ming~Yang Zhou, Zichao Yan, Elliot Layne, Esmeralda~S Whitammer, Dinghuai Zhang, Moksh Jain, Mathieu Blanchette, and Yoshua Bengio.
\newblock {PhyloGFN: Phylogenetic inference with generative flow networks}.
\newblock In \emph{{The Twelfth International Conference on Learning Representations}}, 13~October 2023.

\bibitem[Wang et~al.(2026)Wang, Hu, Weiler, Mayes, Lange, Fountain, Haug, Wang, Xue, Sauka-Spengler, and Theis]{Wang2026-eg}
Weixu Wang, Zhiyuan Hu, Philipp Weiler, Sarah Mayes, Marius Lange, Daniel~M Fountain, Julianna~O Haug, Jingye Wang, Zhengyuan Xue, Tatjana Sauka-Spengler, and Fabian~J Theis.
\newblock {RegVelo: Gene-regulatory-informed dynamics of single cells}.
\newblock \emph{Cell}, 189\penalty0 (12):\penalty0 3773--3800.e44, 11~June 2026.
\newblock ISSN 0092-8674,1097-4172.
\newblock \doi{10.1016/j.cell.2026.04.022}.

\bibitem[Li et~al.(2025)Li, Jiang, Xu, Wu, Liu, Chen, Liao, Li, Wang, Guo, Zhang, Wen, Kee, and Tang]{Li2025-jk}
Mengyao Li, Zhenhuan Jiang, Xueqiang Xu, Xinglong Wu, Yun Liu, Kexuan Chen, Yuhan Liao, Wen Li, Xiao Wang, Yuqing Guo, Bo~Zhang, Lu~Wen, Kehkooi Kee, and Fuchou Tang.
\newblock {Chromatin accessibility landscape of mouse early embryos revealed by single-cell NanoATAC-{seq2}}.
\newblock \emph{Science (New York, N.Y.)}, 387\penalty0 (6741):\penalty0 eadp4319, 28~March 2025.
\newblock ISSN 0036-8075,1095-9203.
\newblock \doi{10.1126/science.adp4319}.

\bibitem[Chen et~al.(2021)Chen, Liao, Cheng, Ma, Wu, Lai, Qiu, Yang, Li, Xu, Hao, Wang, Lu, Chen, Liu, Huang, Lin, Li, Hong, Fu, Jiang, Peng, Liu, Shen, Liu, Li, Yuan, Zheng, Wang, Wang, Xiang, Han, Qin, Guo, C\'{a}noves, Thiery, Wu, Zhao, Li, Kuang, Hui, Wang, Lu, Wang, Liu, Ni, Zhang, Mu, Yin, Yang, Lisby, Cornall, Mulder, Uhlen, Esteban, Li, Liu, Xu, and Wang]{Chen2021-np}
Ao~Chen, Sha Liao, Mengnan Cheng, Kailong Ma, Liang Wu, Yiwei Lai, Xiaojie Qiu, Jin Yang, Wenjiao Li, Jiangshan Xu, Shijie Hao, Xin Wang, Huifang Lu, Xi~Chen, Xing Liu, Xin Huang, Feng Lin, Zhao Li, Yan Hong, Defeng Fu, Yujia Jiang, Jian Peng, Shuai Liu, Mengzhe Shen, Chuanyu Liu, Quanshui Li, Yue Yuan, Huiwen Zheng, Zhifeng Wang, Zhaohui Wang, Haitao Xiang, Lei Han, Baoming Qin, Pengcheng Guo, Pura Mu\~{n}oz C\'{a}noves, Jean~Paul Thiery, Qingfeng Wu, Fuxiang Zhao, Mei Li, Haoyan Kuang, Junhou Hui, Ou~Wang, Haorong Lu, Bo~Wang, Shiping Liu, Ming Ni, Wenwei Zhang, Feng Mu, Ye~Yin, Huanming Yang, Michael Lisby, Richard~J Cornall, Jan Mulder, Mathias Uhlen, Miguel~A Esteban, Yuxiang Li, Longqi Liu, Xun Xu, and Jian Wang.
\newblock {Spatiotemporal transcriptomic atlas of mouse organogenesis using DNA nanoball patterned arrays}.
\newblock \emph{bioRxiv}, page 2021.01.17.427004, 22~October 2021.
\newblock \doi{10.1101/2021.01.17.427004}.

\bibitem[Chen et~al.(2024)Chen, Choi, Li, Nathans, Martin, Yang, Hamazaki, Qiu, Lalanne, Regalado, Kim, Agarwal, Nichols, Leith, Lee, and Shendure]{Chen2024-yt}
Wei Chen, Junhong Choi, Xiaoyi Li, Jenny~F Nathans, Beth Martin, Wei Yang, Nobuhiko Hamazaki, Chengxiang Qiu, Jean-Beno\^{\i}t Lalanne, Samuel Regalado, Haedong Kim, Vikram Agarwal, Eva Nichols, Anh Leith, Choli Lee, and Jay Shendure.
\newblock {Symbolic recording of signalling and cis-regulatory element activity to {DNA}}.
\newblock \emph{Nature}, 632\penalty0 (8027):\penalty0 1073--1081, 17~August 2024.
\newblock ISSN 0028-0836,1476-4687.
\newblock \doi{10.1038/s41586-024-07706-4}.

\bibitem[Hao et~al.(2026)Hao, Liu, Barrett, Samadi, Zarezadeh, McGrath, Zernicka-Goetz, and Askary]{Hao2026-yj}
Kai Hao, Yunzheng Liu, Mykel Barrett, Zainalabedin Samadi, Amirhossein Zarezadeh, Yuka McGrath, Magdalena Zernicka-Goetz, and Amjad Askary.
\newblock {Genetic recording and in situ readout of single-cell signaling memory}.
\newblock \emph{Nature Chemical Biology}, pages 1--12, 11~March 2026.
\newblock ISSN 1552-4450,1552-4469.
\newblock \doi{10.1038/s41589-026-02168-3}.

\bibitem[Sz{\'e}kely and Rizzo(2013)]{Szekely2013}
G{\'a}bor~J Sz{\'e}kely and Maria~L Rizzo.
\newblock {Energy statistics: A class of statistics based on distances}.
\newblock \emph{Journal of Statistical Planning and Inference}, 143\penalty0 (8):\penalty0 1249--1272, 2013.
\newblock \doi{10.1016/j.jspi.2013.03.018}.
\newblock URL \url{https://pages.stat.wisc.edu/~wahba/stat860public/pdf4/Energy/JSPI5102.pdf}.

\bibitem[Song et~al.(2021{\natexlab{a}})Song, Sohl-Dickstein, Kingma, Kumar, Ermon, and Poole]{Song2021}
Yang Song, Jascha Sohl-Dickstein, Diederik~P Kingma, Abhishek Kumar, Stefano Ermon, and Ben Poole.
\newblock {Score-Based Generative Modeling through Stochastic Differential Equations}.
\newblock In \emph{International Conference on Learning Representations}, 2021{\natexlab{a}}.
\newblock URL \url{https://arxiv.org/abs/2011.13456}.

\bibitem[Ying et~al.(2021)Ying, Cai, Luo, Zheng, Ke, He, Shen, and Liu]{Ying2021-graphormer}
Chengxuan Ying, Tianle Cai, Shengjie Luo, Shuxin Zheng, Guolin Ke, Di~He, Yanming Shen, and Tie-Yan Liu.
\newblock {Do Transformers Really Perform Bad for Graph Representation?}
\newblock In \emph{Advances in Neural Information Processing Systems}, 2021.
\newblock URL \url{https://proceedings.neurips.cc/paper/2021/hash/f1c1592588411002af340cbaedd6fc33-Abstract.html}.

\bibitem[Li et~al.(2015)Li, Swersky, and Zemel]{Li2015}
Yujia Li, Kevin Swersky, and Richard Zemel.
\newblock {Generative Moment Matching Networks}.
\newblock In \emph{Proceedings of the 32nd International Conference on Machine Learning (ICML)}, volume~37 of \emph{PMLR}, pages 1718--1727, 2015.
\newblock URL \url{https://proceedings.mlr.press/v37/li15.pdf}.

\bibitem[Dziugaite et~al.(2015)Dziugaite, Roy, and Ghahramani]{Dziugaite2015}
Gintare~Karolina Dziugaite, Daniel~M Roy, and Zoubin Ghahramani.
\newblock {Training generative neural networks via Maximum Mean Discrepancy optimization}.
\newblock In \emph{Uncertainty in Artificial Intelligence (UAI)}, pages 258--267, 2015.
\newblock URL \url{https://auai.org/uai2015/proceedings/papers/230.pdf}.

\bibitem[Sohl-Dickstein et~al.(2015)Sohl-Dickstein, Weiss, Maheswaranathan, and Ganguli]{SohlDickstein2015}
Jascha Sohl-Dickstein, Eric~A Weiss, Niru Maheswaranathan, and Surya Ganguli.
\newblock {Deep Unsupervised Learning using Nonequilibrium Thermodynamics}.
\newblock In \emph{International Conference on Machine Learning}, 2015.
\newblock URL \url{https://proceedings.mlr.press/v37/sohl-dickstein15.html}.

\bibitem[Song et~al.(2021{\natexlab{b}})Song, Meng, and Ermon]{Song2020ddim}
Jiaming Song, Chenlin Meng, and Stefano Ermon.
\newblock {Denoising Diffusion Implicit Models}.
\newblock In \emph{International Conference on Learning Representations}, 2021{\natexlab{b}}.
\newblock URL \url{https://arxiv.org/abs/2010.02502}.

\bibitem[Li and He(2025)]{x1prediction}
Tianhong Li and Kaiming He.
\newblock {Back to Basics: Let Denoising Generative Models Denoise}.
\newblock \emph{arXiv preprint arXiv:2511.13720}, 2025.
\newblock URL \url{https://arxiv.org/abs/2511.13720}.

\bibitem[Tashiro et~al.(2021)Tashiro, Song, Song, and Ermon]{Tashiro2021-ko}
Yusuke Tashiro, Jiaming Song, Yang Song, and Stefano Ermon.
\newblock {CSDI: Conditional Score-based Diffusion Models for Probabilistic Time Series Imputation}.
\newblock \emph{Advances in Neural Information Processing Systems}, 34:\penalty0 24804--24816, 6~December 2021.

\bibitem[Feydy et~al.(2018)Feydy, S\'{e}journ\'{e}, Vialard, Amari, Trouv\'{e}, and Peyr\'{e}]{Feydy2018-ow}
Jean Feydy, Thibault S\'{e}journ\'{e}, Fran\c{c}ois-Xavier Vialard, Shun-Ichi Amari, Alain Trouv\'{e}, and Gabriel Peyr\'{e}.
\newblock {Interpolating between Optimal Transport and MMD using Sinkhorn divergences}.
\newblock \emph{arXiv [math.ST]}, 18~October 2018.
\newblock \doi{10.48550/arXiv.1810.08278}.

\bibitem[T{\"{u}}mpel et~al.(2009)T{\"{u}}mpel, Wiedemann, and Krumlauf]{Tumpel2009-cw}
Stefan T{\"{u}}mpel, Leanne~M Wiedemann, and Robb Krumlauf.
\newblock {Hox genes and segmentation of the vertebrate hindbrain}.
\newblock \emph{Current Topics in Developmental Biology}, 88:\penalty0 103--137, 2009.
\newblock ISSN 0070-2153,1557-8933.
\newblock \doi{10.1016/S0070-2153(09)88004-6}.

\bibitem[Parrish et~al.(2009)Parrish, Nolte, and Krumlauf]{Parrish2009-yn}
M~Parrish, C~Nolte, and R~Krumlauf.
\newblock {Hox genes expression}.
\newblock In \emph{{Developmental Neurobiology}}. Academic Press, 2009.
\newblock ISBN 9780123751676.

\bibitem[Minoux and Rijli(2010)]{Minoux2010-ji}
M~Minoux and F~M Rijli.
\newblock {Molecular mechanisms of cranial neural crest cell migration and patterning in craniofacial development}.
\newblock \emph{Development}, 137\penalty0 (16):\penalty0 2605--2621, 2010.

\bibitem[Philippidou and Dasen(2013)]{Philippidou2013-sq}
P~Philippidou and J~S Dasen.
\newblock {Hox genes: choreographers in neural development, architects of circuit organization}.
\newblock \emph{Neuron}, 80\penalty0 (1):\penalty0 12--34, 2013.

\bibitem[Grosswendt et~al.(2020)Grosswendt, Kretzmer, Smith, Kumar, Hetzel, Wittler, Klages, Timmermann, Mukherji, and Meissner]{Grosswendt2020-se}
Stefanie Grosswendt, Helene Kretzmer, Zachary~D Smith, Abhishek~Sampath Kumar, Sara Hetzel, Lars Wittler, Sven Klages, Bernd Timmermann, Shankar Mukherji, and Alexander Meissner.
\newblock {Epigenetic regulator function through mouse gastrulation}.
\newblock \emph{Nature}, 584\penalty0 (7819):\penalty0 102--108, 29~August 2020.
\newblock ISSN 0028-0836,1476-4687.
\newblock \doi{10.1038/s41586-020-2552-x}.

\bibitem[Kijima et~al.(2026)Kijima, Villagrana, Wang, Tornisiello, Dias, Hou, Tse, Wang, Kretzmer, Sumigray, Akta\c{s}, and Smith]{Kijima2026-fd}
Yusuke Kijima, JoAnne Villagrana, Minming Wang, Rosaria Tornisiello, Nicolas Dias, Jason Tien-Chi Hou, Kevin Tse, Jennifer Wang, Helene Kretzmer, Kaelyn Sumigray, Tu\u{g}\c{c}e Akta\c{s}, and Zachary~D Smith.
\newblock {Epigenetic regulators partition the genome by evolutionary and developmental constraint}.
\newblock \emph{bioRxiv}, page 2026.10.01.755350, 2~October 2026.
\newblock \doi{10.64898/2026.10.01.755350}.

\bibitem[Dias et~al.(2026)Dias, Lewinsohn, Colgan, Wang, Kijima, Villagrana, Hou, Gowri, Chau, Akta\c{s}, Sumigray, Weissman, Koblan, Wagner, and Smith]{Dias2026-vh}
Nicolas Dias, Daniel~P Lewinsohn, William~N Colgan, Minming Wang, Yusuke Kijima, JoAnne Villagrana, Tien-Chi~Jason Hou, Gokul Gowri, Adelina Chau, Tu\u{g}\c{c}e Akta\c{s}, Kaelyn Sumigray, Jonathan~S Weissman, Luke~W Koblan, Allon Wagner, and Zachary~D Smith.
\newblock {Folate deficiency disrupts key metabolic transitions within the developing neural ectoderm}.
\newblock \emph{bioRxiv: The Preprint Server for Biology}, page 2026.09.18.752622, 22~September 2026.
\newblock \doi{10.64898/2026.09.18.752622}.

\bibitem[Henrique et~al.(2015)Henrique, Abranches, Verrier, and Storey]{Henrique2015-hb}
Domingos Henrique, Elsa Abranches, Laure Verrier, and Kate~G Storey.
\newblock {Neuromesodermal progenitors and the making of the spinal cord}.
\newblock \emph{Development (Cambridge, England)}, 142\penalty0 (17):\penalty0 2864--2875, 1~September 2015.
\newblock ISSN 0950-1991,1477-9129.
\newblock \doi{10.1242/dev.119768}.

\bibitem[Chuang and Jegelka(2022)]{chuang2022tree}
Ching-Yao Chuang and Stefanie Jegelka.
\newblock {Tree Mover's Distance: Bridging Graph Metrics and Stability of Graph Neural Networks}.
\newblock In \emph{Advances in Neural Information Processing Systems}, volume~35, 2022.
\newblock URL \url{https://proceedings.neurips.cc/paper_files/paper/2022/hash/139ae969f49abd9a113981c1f7fce5ce-Abstract-Conference.html}.

\bibitem[O'Donnell(2014)]{o2014analysis}
Ryan O'Donnell.
\newblock \emph{Analysis of boolean functions}, volume~2.
\newblock Cambridge University Press, Cambridge, 2014.

\bibitem[Sejdinovic et~al.(2013)Sejdinovic, Sriperumbudur, Gretton, and Fukumizu]{sejdinovic2013equivalence}
Dino Sejdinovic, Bharath Sriperumbudur, Arthur Gretton, and Kenji Fukumizu.
\newblock Equivalence of distance-based and rkhs-based statistics in hypothesis testing.
\newblock \emph{The annals of statistics}, pages 2263--2291, 2013.

\bibitem[Gretton et~al.(2012)Gretton, Borgwardt, Rasch, Sch{\"o}lkopf, and Smola]{gretton2012kernel}
Arthur Gretton, Karsten~M Borgwardt, Malte~J Rasch, Bernhard Sch{\"o}lkopf, and Alexander Smola.
\newblock A kernel two-sample test.
\newblock \emph{The journal of machine learning research}, 13:\penalty0 723--773, 2012.

\bibitem[Wang(1979)]{wang1979dependent}
YH~Wang.
\newblock Dependent random variables with independent subsets.
\newblock \emph{The American Mathematical Monthly}, 86\penalty0 (4):\penalty0 290--292, 1979.

\end{thebibliography}

\section{Methods}

\newcommand{\R}{\mathbb{R}}
\newcommand{\calT}{\mathcal{T}}
\newcommand{\calX}{\mathcal{X}}

\subsection{Implementing lineage inpainting}

In this section, we outline in more detail the precise model fitting, model architecture, and sampling procedures we implement in lineage inpainting models. We first introduce the general impute-and-sample framework, then the details of our particular implementations of the imputation and sampling step. 

We describe three technical contributions: (1) a tree-aware attention mechanism inspired by phylogenetic comparative methods \cite{Felsenstein1985-ih}, (2) a divergence between distributions over complete lineage trees which we use as a loss function for imputation models \cite{Szekely2013}, and (3) a stochastic interpolant model for sampling from distributions over lineage trees \cite{Albergo2023}.

\subsubsection{The impute-and-sample framework}
\label{supp:framework}
Our goal is to fit a model which, given a lineage tree with a single snapshot of states observed, can sample from the distribution of possible complete trees where all states, past and future, are known (recall \textbf{Fig.~\ref{fig:schematic}a}). The data available are incomplete, in the sense that we only ever observe the states at the leaves of lineage trees (\textbf{Fig.~\ref{fig:schematic}b}). 

Many approaches to fitting such a model can be broken into two components which we refer to as imputation and sampling (\textbf{Fig.~\ref{fig:schematic}c}). The central idea is that the sampler is a generative model which must be fit using complete lineage trees where all node states are observed. Because our direct observations have states only at the leaves, the sampler must instead be fit on trees where the internal node states are somehow imputed by a separate procedure. 

Existing methods can be thought of under the umbrella of this impute-and-sample framework. In GENOT \cite{Klein2024-mk}, for example, the imputer is an entropic optimal-transport coupling between marginals at different timepoints, which defines a distribution over possible ancestor--descendant pairings; the sampler is a cell-level conditional flow-matching model trained to reproduce the cell--cell relationships encoded by this coupling.

The impute-and-sample framework is useful for laying out the design space of models. One can imagine many different imputation methods: discrete optimal transport methods, deterministic ancestral reconstructions (such as Felsenstein's method \cite{Felsenstein1985-ih}) or a learnable approach as we will describe below in Section~\ref{sec:coupling}. Similarly, one can imagine many different samplers: a large family of generative models for conditional sampling have been explored in recent years \cite{Tong2023-mf,Albergo2023, Song2021, Ho2020}. One could apply these directly and sample states independently, or one could try to sample states across an entire tree jointly, as we will describe below in Sec.~\ref{sec:diffusion}. In Supp.~\ref{supp:grid} we explore many possible pairings of imputers and samplers, and choose our implementation accordingly.

\subsubsection{A tree-aware transformer architecture for lineage trees}
\label{sec:architecture}

Both the imputation and sampling models we introduce below are built from a shared architectural backbone for processing lineage trees simultaneously. It is an extension of the transformer architecture \cite{Vaswani2017, Turner2023-df}, which maps a set of input tokens to a set of output tokens. Here, each node of a lineage tree corresponds to a token. Beyond this redefinition of the token, we make two core modifications: (1) a biased attention layer which encodes the structure of the lineage tree and (2) a tree-dependent additive displacement for each output token, which provides an inductive bias inspired by classical ideas in phylogenetics about modeling the evolution of states on a tree \cite{Felsenstein1985-ih}.

\paragraph{Encoding tree structure in a biased attention mechanism} Transformers are built around the attention mechanism, which, without an added positional embedding, is permutation equivariant. If we were to directly use a transformer (without a positional embedding) to operate on lineage trees, with each cell corresponding to a token, the structure of the lineage tree would be ignored. Instead, the cells would be processed as a ``bag of cells'' with no internal relational structure.

To encode the structure of the lineage tree in the transformer architecture, we instead use a biased attention layer (\textbf{Box~\ref{box:attention}}) where attention weights receive an additive bias corresponding to a scaled version of the patristic distance.

\paragraph{An additive displacement for output tokens} Rather than using the transformer to directly predict raw output tokens, we instead recursively average child states for ancestral inference, or copy the value of the observed ancestor for descendant inference. This allows the transformer to predict a residual on top of this displacement state for each token, encoding an inductive bias into the architecture. Our choice of this displacement draws inspiration from classical approaches in phylogenetics. Felsenstein's phylogenetic comparative method \cite{Felsenstein1985-ih} assumes that traits evolve along phylogenetic trees following a Brownian motion. That is, across every edge of the tree, the change in the trait is sampled from a Gaussian distribution with variance proportional to the length of the edge (the amount of time elapsed).%

Given the observed states at the leaves of the tree, the expected state of any unobserved node under the Brownian model is a weighted average of its observed relatives. Conditioning on its descendant subtree, the posterior mean of an ancestral node is a variance-weighted mean of its descendant states. This is Felsenstein's ancestral reconstruction, computed by a bottom-up recursion from the leaves of the tree~\cite{Felsenstein1985-ih}. We compute a simpler, unweighted mean at every node from the anchor states and the topology and take it as a per-node displacement $b_v$, a first guess at each cell before any learning. The network then predicts only each node's deviation from $b_v$, a residual far easier to learn than the state outright, and we add $b_v$ back at the output. Section~\ref{supp:architecture} gives the exact recursion.
 
One forward pass runs as follows. Each node's current state is linearly projected to a hidden token, then we add learned embeddings of the node's lineage time \(t_v\) and a marker distinguishing a clamped observed anchor from a node to be filled in. The tokens pass through a stack of attention layers, where the patristic-distance bias enters. As in graph transformers that encode pairwise or structural information inside attention~\cite{Ying2021-graphormer}, we add the patristic distance to each pair's attention logit before the softmax following ALiBi~\cite{Press2022}, scaled by a weight learned separately for each layer and head (Box~\ref{box:attention}). In practice, the model develops large negative weights in early layers that concentrate attention on close relatives, and progressively less negative weights in later layers with positive weights only in the last layers. A final head maps each node back to state space, where we add \(b_v\) back to the output.

The two models described below share this architectural backbone with one difference: the sampling model \(S_\phi\) takes one extra input, the stochastic interpolant time \(s\in[0,1]\) (Section~\ref{sec:diffusion}), added to each token through a learned embedding, whereas the imputation model \(I_\theta\) takes none, as it is a single step model. Table~\ref{tab:architecture} (Section~\ref{supp:architecture}) specifies the architecture used for the mouse-embryo analyses.

\subsubsection{Learning an imputation model}
\label{sec:coupling}

The imputation model $I_\theta$ uses the tree-aware transformer introduced above. The data consist of several lineage trees at different developmental times, with a collection of multiple trees observed at each time. For expository purposes, consider two collections, observed at capture times $c, c'$ such that $c<c'$. Let $\mathcal I$ and $\mathcal I'$ denote the indices of the trees observed at time $c$ and $c'$ respectively. For any $i\in \mathcal I\cup \mathcal I'$, let
\[
\calT_i = (V_i,E_i,t_i)
\]
denote the topology of the \(i\)th lineage tree, where \(t_{i,v}\) gives the developmental time of node \(v\), and let \(A_i\subseteq V_i\) denote the observed snapshot, with measured cell states \(x_{i,A_i}\). We use
\[
\mathcal X_i=(\mathcal T_i,x_i)
\]
to denote a state-decorated tree: the topology \(\calT_i\) along with the states $x_{i,v}$ at each node.

The imputation model takes the topology and the observed snapshot as input, with standard Gaussian noise on the unobserved nodes, and samples a completed tree
\[
\widehat{\calX}_i = (\calT_i,\widehat x_i) =I_\theta(\calT_i,A_i,x_{i,A_i}),
\]
where \(\widehat x_{i,v}=x_{i,v}\) for observed nodes and \(\widehat x_{i,v}\) is imputed for each unobserved ancestral node. We first use the imputation model to complete each tree in the two collections.

The challenge is that we never observe the states of both ancestors and descendants within a single tree. If we did we could just use that to evaluate the resulting filled in trees. Instead, we observe only the leaves of the trees at their respective observation times. To address this missing-data problem, we train the imputer indirectly, by requiring the completed trees to be distributionally consistent with one another. The key question is how to define and measure distributional consistency across these collections of lineage trees in practice. 

The most basic requirement for distributional consistency is matching time-marginals: at the same developmental time $t$, the distribution of cell states should agree across collections observed at different times \cite{Li2015,Dziugaite2015}. For each collection, we pool the states of all cells alive at time $t$ across its completed trees, forming a single empirical distribution of cell states. We then compare the two resulting distributions (one for each collection).

We compare empirical distributions using the energy distance \cite{Szekely2013}. For two sets of vectors $Z=(z_a)_{a=1}^m$ and $Z'=(z'_b)_{b=1}^n$, the empirical energy
distance is
\[
\widehat D_E(Z,Z')
=
\frac{2}{mn}
\sum_{a=1}^m
\sum_{b=1}^n
\lVert z_a-z'_b\rVert
-
\frac{1}{m^2}
\sum_{a=1}^m
\sum_{a'=1}^m
\lVert z_a-z_{a'}\rVert
-
\frac{1}{n^2}
\sum_{b=1}^n
\sum_{b'=1}^n
\lVert z'_b-z'_{b'}\rVert .
\]
For this marginal component, the set of vectors $Z$ is just the set of states at time $t$ pooled across all trees in $\mathcal I$, and $Z'$ the corresponding states pooled across all trees in $\mathcal I'$. Intuitively, the energy distance compares the average distance between states drawn from the two collections with the average distance between states within each collection. When the two sets of states are the same the distance will be zero, and as the between-collection differences exceed the within-collection heterogeneity the distance becomes large.

Matching only the temporal marginals (the population of cell states at each time) is not enough to learn developmental dynamics \cite{Weinreb2018-wy}. The marginal distribution constrains which states are present at each time, but not how those states are connected on the lineage tree. In fact, under a marginal-only objective, the imputer could learn an entirely uninformative coupling, and simply randomly sample the internal nodes independently from the cell-state distribution at each developmental time. This would perfectly match every time marginal while learning no meaningful ancestor--descendant coupling. 

Lineage tracing provides us with more constraints. We can exploit lineage information by requiring the two collections to agree not only in their time marginals, but also in the distributions of motifs across their lineage trees. In practice, we use an imputation loss which aims to match three types of distributions: (1) marginals at individual developmental times, (2) lineage paths across multiple times, and (3) common-ancestor triples relating descendants of a shared ancestor.

Our second loss component enforces that the distributions of individual cell trajectories agree across the two collections. For each completed tree, we sample a lineage path, recording the sequence of cell states encountered at the shared comparison times. We concatenate these states into a single vector---for example, \((x_{t_0},x_{t_1},x_{t_2})\). Pooling these paths across all trees in each collection generates two empirical distributions of paths, which we compare using the same energy distance. An imputer that samples ancestors independently of their descendants can have the correct marginals while failing to preserve dependence along the lineage. With more than two observation times, the same construction matches the joint distribution of states along the lineage.

Path matching constrains the ``depth'' relationship between states along a trajectory, but it omits the ``breadth'' relationship between descendants. This ``breadth'' dimension can also carry dependence. For example, a cell's descendants may be conditionally dependent given the ancestral state, so that knowing both the ancestor and one descendant is informative about other descendants (e.g., due to interaction between sister cells). A model trained only on the marginal and path components could produce the correct ancestor--descendant path distribution while ignoring this dependence. Our third component addresses this issue. We repeatedly sample an ancestral cell at time \(t_0\) and two of its descendants at a later time \(t_1\), concatenate their states, \((x_{\mathrm{ancestor}},x_{\mathrm{descendant}\,1},x_{\mathrm{descendant}\,2})\), pool these triples across the trees in each collection, and compare the resulting distributions using the energy distance. Because the two descendants have no canonical ordering, we choose each of the two possible orderings with probability $1/2$. %

Our imputation loss is the weighted sum of these three discrepancies: the state marginal, lineage paths, and common-ancestor triples. We compare each feature only at matching developmental times: separately at each time for marginals, across the shared sequence of times for paths, and separately for each ancestor--descendant time pair for common-ancestor triples. Let \(t_0<\cdots<t_K\leq c\) denote the developmental times at which features are compared in both collections. The loss is
\[
\begin{aligned}
\mathcal L_{\mathrm{imp}}(\theta)
={}&
\lambda_{\mathrm{marg}}
\sum_{k=0}^K
\widehat D_E\!\left(
    \operatorname{Marginal}
    (\{\widehat{\calX}_i\}_{i\in \mathcal I},t_k),
    \operatorname{Marginal}
    (\{\widehat{\calX}_i\}_{i\in \mathcal I'},t_k)
\right)
\\
&+
\lambda_{\mathrm{path}}
\widehat D_E\!\left(
    \operatorname{Path}
    (\{\widehat{\calX}_i\}_{i\in \mathcal I};t_0,\ldots,t_K),
    \operatorname{Path}
    (\{\widehat{\calX}_i\}_{i\in \mathcal I'};t_0,\ldots,t_K)
\right)
\\
&+
\lambda_{\mathrm{anc}}
\sum_{0\leq k<\ell\leq K}
\widehat D_E\!\left(
    \operatorname{CommonAncestor}
    (\{\widehat{\calX}_i\}_{i\in \mathcal I};t_k,t_\ell),
    \operatorname{CommonAncestor}
    (\{\widehat{\calX}_i\}_{i\in \mathcal I'};t_k,t_\ell)
\right).
\end{aligned}
\]
Here the path and common-ancestor operators pool the corresponding features across all completed trees in each collection. In practice, the lineage trees are too large to process as single training examples, so we estimate this objective using two collections of subtrees rooted at a shared internal time, sampled from two larger lineage trees. For the marginal component, we can leverage large scale observational data to compare each collection's imputed states with measured cells sampled from the non-lineage-traced developmental atlases as well (Section~\ref{sec:atlas}). Algorithm~\ref{alg:imputation-loss} summarizes the feature comparisons.

Each of these terms can be viewed as comparing a low-dimensional feature distribution induced by the empirical distribution of completed trees. In principle, we could instead compare the distributions of whole state-decorated trees directly. Whole lineage trees, however, are large, high-dimensional, may differ in topology, and we have a limited number of samples, making such a comparison difficult to estimate. Section~\ref{supp:theory} formalizes this connection in an idealized fixed-topology setting: it constructs a distributional distance on whole trees and characterizes the information and statistical efficiency lost or gained by replacing the whole-tree comparison with the marginal, path, and common-ancestor feature distributions used here.

\begin{algorithm}[t]
\caption{Imputation alignment loss}
\label{alg:imputation-loss}
\small
\begin{algorithmic}[1]

\Function{Marginal}{$\{\widehat{\calX}_i\}_{i\in \mathcal I},t$}
    \State $F \gets [\,]$
    \For{$i\in \mathcal I$}
        \For{each cell $v$ in $\widehat{\calX}_i$ alive at time $t$}
            \State append $\widehat x_{i,v}$ to $F$
        \EndFor
    \EndFor
    \State \Return $F$
\EndFunction

\Statex

\Function{Path}{$\{\widehat{\calX}_i\}_{i\in \mathcal I},(t_0,\ldots,t_K)$}
    \State $F \gets [\,]$
    \For{$i\in \mathcal I$}
        \For{each sampled lineage path across $(t_0,\ldots,t_K)$}
            \State let $v_k$ be its cell alive at $t_k$
            \State append
            $(\widehat x_{i,v_0},\ldots,\widehat x_{i,v_K})$
            to $F$
        \EndFor
    \EndFor
    \State \Return $F$
\EndFunction

\Statex

\Function{CommonAncestor}{$\{\widehat{\calX}_i\}_{i\in \mathcal I},t_0,t_1$}
    \State $F \gets [\,]$
    \For{$i\in \mathcal I$}
        \For{each sampled cell $u$ in $\widehat{\calX}_i$ alive at $t_0$}
            \State sample two descendants $v,w$ of $u$ alive at $t_1$
            \State randomly order $(v,w)$
            \State append
            $(\widehat x_{i,u},\widehat x_{i,v},\widehat x_{i,w})$
            to $F$
        \EndFor
    \EndFor
    \State \Return $F$
\EndFunction

\Statex
\Require Two collections of observed trees indexed by $\mathcal I$ and $\mathcal I'$, and imputer $I_\theta$

\State Complete each tree:
\[
\widehat{\calX}_i
=
I_\theta(\calT_i,A_i,x_{i,A_i}),
\qquad i\in \mathcal I\cup \mathcal I'.
\]

\State Let $t_0<\cdots<t_K$ be the developmental times shared by both collections.

\State $\mathcal L_{\rm marg}\gets 0$
\For{$k=0,\ldots,K$}
    \State $\mathcal L_{\rm marg}\gets\mathcal L_{\rm marg}
    +\widehat D_E\!\left(
        \Call{Marginal}{\{\widehat{\calX}_i\}_{i\in \mathcal I},t_k},
        \Call{Marginal}{\{\widehat{\calX}_j\}_{j\in \mathcal I'},t_k}
      \right)$
\EndFor

\State $\mathcal L_{\rm path}\gets
    \widehat D_E\!\left(
        \Call{Path}{\{\widehat{\calX}_i\}_{i\in \mathcal I},(t_0,\ldots,t_K)},
        \Call{Path}{\{\widehat{\calX}_j\}_{j\in \mathcal I'},(t_0,\ldots,t_K)}
    \right)$

\State $\mathcal L_{\rm anc}\gets0$
\For{$0\leq k<\ell\leq K$}
    \State $\mathcal L_{\rm anc}\gets\mathcal L_{\rm anc}
    +\widehat D_E\!\left(
        \Call{CommonAncestor}{\{\widehat{\calX}_i\}_{i\in \mathcal I},t_k,t_\ell},
        \Call{CommonAncestor}{\{\widehat{\calX}_j\}_{j\in \mathcal I'},t_k,t_\ell}
      \right)$
\EndFor

\State \Return
$\lambda_{\rm marg}\mathcal L_{\rm marg}
+\lambda_{\rm path}\mathcal L_{\rm path}
+\lambda_{\rm anc}\mathcal L_{\rm anc}$

\end{algorithmic}
\end{algorithm}

\subsubsection{Learning a generative sampling model}
\label{sec:diffusion}

The imputation model produces completed trees by filling in ancestral states in a single forward pass. We want a generative model that, given a tree observed at a single snapshot, can sample the unobserved states of each cell's ancestors and descendants. This requires a second sampling model, $S_\phi$, which we train by taking the complete trees produced by the imputation step and randomly choosing a developmental time to condition on, then masking out all the other nodes and learning to fill in the states conditional on this selected snapshot. The sampling model uses the same tree-aware transformer architecture but generates the masked states through a multi-step denoising procedure, conditioning on the retained snapshot throughout. This builds on diffusion, flow, and stochastic interpolant based generative modeling techniques which have seen great success in computer vision \cite{SohlDickstein2015, Ho2020, Song2021, Lugmayr2022, Lipman2022-qc, Albergo2023}.

For a completed tree $\widehat{\mathcal X}_i=(\mathcal T_i,\widehat x_i)$, we choose a random developmental time and treat the cells alive at that time as the observable snapshot. We hide the states $\widehat x_{i,M_i}$ of the remaining nodes $M_i\subset V_i$ and train the model to resample them from everything that remains: the topology, the node times, the observable states, and which nodes were hidden. It is worth noting that, unless we select the leaf snapshot, this conditions on imputed nodes.

We do this with a stochastic interpolant, using a transformer $S_\phi$ that takes as input one token per node (both masked and unmasked). An observable node's token carries its state $\widehat x_{i,v}$. A hidden node's token instead carries an interpolant between noise and its clean state,
\begin{equation}
x_{i,v}^{s}
=
b_{i,v}
+
\alpha_s\bigl(\widehat x_{i,v}-b_{i,v}\bigr)
+
\sigma_s\varepsilon_{i,v},
\qquad
\varepsilon_{i,v}\sim\mathcal N(0,\mathbf I).
\end{equation}
Here $\widehat x_{i,v}$ is the clean state and $b_{i,v}$ is the additive displacement introduced above, computed from the observable snapshot and the tree topology. The functions $\alpha_s$ and $\sigma_s$ control the relative amount of clean signal and Gaussian noise in $x_{i,v}^{s}$. Different choices of these schedules let us train both standard diffusion models and flow matching models \cite{Ho2020,Song2020ddim,Lipman2022-qc,Albergo2023}. We consider both the linear conditional flow matching (CFM) interpolant and the variance-preserving SDE (VP-SDE) interpolant:
\begin{equation}
\begin{aligned}
\text{CFM:}\qquad
&\alpha_s=s,
\qquad
\sigma_s=1-s,
\\
\text{VP-SDE:}\qquad
&\alpha_s
=
\exp\!\left(
-\frac12\int_0^{1-s}\beta(u)\,du
\right),
\qquad
\sigma_s=\sqrt{1-\alpha_s^2},
\end{aligned}
\end{equation}
where $\beta(u) = \beta_{\rm{min}} + u(\beta_{\rm{max}} - \beta_{\rm{min}})$ is the linear VP-SDE noise-rate schedule. Thus the interpolant is the clean state at $s=1$ and approximately Gaussian noise around the additive displacement at $s=0$.

We draw $\varepsilon_{i,v}$ independently at each hidden node, sample one interpolation time $s\sim\mathcal U[0,1]$ for the whole tree, and add an embedding of $s$ to every token. Each token also carries an embedding of its developmental node time $t_{i,v}$, which is distinct from the interpolation time $s$.

At each hidden node $S_\phi$ predicts the clean state directly (the clean-state parameterization \citep{x1prediction}), with the additive displacement added back to the network's predicted residual, and we minimize the squared error
\begin{equation}
\mathcal L_{\mathrm{samp}}
=
\mathop{\mathbb E}_{i, \varepsilon_i, s}
\left[
\sum_{v\in M_i}
\left\|
S_\phi(\mathcal T_i,M_i,x_i^s,s)_v
-
\widehat x_{i,v}
\right\|^2
\right],
\end{equation}
where $x_i^s$ collects the input states at all nodes: observable nodes retain their states $\widehat x_{i,v}$, while hidden nodes carry the interpolants defined above. The expectation averages over completed trees from our imputed data, the choice of observable snapshot, the interpolation time $s\sim\mathcal U[0,1]$, and the noise. Since the observable snapshot is selected at a random developmental time, the same training procedure teaches the model to sample in both temporal directions. This approach can be thought of as a tree analog of stochastic-interpolant methods for image inpainting and for imputing missing values in time series \cite{Albergo2024-pq, Tashiro2021-ko}.

To sample, we set the hidden nodes to Gaussian noise around their additive displacements at $s=0$ and integrate up to $s=1$, using the prediction of $S_\phi$ to step the interpolant toward data (\textbf{Fig.~\ref{fig:schematic}d}). We use a deterministic ODE sampler for CFM and a stochastic reverse-SDE sampler for VP-SDE, keeping the observable snapshot fixed throughout (Section~\ref{supp:architecture}). We can re-sample the distribution of states conditioned on the same snapshot by re-sampling the noise and re-running the denoising process.

We evaluate samplers using both the VP-SDE and CFM in Section~\ref{supp:grid}, and find that the VP-SDE is more performant. We use the VP-SDE based sampler in all main text analyses.

\subsubsection{Jointly training the imputation and sampling models}
\label{sec:training}
The imputation and sampling models are trained simultaneously rather than in separate stages. Each training step draws two observed trees whose capture times differ by half a day, $d, d+0.5$ so that the imputation loss can compare their completions over the times they share. We take up to $2{,}048$ subtrees from each, rooted at a common time chosen uniformly from $[d-0.5, d]$. Subtrees with more than \(128\) nodes are subsampled by randomly dropping leaves. At every training step, we impute a batch of observed trees and compute the imputation loss of Algorithm~\ref{alg:imputation-loss}, and the same completed trees are then used to train the sampler. The two losses are added and a single optimizer step updates both parameters \(\theta\) and \(\phi\) at once. 

Training the sampler on completed trees creates a tension. If we train directly on imputed states, the sampler may learn to reproduce imperfect samples from the imputation model. But using the imputed completion directly has an important advantage: the sampling loss can propagate back through the imputer. This gives the two models an autoencoder-like relationship, encouraging the imputer to produce completions from which the sampler can reconstruct the measured cells from a single snapshot. In practice, this additional signal makes the completions sharper and more lineage-coherent than the imputation loss alone (Section~\ref{supp:grid}).

We balance these objectives by alternating between two training steps. On one step, we leave the imputed completion unchanged, condition the sampler on an internal snapshot, and compute the loss over the resulting tree. The imputed internal states still enter the sampler as context, so gradients can flow through them into the imputer, while the sampler loss is only computed on the leaves. On the other step, we select the conditioning snapshot and leave those states at their imputed values, then replace the other imputed targets with their nearest measured cells at nearby developmental times. These replacements give the sampler real-cell targets at internal times. Although the nearest-neighbor targets are detached from the imputer, retaining the imputed conditioning snapshot preserves a gradient path back into the imputer. Alternating the two steps lets the sampler train on measured-cell targets throughout development while retaining the gradient flowing from sampler through the imputer.

The imputation loss weights its marginal, path and common-ancestor components \(1\), $0.25$ and $0.25$, and the marginal component is compared against up to \(4{,}096\) measured cells randomly sampled at each comparison time. We train for \(20{,}000\) steps with Adam and keep an exponential moving average.

\subsection{Model evaluation}
\label{sec:evaluation}

\subsubsection{Synthetic benchmarks}
\label{sec:synthetic}
\label{supp:synthetic}

Synthetic benchmarks provide a setting where the ground-truth dynamics are known. Every synthetic dataset described here is a population of trees, with each tree observed at a single time, matching the structure of the real data. Two of the four benchmark settings are fully synthetic and are built from what we call branch-modified stochastic differential equations (described below). The other two are semi-synthetic, built by wiring real mouse-embryo cell states into prescribed lineage motifs \cite{Tran2024-eu}.

\paragraph{Branch-modified SDEs.}
We define branch-modified stochastic differential equation as SDEs whose parameters vary across the branches of a lineage tree. In our synthetic datasets, a cell's latent state is one-dimensional and follows an Ornstein--Uhlenbeck process $dx=\theta(\mu-x)\,dt+\sigma\,dW$, reducing to a Brownian motion at $\theta=0$. As in a conventional branching process, cells divide by branching into two daughters, each of which starts from its parent's final state, so state is inherited continuously across divisions. At certain divisions the daughter's attractor $\mu$ is set to a value determined by its position in the lineage tree -- a branch modification of the SDE parameters. Through the branch modified SDEs, we are able to construct processes where a cell's future then depends not only on its current state, but also on its position in the lineage tree.

Trajectories are integrated by Euler-Maruyama: $x_{i+1}=x_i+\theta(\mu-x_i)\,\Delta t+\sigma\sqrt{\Delta t}\,\xi_i$ with $\xi_i\sim\mathcal N(0,1)$ and $\Delta t$ is the integration timestep. The diffusion is additive and state-independent.  To vary the ambient dimension we lift the one-dimensional latent to $\mathbb{R}^d$ by $y=xw+\varepsilon$, with $w$ a fixed random unit vector shared across trees and $\varepsilon\sim\mathcal N\!\bigl(0,(0.1/\sqrt d)^2I_d\bigr)$.

\paragraph{Synthetic topology benchmark (crossing versus non-crossing).}
Our first benchmark tests whether a model uses lineage structure. It involves a pair of processes -- referred to as crossing and coherent in \textbf{Fig.~\ref{fig:synthetic}} -- with similar marginals but distinct joint distributions. The trees have nine generations. The root starts at zero, and cells are Brownian ($\theta=0$, $\mu=0$) until the split generation $5$, where the two daughters of each cell switch to Ornstein--Uhlenbeck dynamics with $\theta=2.0$, $\sigma=0.1$, and attractors $-1$ and $+1$ respectively. At generation $7$ each family splits again. In the non-crossing process, the $-1$ family is mapped to $(-1.5,-0.5)$ and the $+1$ family is mapped to $(0.5,1.5)$, while the crossing process maps them to $(-1.5,0.5)$ and $(-0.5,1.5)$ respectively. Each branching interval (i.e., time between cell division) is $1.0$ with $100$ integration steps. The $40$ training trees are binary to depth $9$ with $512$ leaves each, observed at one of the times $5$ to $9$. We test various dimensionalities across $d\in\{1,4,16,64\}$ with a separately trained model at each dimension, and held-out evaluation uses a single further realization of the process observed at every time.

\paragraph{Synthetic context benchmark.}
This benchmark tests whether a model can leverage the states of related cells across a subtree. A progenitor commits stochastically to one of two fates early in the tree. For several generations afterwards the committed fate biases each descendant's state only weakly, by an amount that becomes small relative to the noise as $\sigma$ grows, so a single cell's state barely reveals its fate; only later do the two fates separate sharply. The bias is shared across a clone, so averaging over related cells accumulates it and exposes the latent fate.

The trees have ten generations. The root starts at zero, and cells are Brownian until the split generation $3$, where each cell draws one of two fates independently, with probability $0.75$ for the positive fate, and acquires the corresponding weak primed attractor $\mu\in\{+0.2,-0.2\}$ under Ornstein--Uhlenbeck dynamics with $\theta=2.0$. At generation $9$, the attractor of each primed family switches to $\mu\in\{+1.0,-1.0\}$. Each branch interval is $1.0$ with $200$ integration steps, the $200$ training trees are binary to depth $10$ with $1{,}024$ leaves each, observed at one of the times $5$ to $10$. The evaluated noise scales scan across $\sigma\in\{0.2,0.4,0.6,0.8\}$.

The context-scaling evaluation predicts the transition from time $9$ to time $10$ given $k\in\{1,2,4,8\}$ related cells as input. A clone of size $k$ shares a common ancestor at time $9-\log_2 k$, so larger $k$ means deeper shared ancestry. Results are averaged over five held-out process realizations.

\paragraph{Semi-synthetic construction.}
The two remaining benchmarks move from purely simulated states to real cell states synthetically organized in a lineage tree. This is done by constructing lineage trees out of small repeated cell type motifs \cite{Tran2024-eu}.

\paragraph{Random cell type motif benchmark.}

Our first semi-synthetic dataset consists of small motifs made up of random cell types. This can be thought of as taking a cell type level map (similar to that shown in \textbf{Fig.~\ref{fig:f3_summary}a}) and randomly wiring cell types together, then using this to simulate a developmental process. Not all random motifs will be reasonable (as shown \textbf{Fig.~\ref{fig:synthetic}c}) but this experiment reveals what kind of dynamics a model is able to recover.

We do not consider all cell types in our construction of cell type motifs. Instead, we restrict to ectodermal cell types and restrict to abundant cell types: the ten most abundant eligible E9.0 progenitor types and the ten most abundant eligible E9.5 child types. In this semi-synthetic experiment, motifs consist of two progenitor cell types and four descendant cell types. This gives a possible landscape $\binom{10}{2}\binom{10}{2}\binom{8}{2}=56{,}700$ motif candidates. From this landscape, we randomly sample 32 to test.

We label candidates as Felsenstein-like or not, and OT-like or not, depending on the nature of the ancestor-descendant coupling. We measure how OT-like a motif is by comparing its ancestor-descendant pairings with minimum-displacement alternatives, given the marginals. To do this, we consider all six possible cell type motifs consistent with the marginals, and compute the displacement between ancestor and descendant cell type means. The displacement advantage of the true coupling over the other five determines how OT-like it is. Similarly, we measure how Felsenstein-like a motif is using the means of the pairs of sister cell types (among descendants). Concretely, we consider the two possible couplings consistent with the tree topology, and compute the displacement between each ancestor cell type mean and its sister-pair mean. The displacement advantage over the swapped pairing determines how Felsenstein-like it is. We classify the upper sixth of candidates as OT-like and the upper half as Felsenstein-like by these respective advantages. In the sampled motifs, every OT-like motif is also Felsenstein-like, so the two classifications sort the motifs into three categories: OT-like, Felsenstein-like but not OT-like, and neither.

\paragraph{Cell-type-motif training and evaluation.}
For each motif we draw $200$ trees from the corresponding cell-type pools. Each seven-node tree contains an E8.5 root, one cell from each of the two E9.0 progenitor types, and one cell from each of the four E9.5 child types, connected according to the imposed assignment. The root type is the most abundant eligible E8.5 subtype. States for each cell-type and time slot are sampled independently, without replacement when the pool contains at least $200$ cells and with replacement otherwise. Each tree record has the E9.5 children as its observed snapshot. Separate singleton records supply $200$ E8.5 root observations and $200$ E9.0 observations, the latter alternating between the two progenitor types and hence containing $100$ cells of each.

Every method is trained separately for each evaluated motif. The imputer and sampler are trained jointly on the three-time motif records for $2{,}000$ steps. Each baseline is trained for $3{,}000$ steps on the E9.0-to-E9.5 transition: multimarginal flow matching and GENOT with Sinkhorn use the E9.0 and E9.5 marginals, while GENOT with fused Gromov--Wasserstein uses the lineage trees. Training and scoring use the same sampled trees, so this evaluation measures whether a fitted model reproduces the imposed coupling on the empirical populations used to fit the motif rather than generalization.

The E9.0 nodes are hidden in the training tree records, but at evaluation we clamp every method to their true states and score only forward E9.0-to-E9.5 sampling. Lineage inpainting generates the two-child subtree jointly, while the baselines instead receive each progenitor state twice and transport the two copies separately, with eight draws from each GENOT model's internal solver and one deterministic prediction from multimarginal flow matching after $20$ integration steps. The score discards sibling pairing and pools predictions across trees by progenitor type, so it measures recovery of the equally weighted mixture of the two child populations assigned to each progenitor.

Each progenitor-specific target population contains $400$ true child states. We classify each prediction by its nearest E9.5 child-type centroid and compare its type proportions with those of the assigned two-type mixture using the accuracy of Section~\ref{sec:metrics}. \textbf{Fig.~\ref{fig:synthetic}c} shows the $32$ accuracies, with one column per motif and the columns ordered by hierarchical clustering of the four methods' accuracy profiles; values at or below $0.5$ share the lower display color.

\paragraph{Semi-synthetic context benchmark.}
The benchmark of \textbf{Fig.~\ref{fig:synthetic}f} builds trees from real cell states in which a progenitor's type is recoverable only from its cousins. It chooses motifs from cell types with at least $50$ cells at each corresponding time. Each sampled motif consists of four distinct types: a root type at E8.5, two progenitor types $A$ and $C$ present at both E9.0 and E9.5, and a shared child type $B$ at E9.5. A tree is a root with two progenitors of the same type, $A$ or $C$, each carrying two children, and exactly one of the two progenitors differentiates, so that its children are both of type $B$ while the other's keep the progenitor's type. A $\{B,B\}$ pair therefore identifies the type of its progenitor only through its cousin pair. The observed snapshots are at E9.5 and E9.0.

For each draw we build $200$ trees, half with progenitors of type $A$ and half of type $C$, alternating which of the two sister progenitors differentiates, and sample every node's state from its type's pool as in the motif benchmark; $200$ singleton records at E8.5 and $200$ at E9.0, the latter alternating between the two progenitor types, supply the marginals.

At test-time the imputer and sampler are given the four children of each tree and impute both hidden progenitors at once, with eight draws and $20$ sampling steps, while a baseline pushes each child back to E9.0 on its own and is scored against that child's own progenitor (Section~\ref{sec:metrics}). Predictions are averaged over draws before scoring. Progenitor-type accuracy is the fraction of imputed ancestors whose nearest progenitor-type centroid, over the E9.0 pools of the draw, is the correct one, averaged over the two types so that chance is $0.5$.  We report it over all progenitors and split by whether a transition is within the same type or between two types, averaged over the ten draws.

\subsubsection{Evaluation metrics}
\label{sec:metrics}
We evaluate models fit to the lineage-traced atlas with six metrics, shown in \textbf{Fig.~\ref{fig:f3_summary}b}. Lower is better throughout. We will next describe these metrics and their concrete implementation.

Three metrics involve making some assumptions about developmental dynamics. \emph{Germ-layer crossing rate} assumes that cell state transitions do not cross germ layers (with the exception of epiblast). \emph{Lineage-oscillation rate} assumes that developmental trajectories do not often alternate between lineage-level cell types. \emph{Hox-code distortion} assumes that in certain cell types individual Hox expression persists between E9.0 and E9.5. Three data-based metrics compare the model's imputed cells to the real data at other times. \emph{Marginal distortion} asks how well models recapitulate cell state distribution snapshots. \emph{Clonal distortion} measures the cycle-consistency of a model: given a clade, the extent to which it can be recovered by simulating reverse in time and then forward in time. \emph{Linkage distortion} measures the extent to which lineage relationships between cell types is consistent in simulated data and real measurements. We define all six of these metrics in detail below.

On the synthetic topology and random-motif benchmarks, we classify generated states by their nearest latent attractor or child-type centroid. We compare the predicted and true fate or type proportions, $\widehat p_f$ and $p_f$, and report $1-\tfrac12\sum_f|\widehat p_f-p_f|$. We average this accuracy over source parents in the topology benchmark and over the two progenitor groups in the random-motif benchmark. The synthetic context benchmark classifies individual draws by their nearest fate attractor; the real-cell context benchmark classifies the mean of eight draws by its nearest progenitor-type centroid. \textbf{Supplemental Fig.~\ref{fig:supp_synthetic}d} also compares accuracy with energy distance, Gaussian MMD, and Sinkhorn divergence on the full-dimensional states \cite{Feydy2018-ow}.

\label{supp:metrics}
The metrics above, and those of the implementation grid (Section~\ref{supp:grid}), are defined in full below. The label-based metrics use the same inferred ancestor--descendant couplings that underlie the Sankey maps of Section~\ref{sec:sankey-transition-maps}, without any nearest-neighbor distance cutoff. The continuous distortion metrics compare generated states directly.

\paragraph{Germ-layer crossing rate.}
To score germ-layer crossings we take each cell's germ layer from the annotation table. Because not every node carries a label, we condense the tree: from each labeled node we descend through unlabeled intermediates to the nearest labeled descendant, forming one source--descendant edge. The metric is the fraction of these edges that cross a germ-layer boundary, excluding edges that start from the epiblast, pooled over eligible edges.

\paragraph{Lineage-oscillation rate.}
This metric catches a cell-type label that reappears after having seemingly been left behind, a biologically implausible flip back to an ancestral identity. Walking each tree from the root, we track the last two lineage-level cell-type labels along the current path; an internal node takes the label of the real cell it is matched to, and unlabeled nodes carry none. Each labeled node with two labeled ancestors closes a triple (grandparent, parent, node), and the triple \emph{oscillates} when the grandparent and node share a label that the parent does not, an $A \to B \to A$ flip along a single path. The oscillation rate is the fraction of eligible labeled-node triples that oscillate; lower is better.

\paragraph{Hox-code distortion.}
The Hox-code distortion metric tests persistence of individual Hox expression in cell types with stable expression patterns during E9.0 to E9.5. We curate twenty such (gene, cell type) pairs, provided in \textbf{Tab.~\ref{tab:hox_pairs}}, based on prior literature \cite{Alexander2009-hox, Tumpel2009-cw, Parrish2009-yn, Minoux2010-ji, Philippidou2013-sq}.

For each pair we map the E9.5 leaves of the type to their inferred ancestor nearest to E9.0 (at most 0.4 days away). We then compute the Pearson correlation $\rho_{\rm{Hox}}$ between the gene's expression in the observed cell assigned to each ancestor and in its descendant leaf. The Hox distortion averages $1 - \rho_{\rm{Hox}}$ over pairs.

\begin{table}[t]
\centering
\caption{Cell types with stable Hox gene expression between E9.0 and E9.5}
\label{tab:hox_pairs}
\begin{tabular}{ll}
\toprule
Cell type & Hox genes \\
\midrule
Hindbrain (r2--r4) & \texttt{Hoxa2}, \texttt{Hoxb2}, \texttt{Hoxb1} \\
Cranial motor neuron & \texttt{Hoxb1} \\
Cranial neural crest & \texttt{Hoxa2} \\
Pharyngeal arch mesenchyme & \texttt{Hoxa2}, \texttt{Hoxb2}, \texttt{Hoxa3}, \texttt{Hoxd3}, \texttt{Hoxd4} \\
Hindbrain (r5--r8) & \texttt{Hoxa3}, \texttt{Hoxd3}, \texttt{Hoxb4}, \texttt{Hoxd4} \\
Spinal cord (cervical) & \texttt{Hoxb4}, \texttt{Hoxd4} \\
Spinal cord interneuron & \texttt{Hoxb4}, \texttt{Hoxd4} \\
Somatic motor neuron & \texttt{Hoxb4}, \texttt{Hoxd4} \\
\bottomrule
\end{tabular}
\end{table}

\paragraph{Marginal distortion.}
For each adjacent pair of observed times we draw \(400\) cells observed at the later time, within a quarter day of it, take each one step back to its inferred ancestor, and compute the energy distance between those ancestors and \(400\) true cells at the earlier time; the reported value averages over the pairs of times.

\paragraph{Clonal distortion.}
We draw \(400\) subtrees, each an observed leaf together with its ancestor one embryo day earlier, within half a day, and the leaves below that ancestor, keeping subtrees with at least nine leaves. The model clamps the leaves and infers the ancestor, then clamps that ancestor alone and regenerates the leaves, with eight sampling steps in each direction; the baselines take each leaf one step back once and push the result forward five times. The metric is the energy distance between the regenerated and the measured leaves, averaged over subtrees.

\paragraph{Linkage distortion.}
Using lineage-level cell-type labels, we compute the fate separation (or ancestral linkage) matrix \cite{colgan, Fang2022-tg}. This matrix describes how closely related two cell types are on the lineage tree. We take observed trees at E9.5 and E9.0, and use the models to infer progenitor states 0.5 days earlier (E9.0 and E8.5 respectively). We compute one ancestral linkage matrix per replicate embryo and compare the inferred matrices with those of observed trees at the earlier timepoint. The linkage distortion is $\lvert 1-\overline\rho_{\mathrm{inferred,real}}/\overline\rho_{\mathrm{real,real}}\rvert$, where the numerator is the mean Pearson correlation between inferred and observed matrices, and the denominator is the mean correlation between observed matrices from different replicates.

\paragraph{Cell type label composition total-variation gap.}
At each scored time we form the label composition (normalized) of the generated population (the fraction of cells carrying each label), \(Q_\ell\), and compare it to the real composition \(P_\ell\) by total-variation distance, \(\tfrac12\sum_\ell\lvert P_\ell-Q_\ell\rvert\) over the union of labels, assigning zero probability to absent labels. We report this separately for cell-type and for lineage-level cell-type labels, averaged over times.

\subsubsection{Baselines}
\label{sec:baselines}
We compare against generative models of developmental dynamics that span how much lineage structure they use. Our multimarginal flow-matching baseline~\cite{Lipman2022-qc} learns from independent samples of consecutive time marginals and ignores the tree entirely. Its deterministic flow maps each input state to a single output. GENOT with Sinkhorn couplings~\cite{Klein2024-mk} learns the same kind of transition but couples consecutive snapshots by entropic optimal transport in state space, again without reference to the tree. GENOT with fused Gromov--Wasserstein couplings~\cite{Klein2024-mk,Lange2024-ts} fuses the state-space cost with a tree-distance cost, so the coupling it fits does use lineage structure, though only through this pairwise cost rather than as context the model attends over. To make this the fairest test of the optimal-transport approach, we scale the fused coupling to the full lineage-traced atlas---which a dense solve cannot achieve---by solving it clone by clone, in minibatches drawn from one source lineage and one target lineage, with a constant-time tree-distance oracle (GENOT-cbFGW, detailed below). On the synthetic benchmarks, where the trees are small, the fused coupling is solved on randomly paired trees (Section~\ref{sec:synthetic}). All three are trained on the same atlas and time grid as the lineage inpainting model, with optimal-transport couplings fit between adjacent grid times in both directions.

\paragraph{The optimal-transport couplings.}
\label{supp:baselines}
Here we describe our implementations of two optimal transport-based baselines. First is a gene-expression based model which gives a generative model consistent with entropic OT couplings, (GENOT-Sinkhorn) and a similar method using a fused Gromov--Wasserstein coupling (GENOT-FGW)~\cite{Klein2024-mk,Lange2024-ts}, which additionally encodes lineage tree structure through the Gromov term. Both choose the cross-time correspondence by a transport cost instead of learning it. A standard implementation of GENOT-FGW does not scale to massive datasets, so below, we provide a modified implementation for the lineage-traced atlas. We refer to this method as being clonally batched (GENOT-cbFGW).

\paragraph{Scaling GENOT to million-cell atlases.}

GENOT-Sinkhorn uses entropic optimal transport to first couple cell states in the scVI embedding, then fits a flow-matching model to sample transitions with the coupling \cite{Klein2024-mk}. This approach can be minibatched, and easily scales to very large datasets. However, it does not consider the lineage tree information.

The fused Gromov--Wasserstein coupling instead determines a coupling using two cost terms solved together as one entropic fused-GW problem: a displacement term (similar to the Sinkhorn described above), and a structure term that compares two intra-domain matrices of pairwise tree distances, one among the source cells and one among the target cells. We weight the two terms equally.

Naively running the fused-GW solver on the lineage traced atlas is not feasible. A dense fused-GW solver forms the full intra-lineage distance matrices, quadratic in the number of cells, which caps it at tens of thousands of cells. Minibatching is the natural fix, but this is not straightforward. The intra-domain cost compares the tree distance within a pair of source cells against the tree distance within a pair of target cells, and a tree distance exists only within one lineage tree, so a minibatch drawn at random across the atlas leaves the cross-clone entries undefined. We instead minibatch by lineage tree: each iteration pairs one source lineage with one target lineage, solves the fused-GW problem on cells sampled from each, so that both intra-domain matrices hold within-lineage distances, and feeds source--target pairs sampled from the coupling to one flow-matching step. The per-iteration cost is then fixed by the minibatch size rather than by the size of the atlas.

Even with this adjustment, another scalability issue remains: the minibatch still needs the pairwise tree distances among its sampled cells, and recomputing paths for every minibatch would reintroduce the dependence on lineage tree size. We instead precompute, once per lineage of \(N\) cells, with a constant-time tree-distance oracle using an \(\mathcal{O}(N\log N)\) preprocessing by a standard Euler-tour and sparse-table range-minimum construction. Each minibatch's distance matrix is then filled by a fixed number of constant-time queries, the number of cells enters only the one-time oracle building step, and the coupling can run at the scale of the atlas, over a million cells.

\paragraph{GENOT-Sinkhorn.}
Each training iteration samples a random adjacent transition between consecutive observation times (i.e., every 0.5 days) in either direction and draws a minibatch of $512$ source and $512$ target cells, then solves an entropic optimal transport problem between them in the scVI embedding with a squared-Euclidean cost and regularization \(\varepsilon=5\times10^{-2}\). We normalize the coupling to a distribution over pairs and sample \(512\) source-to-target pairs from it. We then take a single flow-matching step, conditioning the velocity field on the source state and the two endpoint times. The velocity field model uses a multilayer perceptron with three hidden layers of width $1024$. We train for \(5{,}000\) iterations with Adam at learning rate \(10^{-4}\), and at inference transport each source cell with \(5\) noise draws.

\paragraph{GENOT-cbFGW.}
The fused coupling uses an inner Sinkhorn solver at the same regularization \(\varepsilon=5\times10^{-2}\), with the displacement and structure terms weighted equally. Following the lineage-minibatching construction above, each iteration pairs one source lineage with one target lineage and solves the fused-GW problem on \(512\) cells sampled from each, with the within-lineage distance matrices filled by the constant-time oracle. The network and the remaining training hyperparameters match the Sinkhorn variant. On the synthetic benchmarks, where every tree is small, the fused coupling is solved once on a random pairing of trees instead (Section~\ref{sec:synthetic}).

\subsection{Data and preprocessing}
\label{sec:data}

\subsubsection{The lineage-traced atlas}
\label{sec:atlas}
We model a mouse-embryo atlas of roughly four million cells spanning embryonic day E7.5 to E10.0. About \(1.4\) million of these are lineage-traced cells from \citet{colgan}. The remainder are non-lineage-traced cells, both host cells from chimeric embryos in the same study and cells from a separate scRNA-seq atlas \cite{Qiu2024-cq}.

The trees are reconstructed with branch length (i.e. node time) estimates in \cite{colgan} and are considered here as observed: each lineage record is a rooted tree \(\calT\) in which every node already carries a continuous developmental time, derived from molecular clock-based branch length estimation \cite{colgan}. Because sequencing is destructive, a tree is measured at a single capture time, so its observed cells are exactly its leaves, the only nodes with a measured state \(x_v\). The earlier ancestors are the unobserved nodes of the tree the model completes; the cells that would have followed lie beyond the capture time, and every analysis that samples descendants does so on the topology of a tree observed at the later time (Section~\ref{sec:diffusion}).

We summarize a cell's state \(x_v\in\R^{50}\) by the \(50\)-dimensional scVI embedding of its transcriptome~\cite{Lopez2018}. We fit one scVI model to raw counts across the whole atlas with the data source (lineage-tracing study versus reference atlas) as a categorical covariate, so the embedding is shared and batch-corrected across the two sources; gene selection is described below and the model configuration is given in Section~\ref{supp:scvi}.

Cells are annotated hierarchically at four nested levels, from finest to coarsest: cell subtype, cell type, lineage, and germ layer. Observed leaves keep their atlas annotations; an imputed ancestor carries the annotation of the observed cell it is coupled to (Section~\ref{sec:sankey-transition-maps}), and cells of the untraced atlas are annotated by nearest-neighbor transfer from the traced cells (below). This hierarchy supplies the germ-layer and cell-type labels used by the evaluation metrics (Section~\ref{sec:metrics}).

\label{supp:data}
\paragraph{Gene selection.}

We start from the genes called highly variable in the lineage-traced data that are present in both data sources. To remove batch-related features that merely separate the two data sources, we normalize each cell to \(10^4\) counts, take logarithms, and remove the top \(250\) differentially expressed genes per source (Wilcoxon test between sources), leaving \(3{,}956\) genes.

\paragraph{Annotation transfer.}
Cells of the untraced atlas are labeled by an exact \(L_2\) nearest-neighbor search in the \(50\)-dimensional embedding against the annotated traced cells, using the \(k=15\) nearest neighbors. At each of the four label levels (cell subtype, cell type, lineage, and germ layer), we take a majority vote and call the level confident when the top label holds at least a \(0.7\) fraction of the neighbors; the transferred annotation is the finest confident level, or ``ambiguous'' when no level is confident. About 1\% of untraced-atlas cells (19{,}632 of 2{,}054{,}584) remain ambiguous.

\subsubsection{\texttt{Tbx5} crispant embryos}
\label{sec:crispant-data}
\paragraph{Animal strain details} 
CAST/B6 F1 males for ICSI were generated in-house by breeding C57BL/6J strain male mice with CAST/EiJ strain females. We used 8-week-old male CAST/B6 F1 as sperm donors for ICSI and 6--12-week-old B6D2F1/J (BDF1) female mice to provide oocytes (Janvier). 28--40g female CD-1 mice (Janvier) were used as pseudopregnant fosters after mating with CD-1 strain vasectomized males (8 weeks -1.5 years old). All procedures were performed in specialized specific-pathogen-free facilities and followed all relevant animal welfare guidelines and regulations. Animals were kept on a 12h light-dark cycle and provided a standard diet \textit{ad libitum}. Embryos were collected without considering sex. Interventional procedures necessary to conduct this work were approved by the Yale University Committee on the Use of Animals in Research and Teaching (IACUC, Protocol $\#2026-20357$) or by the LaGeSo Berlin (Protocol \#G0098/23).

\paragraph{Generation of Tbx5 Crispant F0 Embryos} 
Tbx5 mutant embryos were generated as previously described \cite{Grosswendt2020-se, Kijima2026-fd}. Briefly, B6D2F1 (BDF1) strain female mice (Janvier) were serially primed over a 46 hr interval with 5 IU CARD HyperOva FD (Cat. \#KYD-015-EX, Cosmos Bio) and 5 IU of human chorionic gonadotropin (Cat. \#HOR-250, Prospect Protein Specialists), followed by isolation of oocytes the following morning. Oocytes were then fertilized with CAST/B6 F1 hybrid sperm via intracytoplasmic sperm injection (ICSI), followed by cytoplasmic injection with 200 ng/$\mu$L Cas9 mRNA and a 100 ng/$\mu$L cocktail of 3x sgRNAs (protospacer 1, 5’-GTGGCTGAAGTTCCACGAAG-3’, exon 3; protospacer 2, 5’-GCTTTCACGATGTGTAATCG-3’, exon 6; protospacer 3, GACGGTACAGCTGCTATAGGA, exon 9) generated by in vitro transcription using either the MEGAshortscript T7 Transcription Kit (Cat. \#AM1354, Invitrogen) or HiScribe T7 High Yield RNA Synthesis Kit (Cat. \#E2050S, NEB). Embryos were then cultured to E3.0-E3.5 in 25 $\mu$L drops of advanced KSOM (Cat. \#MR-101-D, Millipore sigma) under embryo-tested mineral oil. High grade morulae and/or cavitated blastocysts were transferred into the uterine horns of pseudopregnant CD-1 strain females (Janvier) and collected at E9.5 for profiling. Embryos were removed from decidua and the extraembryonic tissues dissected away. The embryo clutch was then documented and processed to single cell by incubating in TrypLE (Cat. \#12604013, Invitrogen) at $37^\circ C$ with gentle pipetting, followed by centrifugation, filtration and washing in 0.2\% BSA/PBS. The final pool was then counted and volumes adjusted for a target recovery of $\sim 50{,}000$ cells per GEM well run following computational demultiplexing of CAST SNPs. Cells were then subjected to single-cell RNA sequencing (10x Genomics, Chromium GEM-X Single Cell 3' v4). Single cell libraries were prepared following the protocol and sequenced on the Aviti system (Element Biosciences). Successful editing was validated using the sequencing reads and the examination of editing sites following demultiplexing.

\paragraph{Data preprocessing} 

Reads were preprocessed using Cell Ranger (v9.0.1) and aligned to the mouse GRCm39 assembly and quantified using GENCODE vM37 gene annotations. Cells with at least $750$ counts, $400$ detected genes and at most $10\%$ mitochondrial counts were then integrated with $250{,}000$ E9.5 cells from the lineage-traced study, thinned to the crispants' sequencing depth, by an scVI model with the data source as batch covariate, two hidden layers of width $512$ and a $50$-dimensional latent space, trained for $2{,}000$ steps in batches of $2{,}048$ cells. Each crispant cell is annotated with the subtype voted by its $25$ nearest reference neighbors.

\paragraph{Demultiplexing}

Embryos were demultiplexed using genotypic information, similar to \cite{Grosswendt2020-se, Kijima2026-fd, Dias2026-vh}. Reads overlapping known SNPs were assigned to the B6 (reference) or CAST (alternative) allele. Each autosome was divided into $10$ contiguous bins holding equal numbers of SNP-overlapping reads. Allele-specific read counts were summed per bin for each cell, giving cell-by-bin B6 and CAST read-count matrices ($187$ non-empty bins). Using cells that passed quality control and had at least $100$ SNP-informative reads, we fit, by expectation--maximization a beta-binomial mixture model with one component per embryo and one per pair of embryos to account for doublets ($8$ and $28$ components respectively), and removed cells whose posterior probability of not being a singlet exceeded $0.05$.

Using these cells as putative doublets identified by genotype, we next identify doublets using a transcriptional state filter. We identify cells with at least $22$ of their $30$ nearest neighbors in principal-component space ($50$ components of log-normalized expression) labeled as genotype doublets, and remove these.

\subsection{Computational analysis methods}
\label{sec:analysis}

\subsubsection{Sankey transition maps}
\label{sec:sankey-transition-maps}
We summarize the cell-type level transitions learned by a model using a Sankey map. For each time window, we first fully inpaint the lineage records using the sampler of Section~\ref{sec:diffusion}. Each sampled internal state is then snapped to its nearest observed cell, allowing us to assign a cell type label. This can equivalently be thought of as a nearest-neighbor annotation of cell types. Cell type assignments whose nearest-neighbor distance exceeds a cutoff---the 80th percentile of all internal-node nearest-neighbor distances pooled across trees---are filtered out.

We then convert these completed trees into adjacent-time transition pairs. Walking from root to leaf, if a retained representative \(p\) at time \(t\) and a retained representative \(c\) at time \(t+\Delta t\) are both available, the path contributes one pair \((p,c,t,t+\Delta t)\). In the mouse transition maps, \(\Delta t=0.5\) embryo days is the natural time interval since that is how the data are sampled.

The pairs are counted by time-resolved labels \((\text{label},t)\), counted at germ-layer resolution at E7.5, lineage resolution at E8.0, and cell-type resolution from E8.5 on. An edge is retained if it has at least five supporting pairs and its shares of the target node's incoming pairs and of the source node's outgoing pairs sum to at least \(0.2\).

\textbf{Fig.~\ref{fig:f3_summary}a} is a rendering of this directed graph. The map spans the six observed times from E7.5 to E10.0 and keeps the \(90\) most abundant cell types. Node sizes scale with observed cell counts at each labeled time point, and ribbon widths are proportional to retained transition counts. The row ordering is chosen from the weighted type-type graph to reduce long crossings; it affects only the layout, not the inferred transition counts.

\subsubsection{Context scaling}
\label{sec:context-scaling}
The context-scaling evaluations of \textbf{Fig.~\ref{fig:f3_summary}c--f} ask how the accuracy of an inferred ancestor or descendant changes with the context, or size of the input clade. Both are cycle-consistency tests on subtrees of the atlas whose root falls between E7.0 and E9.0 and which carry between \(3\) and \(20\) leaves (at least \(6\) for descendant inference), with at most \(100\) focal nodes drawn per terminal cell type.

For ancestral inference we fix a focal leaf and reveal successively larger clades around it, moving up its ancestors as far as one and a half embryonic days above the leaf. At each level the clade's leaves are clamped and the model infers the subtree's root. That root alone is then clamped and every leaf of the subtree is regenerated below it, conditioned only on the root. The reconstruction quality metric is the energy distance between the regenerated and the measured leaves of the subtree. We measure context size by the time from the focal leaf back to the most recent common ancestor of the revealed leaves (grouped into quarter day bins).

For descendant inference the focal node is an internal node with at least three descendant leaves and at least one cousin. We fill every internal state of the subtree with the sampler from its clamped leaves and take the distinct sets of cells alive at \(60\) times swept through the subtree, each holding an ancestor of the focal node, at most a day above it, with some of its cousins. Clamping the imputed states of such a set and regenerating the leaves below asks how much the cousin information improves reconstruction of the focal node's descendants: the score is the energy distance between the regenerated and the measured descendants of the focal node, and context is the number of cousins in the set (up to $8$).

The GENOT baselines take no context, and as such are represented with flat lines. For \(400\) focal cells per panel, one leaf is taken back to the root's or the focal node's time with a single draw and forward again with five draws, and the regenerated cloud is scored against the same measured leaves. 

The per-type panels (\textbf{Fig.~\ref{fig:f3_summary}d,f}) show error by cell type, excluding blood and extraembryonic types, for cell types with at least \(10{,}000\) leaves and at least three evaluations at every context size. The highlighted cell types are those that gains most from context and the one that gains least, measured by the slope of a linear fit of energy distance against context.

\subsubsection{Convergence analysis}
\label{sec:convergence-analysis}
The convergence score measures the extent to which transcriptionally similar cells at a particular time have distinct inferred histories. Fix a descendant time $t_b$ and an ancestor time $t_a=t_b-\delta$. In the analyses reported in the Results, $\delta=0.5$ days and every observed time from E9.0 to E10.0 serves in turn as \(t_b\), with the cell-type comparison pooling over all timepoints. We complete each lineage record back to \(t_a\) with the sampler and represent each sampled ancestor by its nearest observed cell (Section~\ref{sec:sankey-transition-maps}).

For an observed leaf \(i\) at \(t_b\), let \(a(i)\) be the nearest-cell representative of the first ancestor on the path from \(i\) whose time is at or before \(t_a\). Leaves whose first such ancestor lies more than a quarter day before \(t_a\) are omitted. Let \(\mathcal{N}_k(i)\) be the \(k\) nearest other leaves to \(i\) at time \(t_b\) across the atlas, measured in the state embedding; the reported analyses use \(k=100\). The ancestor set around \(i\) is
\[
A_i = \{a(j): j\in \{i\}\cup \mathcal{N}_k(i)\},
\]
and the convergence score is
\[
S_i = \operatorname{tr}\operatorname{Cov}\{x_a: a\in A_i\}.
\]
We also record the cell-type composition of \(A_i\), giving a source-type mixture for each descendant cell. Because this spread, like the OT-versus-tree discrepancy below, grows wherever cells are sparsely sampled in the embedding, we normalize both by the local spacing of cells. With \(r_i\) the mean distance from \(i\) to its \(10\) nearest other cells of the same stage, we normalize the convergence score as \(S_i/r_i^2\). \textbf{Fig.~\ref{fig:f4_convergence}a} shows this score after averaging each cell with its \(50\) nearest neighbors and clipping to the 2nd and 98th percentiles.

For the GENOT comparison, the tree ancestor of \(i\) is its ancestor node whose time is closest to \(t_a\) (within half a day), and the OT-versus-tree discrepancy is the Euclidean distance between the observed cells that lineage inpainting and GENOT-Sinkhorn assign to that node, divided by \(r_i\). GENOT assigns a node by chaining its backward ancestor map from one randomly chosen descendant leaf. \textbf{Fig.~\ref{fig:f4_convergence}b} compares the two as cell-type means over the types with at least \(500\) leaves, excluding blood.

The endothelial fan-in of \textbf{Fig.~\ref{fig:f4_convergence}c} traces each E9.5 endothelial leaf to its ancestor nearest E8.5, within \(0.3\) day, and reports the cell-type composition of those ancestors, excluding progenitors that are already endothelial. The example subtrees have four to ten leaves, a single other lineage holding the majority of them and no lineage oscillating along any path (Section~\ref{sec:metrics}).

The program axes of \textbf{Fig.~\ref{fig:f4_convergence}d} are Fisher discriminant directions in the embedding: for each program a positional contrast, paraxial against lateral plate mesoderm for endothelium and anterior (hindbrain) against posterior (spinal cord) types for the neurons, and an identity contrast between the fate and either the non-endothelial mesoderm or the other neuron types, orthogonalized against the positional direction. The directions are fit on E9.5 cells of the lineage-traced atlas, at most \(3{,}000\) per cell type and embryo, centered within each embryo so that no contrast is an embryo offset, with a ridge of five percent of the mean variance on the pooled within-group covariance.

A trajectory is the path of assigned nodes from an E9.5 leaf of the fate up through its inferred ancestors. Its source is the nearest labeled strict ancestor within \(0.3\) day of E8.5; trajectories whose source lies outside the fate's germ layer, or already carries the fate, are dropped. Endothelial trajectories are grouped by source, paraxial or lateral plate; neuronal trajectories are grouped into five equal-count bins of the source's positional score. Within each group the two program scores are smoothed against time with a Gaussian kernel of bandwidth \(0.10\) day on a grid from E7.5 to E9.2, keeping grid points with an effective sample size of at least \(30\), and each curve ends at the mean of the group's E9.5 leaves (\textbf{Fig.~\ref{fig:f4_convergence}e}).

The memory analysis of \textbf{Fig.~\ref{fig:f4_convergence}f,g} compares two groups of trajectories at their sources and again at their leaves. For a fate with a fitted positional axis the groups are the two sides of that axis; for the sweep of \textbf{Fig.~\ref{fig:f4_convergence}g} they are the fate's two most abundant E8.5 source cell types within its germ layer, each with at least \(50\) trajectories and neither already carrying the fate. Per gene we average log-normalized expression over up to \(1{,}000\) distinct cells per group, keep genes with mean above \(0.1\) in at least one group, and take the absolute \(\log_2\) fold change between the groups at E8.5 and between the E9.5 leaves they reach, with a pseudocount of \(0.1\). A gene whose E8.5 fold change exceeds \(1\) is source-specific; it is a memory gene when its E9.5 fold change stays above \(0.5\) and an erased gene otherwise.

\subsubsection{Perturbation analysis}
\label{sec:perturbation-analysis}
We simulate a knockout by redistributing cells in the scVI state space, and running forward-in-time inference using the sampler. The core assumption is that removing a gene displaces individual cells but leaves the transition rules intact, so the model fit on unperturbed embryos can be applied to displaced cells. Below we describe the details of how we repurpose our models to simulate genetic perturbations.

To knock out a gene $g$ we replace a cell by the nearest observed cell that does not express $g$, and then let the model propagate that cell's descendants as it would in an unperturbed record. Every number reported is a difference between two rollouts of the same lineage records under the same model, one with the cell states redistributed and one without. It describes how the learned dynamics respond to the shifts in cell state, but does not address scenarios where perturbations shift cells to unnatural states (i.e., those not observed in the reference embryos).

Each perturbation prediction begins from trees observed at a later timepoint $T$ (e.g. E9.5). We first impute the internal nodes of the tree using the sampler, giving a reference state at every internal node. Then, starting from the internal nodes at a simulation start time (e.g., E7.5), we run forward-in-time inference with the model, restricting the model to parts of the state space where $g$ is not expressed. At the current time of the simulation, we first identify all cells which express $g$. We then project these $g$-on cells to their nearest $g$-off neighbors in the scVI state space. Then, with this remodeled snapshot, we run the lineage inpainting model forward. 

We then repeat this process at every successive time step of a pre-selected grid. We use timesteps of 0.5 days across all our simulations. The knockout is enforced at every time the simulation crosses, not only at the start: a lineage that the model pushes back into expressing $g$ is knocked out again. A node whose representative cell is moved is marked perturbed, and so is everything below it. Because the topology is held fixed, the analysis can only rearrange cells among the states and lineages seen in normal development: it says nothing about effects on proliferation, cell death, or tree topology.

The nearest-neighbor projection step can carry a node far from any cell that resembles it. In these cases, we take the view that model predictions will be unreliable. We therefore record, at every node that moved, how far it moved, relative to the local spacing of the observed cells around it. That is, the distance from its assigned cell to its own nearest neighbor. We consider a node off-manifold when it moved farther than a fixed multiple of that spacing, and a readout node is off-manifold when any node on its path is. In practice we use an off-manifold ratio of 1.1. When a node drifts off manifold, we treat the trajectory as ``frozen'' in its prior location in the state space.

The genome-wide screen knocks out each gene in turn in the E9.0 records of one replicate embryo, starting from E7.5, with 16 denoising steps. The heart screen of \textbf{Fig.~\ref{fig:f5_pert}d,e} takes the $21$ genes of a curated congenital heart disease panel~\cite{Sierant2025-lx} that are among the highly variable genes and runs a full knockout on the same records, with 64 denoising steps. Both are read out at E9.0.

For the map of \textbf{Fig.~\ref{fig:f5_pert}a} we summarize each knockout by the mean state embedding of its readout leaves minus the control's, keeping the genes with at least $200$ readout leaves, and embed those differences by UMAP with $15$ neighbors and a minimum distance of $0.1$. Each gene is colored by the lineage whose share of the readout it depletes most. The insets show, for the two transcription factors that most specifically deplete each of six lineages, the predicted nearest-neighbor distortion (defined below) at the E9.0 reference cells.

\textbf{Fig.~\ref{fig:f5_pert}b} compares the screen with curated mouse phenotypes. We take the Mouse Genome Database (MGD) annotations~\cite{Eppig2015-bc} of the screened genes and map each phenotype term to the cell type it implicates, giving one gene and cell type pair per annotation. A pair is scored by the fraction of that cell type's control leaves that are perturbed, over cell types with more than $50$ control cells. The background draws $10{,}000$ pairs, matched to the annotated pairs' cell-type distribution, from $40$ random genes carrying no annotation, and coarse anatomical contexts with more than $100$ pairs are shown separately. For \textbf{Fig.~\ref{fig:f5_pert}c} each screened gene with a human ortholog takes the minimum, over the ortholog's canonical protein-coding transcripts, of the loss-of-function observed-over-expected upper bound fraction~\cite{Lek2016-wm, Karczewski2020-ki}. Its predicted effect is the $L_2$ norm over cell types of the $\log_2$ ratio of knockout to control readout counts, with a pseudocount of one, and the $100$ most and $100$ least constrained genes are compared by a Mann--Whitney test.

The heart panels score each gene by how much of the E9.0 readout its knockout moved. The heart share is the percentage by which the knockout reduces the on-manifold readout cells of the heart cell types relative to the control, the total share is the same quantity over all readout cells, and the specificity is the $\log_2$ ratio of the two, each share floored at one percent. Timing uses the shared topology: a readout cell is shifted when the knockout assigned it a different cell than the control did, or took it off the manifold, and its onset is the time of the most ancestral node on its path that is still shifted. A gene's shift time is the mean onset over its shifted readout cells.

\paragraph{Nearest-neighbor distortion}
\label{sec:nn-distortion}
The nearest-neighbor distortion of a reference cell is the log ratio of its distance to the nearest cell of a perturbed embryo to its distance to the nearest cell of another reference embryo. The two draws have equal size and the ratio is averaged over five draws and centered on the scope's median. The predicted distortion draws the perturbed cells from the simulated knockout of a third embryo, neither the reference cell's own nor the denominator's. The measured distortion draws them from a perturbed embryo (Section~\ref{sec:crispant-data}), and each perturbed embryo is scored against the same reference denominator.

For \textbf{Fig.~\ref{fig:f5_pert}a}, distances are taken in the scVI state space within the E9.0 reference cells outside the blood and extraembryonic endoderm, and an off-manifold leaf of the genome-wide screen is dropped. The \texttt{Tbx5} comparisons take distances in the integrated latent space of Section~\ref{sec:crispant-data}, within the cells of the heart tube (first and second heart field, endocardial cells and epicardium). The \texttt{Tbx5} analyses use a ratio of $1.1$ and hold an off-manifold trajectory at its last on-manifold state. \textbf{Fig.~\ref{fig:f5_pert}l} visualizes both measured and predicted distortions, and \textbf{Fig.~\ref{fig:f5_pert}m} reports their Pearson correlation over its reference cells.

\subsubsection{Tbx5 trajectory and program analyses}
\label{sec:tbx5-analysis}
The \texttt{Tbx5} analyses of \textbf{Fig.~\ref{fig:f5_pert}f--i} use a knockout screen run as in Section~\ref{sec:perturbation-analysis} on the E9.5 trees of three replicate embryos, started with cells alive at E8.0 with 16 denoising steps, the off-manifold ratio \(1.1\) and off-manifold leaves frozen in their last on-manifold states.  The shift curves of \textbf{Fig.~\ref{fig:f5_pert}f} give, on a grid from E7.0 to E9.5 in steps of \(0.05\) day, the share of a subtype's leaves whose trajectory is shifted by that time, for subtypes with at least \(100\) readout leaves, naming the seven most shifted. Note that the simulation begins with cells alive at 8.0, which includes nodes with earlier node times which have child nodes with time $>8.0$.

A node expresses \texttt{Tbx5} when the log-normalized expression of its assigned cell exceeds \(0.10\). The example subtree of \textbf{Fig.~\ref{fig:f5_pert}g} is one founder's subtree of a control record, pruned to the forelimb field and heart leaves and the internal nodes above them; among the subtrees with at least three leaves of each fate, at most \(55\) leaves after pruning and a root below the expression threshold, we draw the one whose two fates are largest and most evenly balanced. For \textbf{Fig.~\ref{fig:f5_pert}h} we take every founder's subtree that produces both fates and record, along each leaf's path, the split, the time of the last ancestor whose descendants include both fates, and the onset, the first time \texttt{Tbx5} is expressed on the path; the panel shows for each fate the cumulative share of its trajectories past each event.

The posterior second heart field program of \textbf{Fig.~\ref{fig:f5_pert}i} is a linear discriminant direction, fit with a shrinkage estimate of the within-class covariance, between the lineage-traced cells carrying the atlas's anterior and posterior second heart field labels at the stages that carry both, E8.0 to E9.0, so no cell the panel scores enters the fit. Its features are the log-normalized expression of the non-mitochondrial genes detected in at least \(5\%\) of the fit cells, standardized on those cells. The E9.5 heart-tube reference cells of the first and second heart fields are standardized the same way, projected on the direction, and the projection standardized over them. The three transcription factors with the largest loadings on each side name the axis, and the panel plots the score against the predicted nearest-neighbor distortion of the knockout screen at each reference cell (Section~\ref{sec:nn-distortion}).

\beginsupplement
\renewcommand{\thesection}{S\arabic{section}}
\setcounter{algorithm}{0}
\setcounter{proposition}{0}
\setcounter{lemma}{0}
\setcounter{assumption}{0}
\renewcommand{\thealgorithm}{S\arabic{algorithm}}
\def\theHalgorithm{S\arabic{algorithm}}
\renewcommand{\theproposition}{S\arabic{proposition}}
\renewcommand{\thelemma}{S\arabic{lemma}}
\renewcommand{\theassumption}{S\arabic{assumption}}

\supplementtitle
\supplementtoc

\clearpage

\section{Supplementary results: supporting analyses}
\label{supp:supporting}
\subsection{Additional results on synthetic and semi-synthetic benchmarks}

\paragraph{Decomposing error across transitions.} 
In \textbf{Fig.~\ref{fig:synthetic}a,b}, we show that lineage inpainting models are able to leverage tree topology to correctly infer transitions. The metrics we show are only for the transition where differences between models are very distinct ($7 \to 8$). Here we show the rest of the transitions (\textbf{Supplemental Fig.~\ref{fig:supp_synthetic}a}).

Decomposing performance across transitions points toward the source of multimarginal FM's poor performance. Because this multimarginal FM baseline fits a deterministic cell-state ODE (akin to \cite{Qiu2022-ms, Tong2020-ix}), it is unable to model bifurcations. So at transitions where the output is unimodal from a certain state ($6\to7, \ 8\to9$), multimarginal FM performs well. But in the other transitions where there is a bifurcation ($5\to6, 7\to8$) it performs poorly.

\paragraph{Additional baselines (including non-generative approaches).} 
Throughout the main text, we benchmark only against generative baselines: multimarginal FM, GENOT-Sinkhorn and GENOT-FGW. Here, we also compare to two established non-generative baselines. In order to compare these to generative baselines, we use a nearest neighbor projection. To get the ancestor (or similarly, descendant) of an input cell, we identify the closest cell in the training dataset, and output its discretely coupled ancestor. In a similar spirit, we also compare to lineage inpainting implemented with two ``non-parametric'' imputation models. The first is mean pooling, or in other words the ``vanilla'' ancestral reconstruction which is used as the additive baseline in the tree-structured transformer. The second is what we refer to as NN-snapped imputer, which replaces the mean-pooled ancestral reconstruction with its nearest observed cell.

Including these baselines in the semi-synthetic experiments, we find that all four baselines perform more poorly than lineage inpainting with a learnable imputation function across both semi-synthetic benchmarks. WaddingtonOT \cite{Schiebinger2019-kc} and moslin \cite{Lange2024-ts} exhibit stronger performance for recovering the correct cell type coupling over their generative counterparts. This suggests that the poor performance of the GENOT models is not always attributable to the optimal transport coupling itself.

This has a clear explanation in the context-copying benchmark. GENOT models are fundamentally incapable of recovering this, as they never receive the cousin context (which determines the ancestral state) as input. Even provided with the ground truth coupling, a flow matching model fit to generate ancestral states conditional on a single descendant state will be unable to correctly match the coupling, since the true state depends on the state of the cell's cousins.

\paragraph{Consistency across performance metrics.} 
Throughout our synthetic evaluations, we use performance metrics which evaluate the extent to which generated states match the discrete clusters of the true states. In the semi-synthetic experiments, these correspond to cell types, and in the synthetic experiments, these correspond to the attractors of the Ornstein-Uhlenbeck equations. Here, in \textbf{Supplemental Fig.~\ref{fig:supp_synthetic}d}, we compare the fate accuracy to distributional divergences which do not discretize the state space. We find that broadly the two metrics are in agreement ($\rho \approx -0.63$ across multiple divergences). The non-parametric imputation models (mean pooling and nearest-neighbor snapping) are outliers: they can achieve strong fate accuracy without closely matching the correct distribution in state space. This is understandable: the hierarchical means can be close to the correct cell type, but far from any realistic cell.

\begin{figure}[tp]
  \centering
  \includegraphics[width=\linewidth]{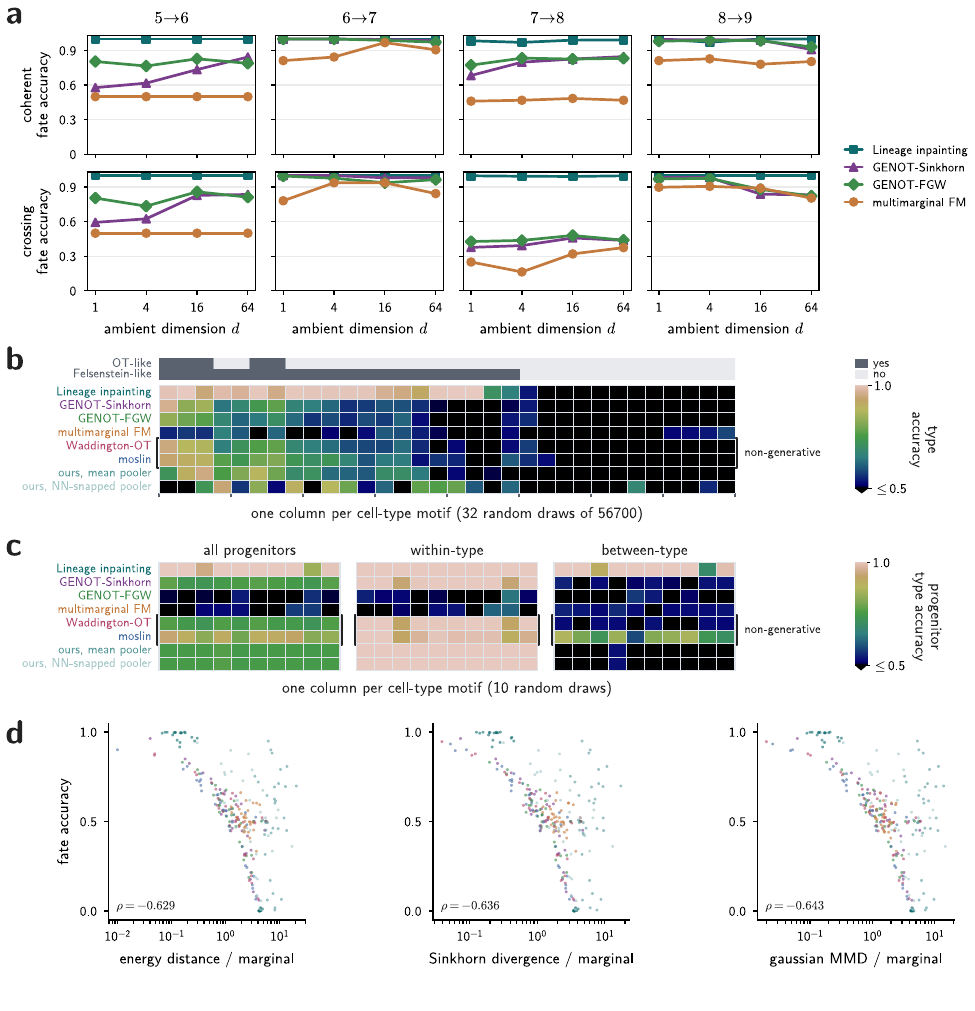}
  \caption{\textbf{Additional results on synthetic and semi-synthetic benchmarks.}
  \textbf{(a)} Fate accuracy of reconstructions across coherent and crossing dynamics from \textbf{Fig.~\ref{fig:synthetic}}, decomposed by transition.
  \textbf{(b)} Type accuracy of model predictions from random cell type motif benchmark of \textbf{Fig.~\ref{fig:synthetic}}, including additional baseline methods. WaddingtonOT and moslin are non-generative methods, here we evaluate using the discrete coupling.
  \textbf{(c)} Same as (b), but for the cell-type context motifs.
  \textbf{(d)} Comparison of fate accuracy metrics with distributional metrics, on the random cell type motif benchmarks. MMD refers to maximum mean discrepancy. Colors match panel (c).
  }
  \tocfigure{Additional results on synthetic and semi-synthetic benchmarks}
  \label{fig:supp_synthetic}
\end{figure}

\subsection{Inspecting the attention heads of a lineage inpainting model}

In the architecture introduced in \textbf{Box~\ref{box:attention}}, there is a parameter $\gamma$, which encodes (per head) the locality of the attention matrix. For negative $\gamma$, nodes attend more to their local neighborhood on the tree, while for positive $\gamma$ nodes attend more to distant relatives. After fitting a model on real data, we inspected the learned $\gamma$ values (\textbf{Supplemental Fig.~\ref{fig:supp_attn}a}). We find that there are both negative and positive $\gamma$ values, but the highest magnitude $\gamma$ (by far) is negative. We also find that the largest positive $\gamma$ values ($\gamma > 1$), which may correspond to heads processing global structure on the tree (long-range dependence), are exclusively present in the later layers of the transformer (layers 3 and 4).

When visually inspecting the attention matrices, we noticed a pattern that the model attends more to real cells (the input condition) during the beginning of the flow. This is sensible, as at early timesteps the unmasked nodes supply the observed states, while the masked nodes receive noise around the displacement state. At later time steps, attending between generated nodes is sensible to capture dependence between child states. For example in asymmetric divisions, such a mechanism would be necessary. We quantify this across heads in \textbf{Supplemental Fig.~\ref{fig:supp_attn}b}.

\begin{figure}[tp]
  \centering
   \includegraphics[width=\linewidth]{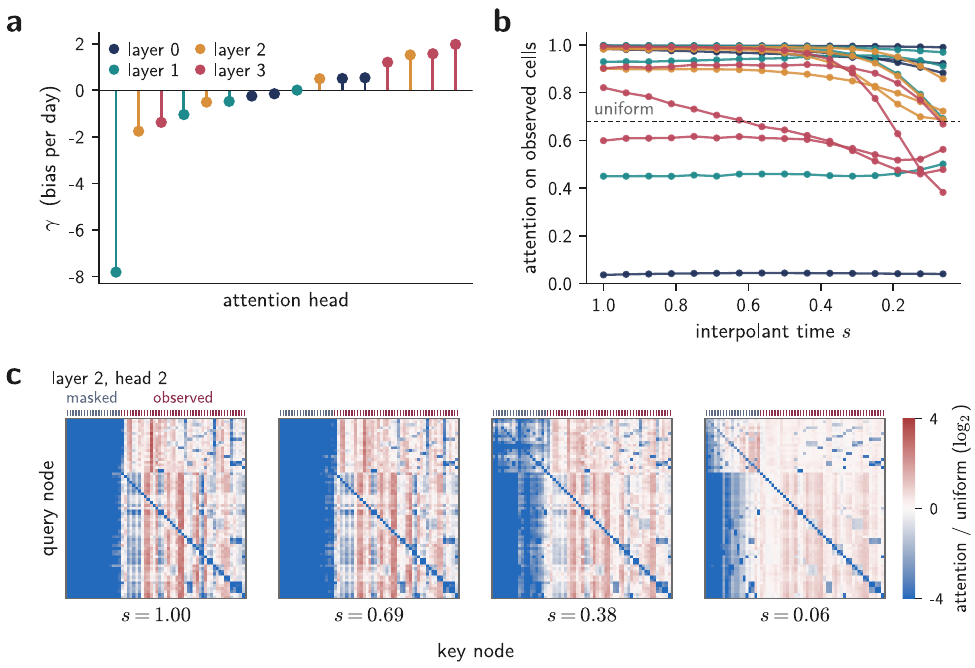}
  \caption{\textbf{Inspecting the learned attention heads.}
  \textbf{(a)} Learned patristic distance bias scale ($\gamma$), across each head in the sampler of a lineage inpainting model fit to \citet{colgan}. One dot per each of four heads across four layers.
  \textbf{(b)} Fraction of attention paid to observed (input condition) nodes, across interpolant times, across each head. Hue matches layers from panel (a). 
  \textbf{(c)} For an example input subtree, the normalized attention matrix of attention head 2 from layer 2, at different interpolant times. Columns represent key nodes, and have ticks hued by observed vs. masked.
  }
  \tocfigure{Inspecting the learned attention heads}
  \label{fig:supp_attn}
\end{figure}

\subsection{Additional results on model evaluation on the lineage-traced atlas}

All metrics in \textbf{Fig.~\ref{fig:f3_summary}b} are computed on the training data --- that is, the same trees that the models were fit using. To test whether these results hold on unseen trees, we evaluated a set of models trained on a dataset holding out 11 of the 56 clone-specific trees from training: two clones per stage from E7.5 to E9.5, each pair taken from different embryos, and one of the six E10.0 clones. Together these make up 23\% of all leaves.

We trained lineage inpainting on the remaining 45 trees, with the same settings as the model in the main text. On the same split we trained GENOT-Sinkhorn with $\varepsilon \in \{0.01, 0.05, 0.2\}$ and GENOT-cbFGW with fused penalty $\in \{0.1, 1, 10\}$. The main text uses $\varepsilon = 0.05$ and a fused penalty of 1. We then computed the same metrics of \textbf{Fig.~\ref{fig:f3_summary}b}, only over the 11 held out trees. As a point of comparison, we also scored the models from the main text, which were trained on all 56 trees, on the same held-out trees.

On held-out trees, lineage inpainting has the lowest clonal distortion, marginal distortion, lineage oscillation, germ-layer crossing and Hox-code distortion (\textbf{Supplemental Fig.~\ref{fig:supp_holdout}}).   Compared with the model trained on all trees, most metrics are similar on the held-out model, except linkage distortion, which rises from 0.13 to 0.28. This may be due to the higher variance in the metric.

Across all the GENOT parameter choices tested, model metrics remain relatively similar, with the exception of increasing the fused penalty of GENOT-cbFGW to 10. In this case, models appear to improve substantially over the main text models, but are still less performant than the lineage inpainting models. 

\begin{figure}[tp]
  \centering
  \includegraphics[width=\linewidth]{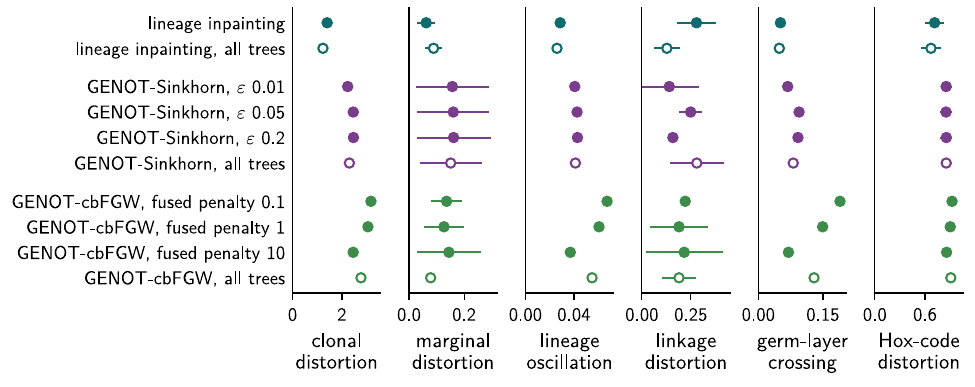}
  \caption{\textbf{Model evaluation on held-out lineage trees.}
  The metrics of \textbf{Fig.~\ref{fig:f3_summary}b}, computed over the 11 clone trees that were held out. Lower is better for every metric. Filled circles are models trained on the remaining 45 trees, while open circles are the models from the main text. Points show the mean, and bars show the 95\% interval over subtrees (clonal), stage pairs (marginal), stage transitions (linkage) or (cell type, Hox gene) pairs (Hox-code).
  }
  \tocfigure{Model evaluation on held-out lineage trees}
  \label{fig:supp_holdout}
\end{figure}

\subsection{Comparing lineage inpainting inferred transitions with prior literature}

In \citet{Qiu2024-cq}, the authors begin from a large set of $>10^3$ possible cell state transitions based on transcriptional similarity as measured using a $\sim12$ million cell scRNA-seq atlas, and manually curate these based on prior support in literature, resulting in a proposed cell type transition graph with $452$ edges. This graph remarkably recapitulates the vast majority (though presumably not all) of known developmental state transitions from zygote to birth. Here, we compare the cell state transitions learned by lineage inpainting -- purely from a single lineage-traced atlas -- to those annotated using prior literature in \citet{Qiu2024-cq}.

We first assess the extent to which the lineage inpainting model is consistent with this literature-informed map at the cell type level. Beginning from the aggregated cell type transition map in \textbf{Fig.~\ref{fig:f3_summary}a}, we ask if each of the edges also appears in the map of \citet{Qiu2024-cq}. Of the 272 total edges, 22 involve cell types with no clear correspondence to the annotations of \citet{Qiu2024-cq}. Of the remaining, 195 edges could be directly mapped to edges in \citet{Qiu2024-cq} or were ``identity'' edges between the same cell type at two timepoints. The remaining 55 edges were inconsistent between the two maps. Of the mismatches, 22 are transitions between similar cell types (which we denote ``boundary''), and could arise from labeling artifacts. For example certain cell types may not have clear boundaries between them, and rather a continuous gradient in transcriptional states. We identify the similar cell types as those that either descend from the same parent node or are classified as ``spatial continuity'' edges in \citet{Qiu2024-cq}. 

One apparent failure mode of the lineage inpainting models is that they can occasionally assign transitions in the wrong direction. For example, the model suggests that axial progenitors can arise from presomitic mesoderm, which is highly unlikely. We identify 13 such cases, where an edge exists in \citet{Qiu2024-cq}, but is found in the reverse direction in the lineage inpainting derived map. However, these incorrect ``reverse'' transitions are not particularly dramatic in propensity. he most prevalent reversed transition (hindlimb field $\to$ somatic mesoderm) appears only 893 times across $\sim 10^6$ inferred transitions.

Some mismatches fall into none of these categories. All have low propensity, appearing less than $2\times10^3$ times, and several appear plausible but unannotated in \citet{Qiu2024-cq}. Some are due to differences in resolution. For example, the top mismatch predicted by lineage inpainting is epiblast $\to$ paraxial mesoderm. This edge does not exist in \citet{Qiu2024-cq}. Instead, they have the more finely annotated path: epiblast $\to$ primitive streak $\to$ nascent mesoderm $\to$ paraxial mesoderm.

Others appear to be due to possible errors in the map of \citet{Qiu2024-cq}. For example, the second most abundant mismatch is epiblast $\to$ axial progenitor. In the map of \citet{Qiu2024-cq}, the axial progenitors are labeled as caudal lateral epiblast and neuromesodermal progenitors (CLE and NMPs), and their origin is given as surface ectoderm $\to$ CLE and NMPs. This disagrees with prior literature which suggests that axial progenitors arise from the epiblast \cite{Henrique2015-hb}.

Much of the disagreement between the two cell type level maps is concentrated in low-propensity transitions. Evaluating agreement between lineage inpainting inferred transitions and the \citet{Qiu2024-cq} map at the single-cell level, we find that 93\% of transitions are consistent (\textbf{Supplemental Fig.~\ref{fig:supp_summary}}). Of the remaining transitions, the majority come from boundary transitions, which are very sensitive to annotation choices.

\begin{figure}[tp]
  \centering
  \includegraphics[width=\linewidth]{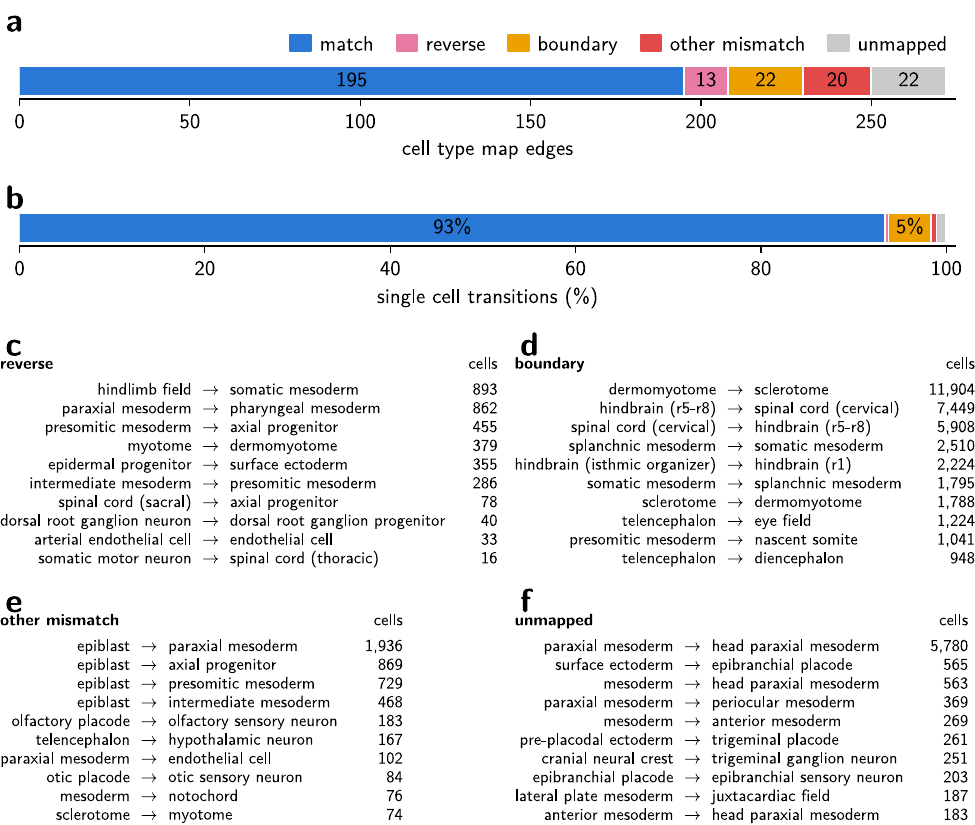}
  \caption{\textbf{Comparing lineage inpainting models with \citet{Qiu2024-cq} manually curated transitions.}
  \textbf{(a)} Agreement with \citet{Qiu2024-cq} transition map, at the aggregate level of edges between cell types. \textit{Match} denotes an edge in both graphs, including identity edges between the same cell type label at different times, \textit{reverse} denotes an edge in \citet{Qiu2024-cq} which lineage inpainting infers the opposite direction, \textit{boundary} denotes an edge between cell types which are labeled as spatially adjacent in \citet{Qiu2024-cq}, or nodes which share an ancestor in \citet{Qiu2024-cq}, \textit{unmapped} edges involve a cell type with no corresponding annotation in \citet{Qiu2024-cq}.
  \textbf{(b)} Same as (a), but now measured at the single cell transition level rather than the aggregate cell type level.
  \textbf{(c) - (f)} List of top 10 most frequent transitions from each non-match category.
  }
  \tocfigure{Comparing lineage inpainting inferred transitions with Qiu et al. 2024 curated transitions}
  \label{fig:supp_summary}
\end{figure}

\subsection{Additional results on transcriptional convergence analysis}

\paragraph{Convergence and OT distortion scores across cell types.} In \textbf{Fig.~\ref{fig:f4_convergence}b}, we quantify the convergence score and OT discrepancy averaged over cells with a particular cell type label. Here, in \textbf{Supplemental Fig.~\ref{fig:supp_conv}a}, we tabulate the cell types with the highest and lowest convergence scores. The top ten most convergent cell types are all either neurons or endothelial cells. The bottom ten cell types do not have as clear a pattern, but several are broader progenitors which are more prevalent at earlier timepoints, such as forebrain and spinal cord (see \textbf{Fig.~\ref{fig:f3_summary}a} for abundance over time).

\paragraph{Sensitivity to normalization.} When computing the OT discrepancy and convergence score in \textbf{Fig.~\ref{fig:f4_convergence}a,b}, we use normalization for the local spacing between cells. That is, for each cell, we divide its ``raw'' OT discrepancy (ancestor L2) or convergence score (covariance matrix trace over ancestors of transcriptional neighbors) by the cell's own nearest-neighbor distance. The goal is to ensure that the convergence scores and OT discrepancy are not simply capturing local transcriptional variation between cell types. Here, we show that the general conclusions of this figure do not rely on this particular normalization choice. In \textbf{Supplemental Fig.~\ref{fig:supp_conv}b}, we show that without normalization, the OT discrepancy and convergence score are more tightly correlated, and endothelium and neurons remain the most convergent cell types.

\paragraph{Core genes of transcriptional programs.} Across \textbf{Fig.~\ref{fig:f4_convergence}d,e} we illustrate program reuse in convergence events, relying on transcriptional programs defined for (1) the paraxial to lateral plate axis, (2) the anterior to posterior axis, (3) the endothelial cell identity axis, (4) the spinal cord interneuron identity axis and (5) the somatic motor neuron identity axis. We define all of these programs using the same procedure (linear discriminant analysis over the cell type labels). In \textbf{Supplemental Fig.~\ref{fig:supp_conv}c} we show the expression of the genes with highest magnitude loadings, across the relevant cell types. Many are classical cell type markers.

\paragraph{Threshold sensitivity for memory analysis.} In \textbf{Fig.~\ref{fig:f4_convergence}f,g}, we quantify the number of genes which are ``memory'' genes which retain a signature of origin cell types compared to genes which are ``erased''. To define a gene as persistently differentially expressed, we have to define cutoffs. Here, we ask if our qualitative observation -- that endothelium retains fewer memory genes than neurons -- is sensitive to the choice of cutoff between memory and erased. In \textbf{Supplemental Fig.~\ref{fig:supp_conv}d}, we find that regardless of choice of cutoffs, the ordering of cell types by erased fraction remains consistent.

\begin{figure}[tp]
  \centering
  \includegraphics[width=\linewidth]{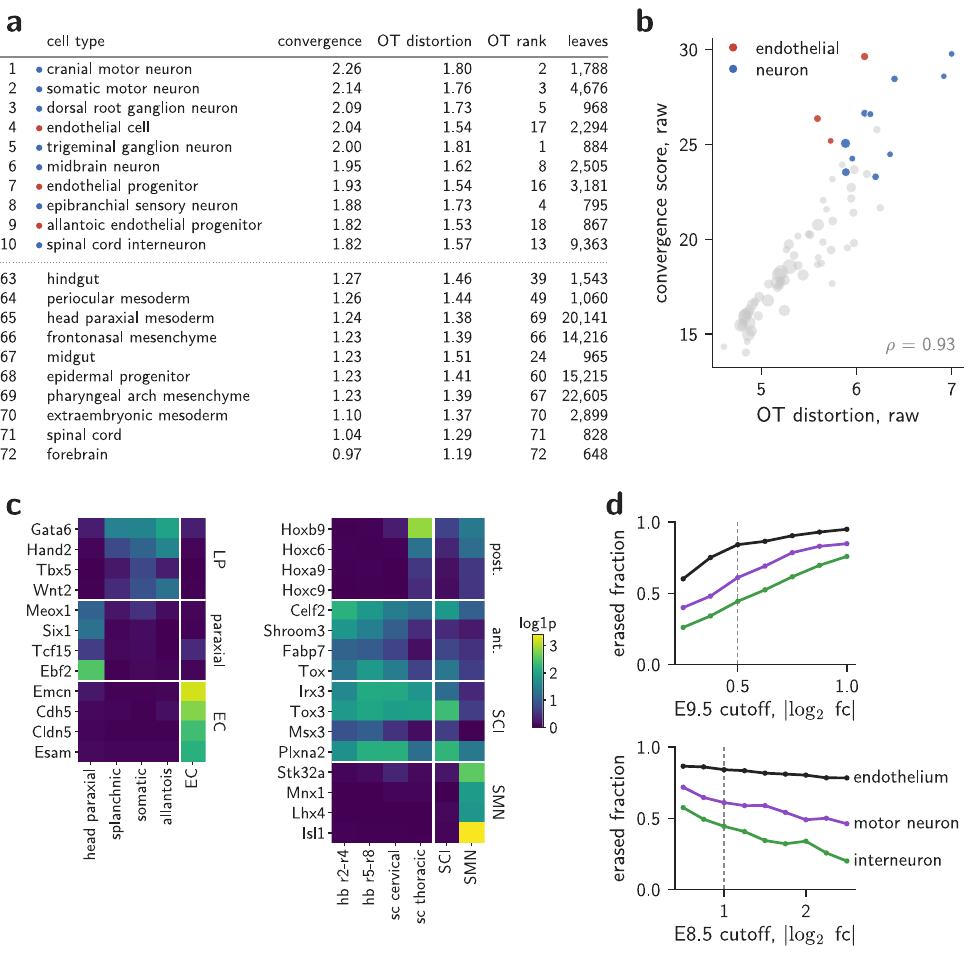}
  \caption{\textbf{Additional results on transcriptional convergence analysis.}
  \textbf{(a)} Table of top 10 and bottom 10 cell types as ranked by convergence score. Convergence and OT discrepancy refer to normalized scores from \textbf{Fig.~\ref{fig:f4_convergence}b}. \textit{Leaves} refers to number of observed cells of each type from which scores are computed.
  \textbf{(b)} Same as \textbf{Fig.~\ref{fig:f4_convergence}b}, except where scores have not been normalized by local nearest-neighbor distances.
  \textbf{(c)} Mean expression (per cell type) of each of the top four genes per transcriptional program. \textit{LP} refers to lateral plate direction, \textit{paraxial} refers to paraxial mesoderm direction, \textit{post.} refers to posterior direction, \textit{ant.} refers to anterior direction. \textit{EC} = endothelial cell, \textit{SCI} = spinal cord interneuron, \textit{SMN} = somatic motor neuron. 
  \textbf{(d)} Fraction of divergent genes considered erased, per cell type, across different divergence (E8.5) and memory (E9.5) cutoffs.
  }
  \tocfigure{Additional results on transcriptional convergence analysis}
  \label{fig:supp_conv}
\end{figure}

\subsection{Additional results on perturbation effect comparisons with MGD annotations}

In \textbf{Fig.~\ref{fig:f5_pert}c}, we show that lineage inpainting predictions of perturbation effects in mice are significantly higher for genes with high mutational constraint in humans compared to those that are unconstrained. It has been previously observed by \citet{Lek2016-wm} that the bulk gene expression (in humans) is typically higher for mutationally constrained genes. Here we study the relationship between gene expression within cell types and lineage inpainting model predictions.

We first take all (gene, cell type) pairs present in the MGD annotations analyzed in \textbf{Fig.~\ref{fig:f5_pert}b} and ask if the mean expression of a gene within a cell type is related to the lineage inpainting predicted effect of knocking out a gene on a cell type. We find that the predicted depletion of a cell type by a gene knockout is indeed correlated with the expression of that gene within that cell type (\textbf{Supplemental Fig.~\ref{fig:supp_mgi}a}). 

However, there are a number of genes with low or no expression in a cell type whose knockout is predicted to deplete that cell type. For each (gene $g$, cell type $c$) pair, we rank the cell types by mean expression of the gene $g$, and compute the rank of $c$. We repeat the same ranking process based on predicted depletion rather than expression, and identify the pairs with the highest discrepancy (\textbf{Supplemental Fig.~\ref{fig:supp_mgi}b}).

Inspecting these model predictions, we find that the gene -- while not expressed in the cell type itself -- is expressed in lineage inpainting's inferred ancestors of the cell type. We visualize three examples in \textbf{Supplemental Fig.~\ref{fig:supp_mgi}c}, which show a pattern where a transcription factor is transiently expressed in the progenitors of a cell type.

\begin{figure}[tp]
  \centering
  \includegraphics[width=\linewidth]{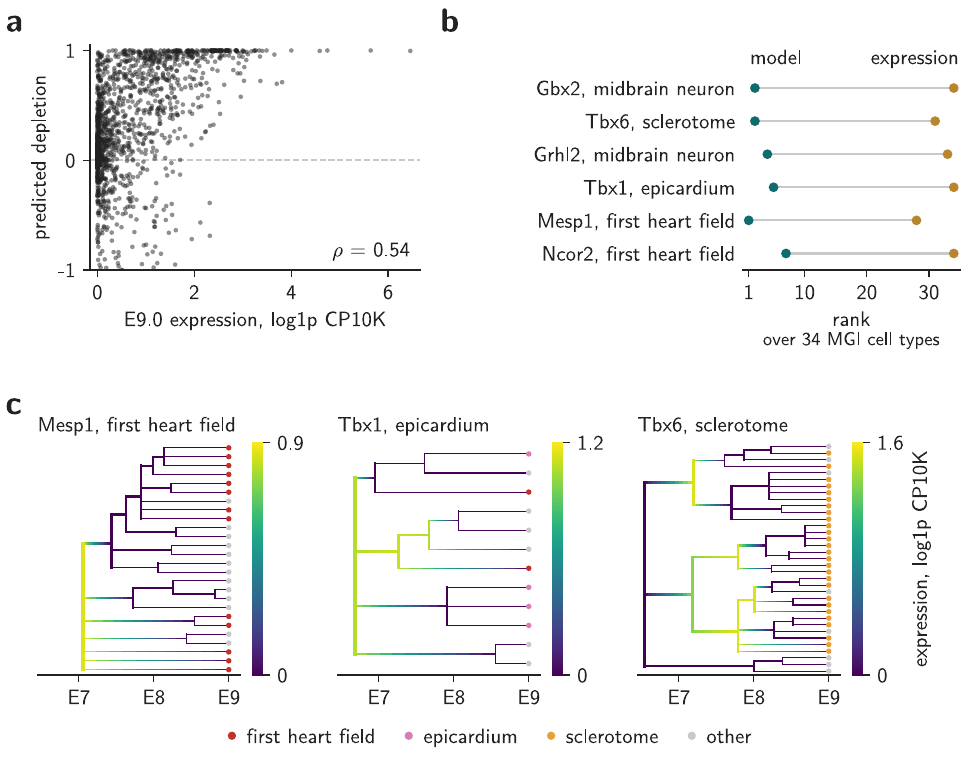}
  \caption{\textbf{Additional results on perturbation effect comparisons with MGD annotations.}
  \textbf{(a)} Across all MGD annotated (gene, cell type) pairs, the predicted depletion of the cell type under gene knockout against the mean expression of the gene within the cell type. 
  \textbf{(b)} The top six transcription factors with low cell type mean expression, and high predicted depletion. Chosen by difference between rank across 34 cell types with corresponding MGD annotations.
  \textbf{(c)} Representative trees with inferred expression along edges for cell type gene pairs with low expression and high predicted depletion.
  }
  \tocfigure{Additional results on perturbation effect comparisons with MGD annotations}
  \label{fig:supp_mgi}
\end{figure}

\begin{figure}[tp]
  \centering
  \includegraphics[width=\linewidth]{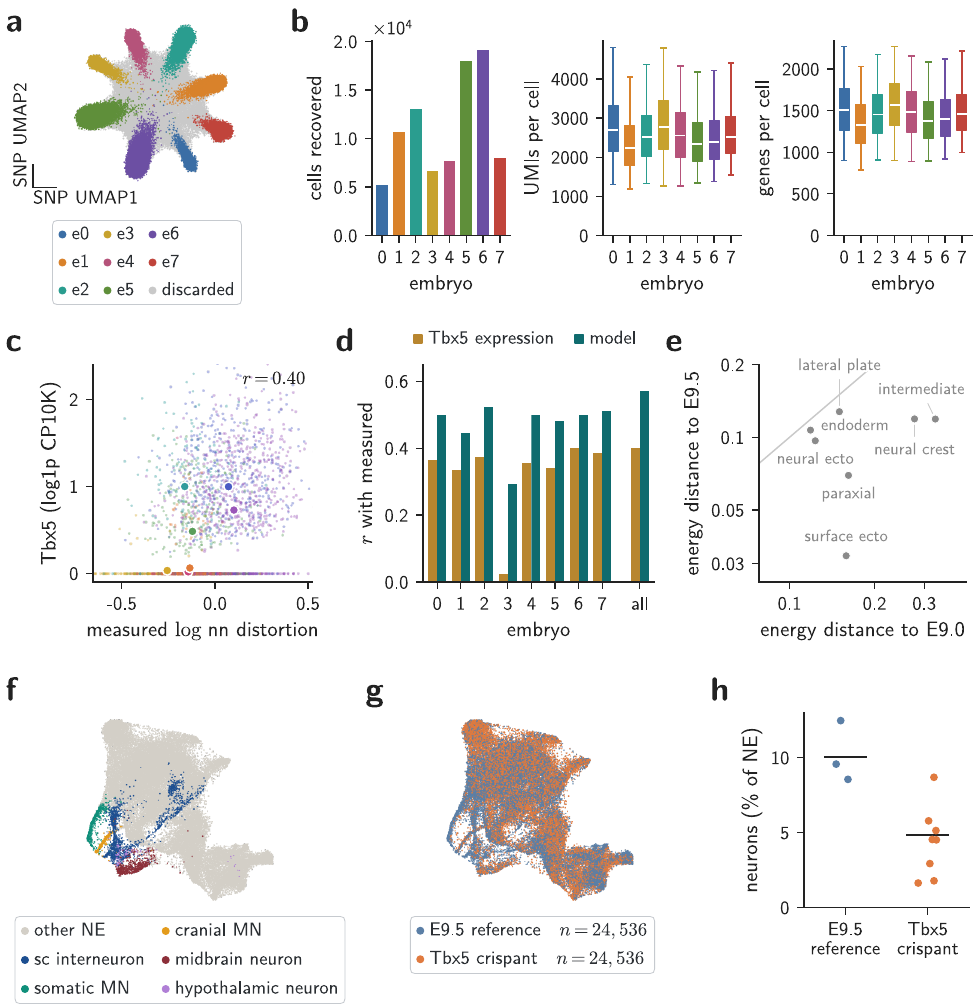}
  \caption{\textbf{Additional results on Tbx5 mutant profiling.}
  \textbf{(a)} UMAP of cells, prior to quality filtering, based on SNP profile, hued by inferred embryo identity.
  \textbf{(b)} Number of cells, counts per cell, and genes per cell, across embryos. UMI = unique molecular identifier.
  \textbf{(c)} Correlation of measured nearest-neighbor distortion (per reference E9.5 cell) and Tbx5 expression.
  \textbf{(d)} Comparison of model predicted distortion agreement and reference cell Tbx5 expression correlation, across embryos.
  \textbf{(e)} Energy distance of Tbx5 crispant cells to E9.0 reference and E9.5 reference cells of corresponding coarse cell type label.
  \textbf{(f)} UMAP of neuroectoderm cells from E9.5 reference and Tbx5 crispant. Neurons hued, all other cells grey.
  \textbf{(g)} Same as (f), hued by dataset rather than cell type.
  \textbf{(h)} Fraction of neuroectoderm labeled as a neuron, across reference and crispant embryos.
  }
  \tocfigure{Additional results on Tbx5 mutant profiling}
  \label{fig:supp_tbx5}
\end{figure}

\subsection{Additional results on Tbx5 crispant profiling}

In our experimental measurement of Tbx5 knockout effects, we profiled eight individual embryos. We demultiplexed the cells of the embryos using their SNP profiles as naturally occurring barcodes, similar to previous work \cite{Kijima2026-fd, Dias2026-vh}. Many cells have SNP barcodes that fall into a distinct cluster, while others are ambiguous or resemble doublets (Methods, \textbf{Supplemental Fig.~\ref{fig:supp_tbx5}a}). After this demultiplexing and SNP-based doublet removal pipeline, we recovered between $5\cdot 10^3$ and $2\cdot10^4$ cells per embryo (\textbf{Supplemental Fig.~\ref{fig:supp_tbx5}b}). The embryos had similar transcript sensitivity, with median unique molecular identifiers (UMIs) of roughly 2500 and roughly 1500 genes detected per cell across all eight embryos. 

Similar to in \textbf{Supplemental Fig.~\ref{fig:supp_mgi}a}, we again asked the extent to which perturbation effect (here measured by $\log$ nearest-neighbor distortion) is related to the depletion of a cell state. We compare the expression of \texttt{Tbx5} in each reference cell state to its measured $\log$ nearest-neighbor distortion. We find that they are loosely correlated with $r=0.40$ (\textbf{Supplemental Fig.~\ref{fig:supp_tbx5}c}). In aggregate, this is less predictive than the model predictions shown in \textbf{Fig.~\ref{fig:f5_pert}m}, which has $r=0.57$. This trend is consistent across all 8 embryos individually (\textbf{Supplemental Fig.~\ref{fig:supp_tbx5}d}).

While we measure clear effects for \texttt{Tbx5} perturbation in the lateral plate mesoderm as expected, we also found global effects which were not predicted by the model, nor evident from prior literature. We quantify these developmental delays per tissue, by measuring distributional distance of cell states to the E9.0 and E9.5 reference populations. While all tissues are more similar to E9.5 than E9.0, some tissues appear close to the boundary, suggesting a developmental delay (\textbf{Supplemental Fig.~\ref{fig:supp_tbx5}e}). This delay is particularly clear in the neuroectoderm. The \texttt{Tbx5} crispant data contains very few neurons compared to the E9.5 reference, while the rest of the neuroectoderm is well represented (\textbf{Supplemental Fig.~\ref{fig:supp_tbx5}f--h}).

\section{Supplementary results: choosing an implementation of lineage inpainting}
\label{supp:grid}
The figures below compare 28 implementations of lineage inpainting to understand which modeling choices are most effective in practice. We considered models which vary across the following design choices:

\begin{enumerate}
  \item The tree features used in the alignment loss.
  \item The choice to predict raw states or the Felsenstein additive displacement baseline.
  \item Whether we use the imputer or the sampler for ancestral state predictions.
  \item The conditional generative model we use for the sampler. We test a one-step model (essentially re-using the imputer as a sampler), as well as a deterministic flow matching model and a diffusion model (see Section~\ref{sec:diffusion} for details on these two interpolants).
  \item How we allow the gradient to flow through the model, and whether sampler is trained directly on imputer outputs, or on real observed cells.
\end{enumerate}

\paragraph{Tree features in the alignment loss.}
We begin by evaluating whether the different components of our imputer loss lead to measurable improvements in the sampling models we get. We trained the model with an imputer loss that only tries to match the marginal at each time, then adding the path and common ancestor components respectively before adding both. We find that incorporating the path and common ancestor features jointly improves the main coupling metrics at a small cost to the marginal performance, while either feature alone has metric-specific tradeoffs (\textbf{Supplemental Fig.~\ref{fig:grid_aligner}}).

\begin{figure}[tp]
  \centering
  \includegraphics[width=\linewidth]{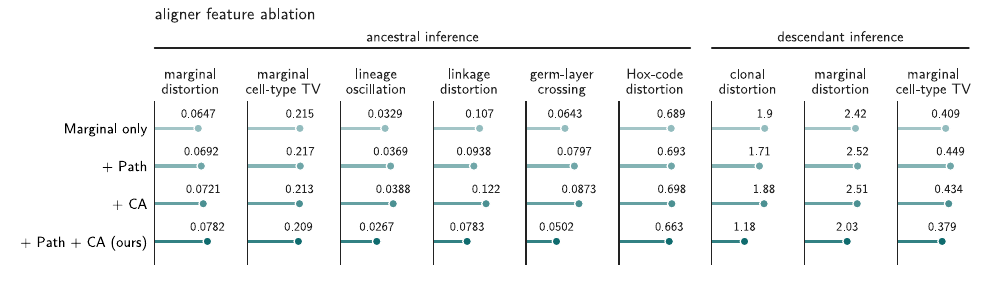}
  \caption{\textbf{The higher-order components of the alignment loss tighten the lineage structure.} The alternating VP-SDE model trained with the state marginal alone, with the path added, with the common-ancestor triple added, and with all three loss components together.}
  \label{fig:grid_aligner}
\end{figure}

\paragraph{Felsenstein displacement.}
The next question we address is whether our inductive bias driven by the additive baseline has any impact on the models we learn. In \textbf{Supplemental Fig.~\ref{fig:grid_tree}} we demonstrate that every metric which measures the quality of the coupling between timepoints is improved by incorporating the Felsenstein inductive bias. For many models the marginals also improve under the baseline. This demonstrates the value of the additive baseline in learning a reasonable coupling.
 
\begin{figure}[tp]
  \centering
  \includegraphics[width=\linewidth]{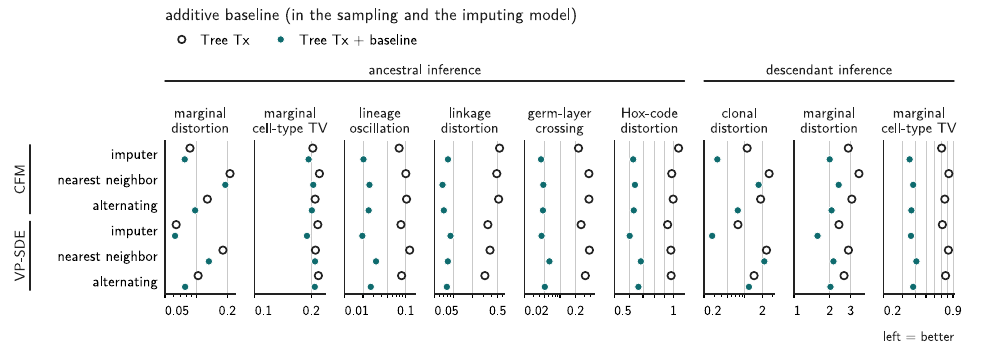}
  \caption{\textbf{The additive baseline improves the coupling metrics.} Using the full grid of models we compare the plain tree transformer and the tree transformer using the additive baseline in Sec. \ref{sec:architecture}. The additive baseline improves essentially all metrics across all models.}
  \label{fig:grid_tree}
\end{figure}

\paragraph{Sampler versus imputer.}
While the sampling model is the only model which can make descendant predictions, in principle we could use either the sampler or the imputer to make ancestral predictions. We next ask, in practice, whether the imputer or the sampler produce better results across our metrics. In \textbf{Supplemental Fig.~\ref{fig:grid_sampler_vs_imputer}} we use both the imputer and the sampler to compute our primary imputer metrics, and plot each imputer against its corresponding sampler. Across the lineage oscillation, distortion, and the germ layer crossing the sampler models are consistently better than the imputers (with the exception of some one-step models). On the Hox code distortion the results are more mixed, but overall it is clear that the samplers are better models on our metrics.
 
\begin{figure}[tp]
  \centering
  \includegraphics[width=\linewidth]{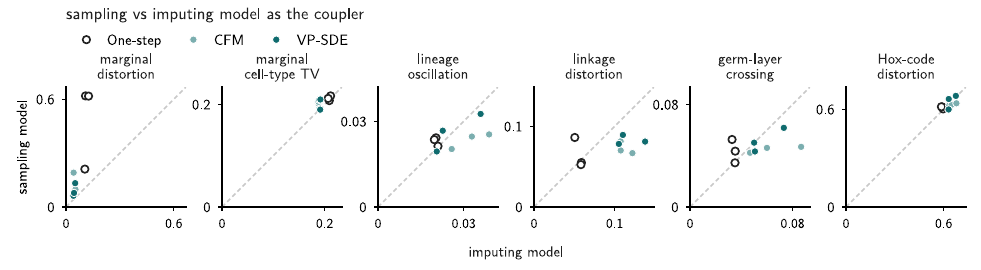}
  \caption{\textbf{Comparing the imputer and sampler models on ancestral inference metrics.} Using the full grid of models we compute all our metrics for both the imputer and the sampler, plotting them against each other. Points below the diagonal demonstrate better performance as a sampler than as an imputer. We use markers to indicate the three different sampler models we test: the one-step model, the CFM, and VP-SDE.}
  \label{fig:grid_sampler_vs_imputer}
\end{figure}

\paragraph{Sampler family.}
We next ask what type of conditional generative model is most effective as a sampler. We consider three families: a one-step model (mirroring the architecture of the imputer), and two stochastic interpolant models (CFM and VP-SDE models described in Section~\ref{sec:diffusion}). In \textbf{Supplemental Fig.~\ref{fig:grid_sampler}} we show that the one step model is worse than the stochastic interpolants across most methods. The CFM and VP-SDE models have very similar performance on the coupling metrics. The VP-SDE models consistently perform better on the distributional metrics such as the marginal distortion.
 
\begin{figure}[tp]
  \centering
  \includegraphics[width=\linewidth]{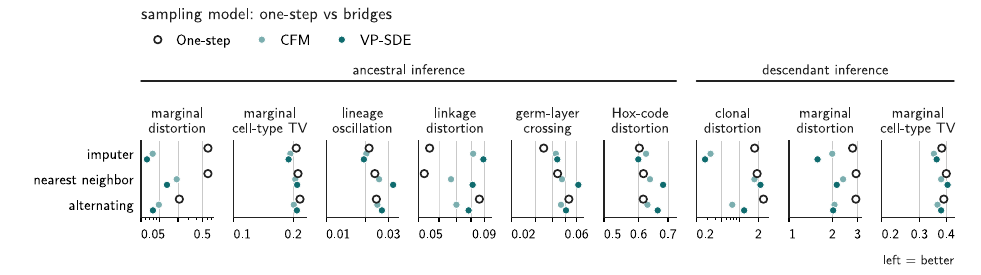}
  \caption{\textbf{Iterative samplers sharpen the distributional metrics over the one-step model.} Each row is an imputer and a pairing; open circles show the one-step model, and colored circles show flow matching (CFM) and variance-preserving diffusion (VP-SDE). Left is better in every panel.}
  \label{fig:grid_sampler}
\end{figure}

\paragraph{Gradient flow and nearest neighbor snapping.}
Next we can consider how the gradient flow through training affects the resulting model. In particular we consider three settings: first, we simply use the imputed states to directly train the sampling model; second, we replace the imputed states with their nearest neighbors (except the snapshot we condition on, which we leave as generated by the imputer); finally we test an alternating version, where we switch between the imputer and the imputer-nearest-neighbor. The disadvantage of the imputer-only model is that the sampling model is trained almost entirely on ``synthetic'' data; on the other hand, the nearest neighbor model does not let the gradients flow through the imputer fully. The alternating model aims for the best of both these worlds. In \textbf{Supplemental Fig.~\ref{fig:grid_stopgrad}} we compare these three settings, including stop-gradient variants of the raw and nearest-neighbor settings. We find that the gradient flows help in almost all cases, and that overall the alternating model maintains the performance of the imputer-only model while enabling training on real data.
 
\begin{figure}[tp]
  \centering
  \includegraphics[width=\linewidth]{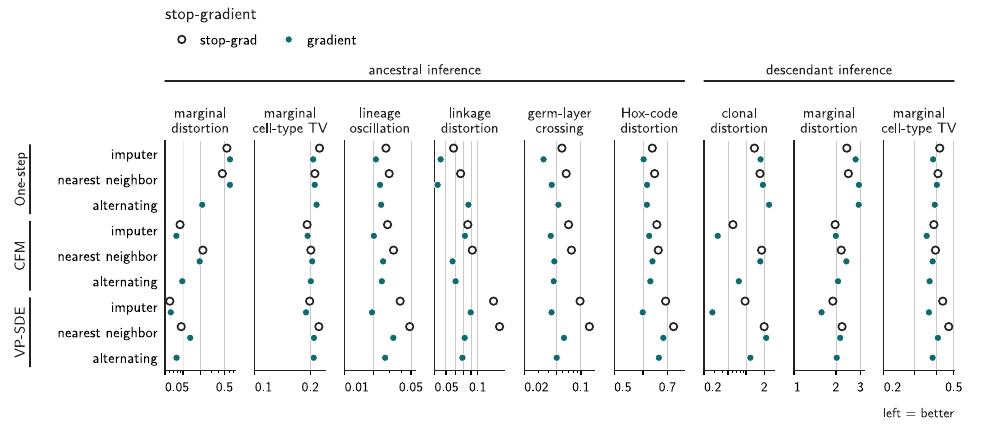}
  \caption{\textbf{Comparing model performance against different training setups and gradient flows.} Using the models from the grid we can compare three methods to leverage the imputer in the sampling model. We compare training on the raw imputed trees, the nearest neighbors, and a training procedure that alternates between the imputer and the nearest neighbors. The raw and nearest-neighbor settings are shown with and without gradients flowing from the sampler to the imputer; the alternating setting retains this gradient.}
\label{fig:grid_stopgrad}
\end{figure}

\paragraph{Model selection.}
Reviewing our results across these different axes, we selected the alternating VP-SDE model with the additive baseline because this seemed to produce the best overall results while enabling training the sampler model on real data.

\paragraph{Comparing the transition maps learned by different models.}
Finally we can examine to what degree these choices affect the cell type transition graph we infer from the model. Using the same procedure as in \textbf{Fig.~\ref{fig:f3_summary}a}, we infer a cell type level transition map for each model. We then compare transition maps using their cosine similarity (\textbf{Supplemental Fig.~\ref{fig:grid_overview}}). We find that the CFM and VP-SDE models using the Felsenstein additive baseline produce qualitatively similar cell-type transitions across our methods. This implies that although there is real variation in metrics across methods, the larger scale results are stable across these architectural choices.

\begin{figure}[tp]
  \centering
  \includegraphics[width=\linewidth]{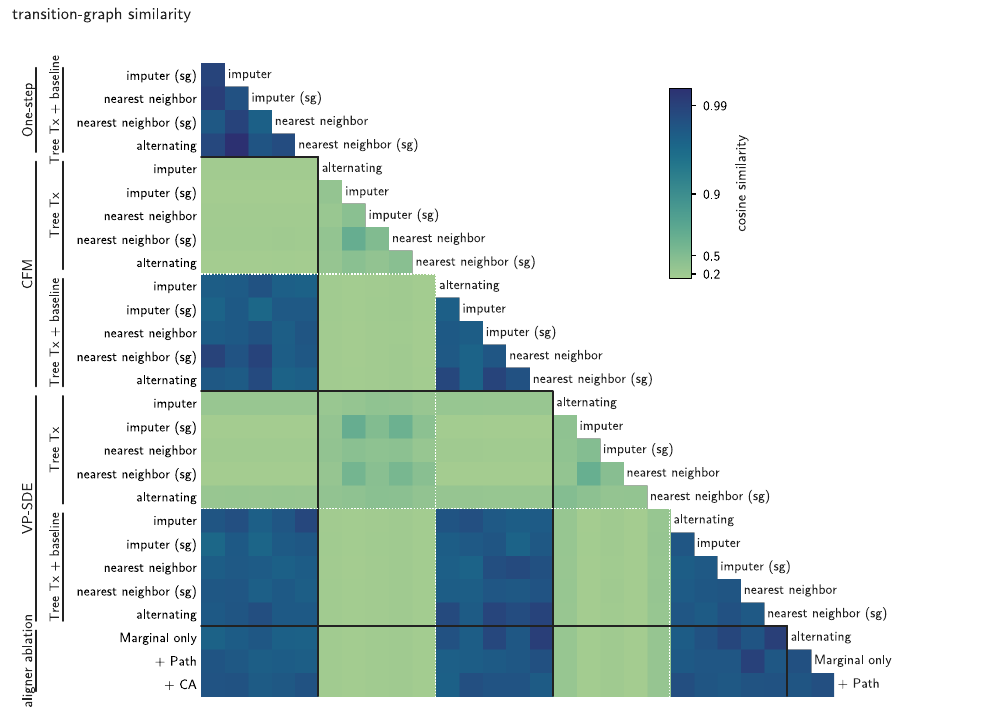}
  \caption{\textbf{Iterative models with the additive baseline infer similar cell-type transition graphs.} We compute the cell-type transition graph as in the main text figures for each model. We then convert these graphs into a vector, where each dimension of the vector represents the edge weight of an edge in the underlying graphs. We compute the cosine similarity between these vectors to represent how similar the induced cell-type transitions are across models. Missing edges have zero weight, and vectors use the union of time-resolved edges. The color scale transforms cosine similarity $c$ as $-\log_{10}\max(1-c,10^{-6})$.}
  \label{fig:grid_overview}
\end{figure}

\FloatBarrier

\section{Supplementary results: theoretical analysis of alignment loss}
\label{supp:theory}

To learn an imputation model, we compare distributions of cell states
and lineage features pooled across collections of completed trees. In particular, we use a distributional divergence to compare the marginal distributions of states, the distribution of states along ancestral paths, and the distribution of states along ancestor-descendant-descendant triples. The goal of this section is to illustrate the principles underlying this procedure.

We start by considering how we can compare two state-decorated trees when we do not know the correspondences between nodes, showing that this reduces to a distributional comparison for any pair of trees. We then show how to lift that construction over pairs of trees to pairs of distributions over trees. This gives us a principled whole-tree discrepancy and connects the energy distance we focus on with analogous divergences, like the Wasserstein-1 distance. 

The final section here concretely traces the trade-offs between this whole-tree loss and our three component loss. The central idea is that, because we have few trees and because the whole-tree loss is very high-dimensional, we can actually do a better job at measuring the discrepancy between two collections of trees by restricting ourselves to these lower dimensional components. This statistical efficiency comes at a price: two whole-tree distributions can agree on all the features but disagree on ``long-range'' dependencies that the features do not measure. In the binary setting, we make this tradeoff exact by characterizing the set of distributions over trees left unresolved by the features and comparing the bias and variance of the resulting estimate of the whole-tree energy. We find regimes in which the projected feature statistic has lower MSE for the whole-tree energy despite this; as tree depth grows, the computational advantage of the feature representation increases.

\subsection{Comparing unordered trees}

Consider two state-decorated trees
\[
\mathcal X=(\mathcal T_{k,d},x),
\qquad
\mathcal X'=(\mathcal T_{k,d},x'),
\]
with the same \(k\)-ary topology \(\mathcal T_{k,d}\), with \(k\geq2\) and depth \(d\). Here \(\mathcal T_{k,d}\) specifies the tree topology and \(x\) and \(x'\) give the states attached to its nodes. If the children were ordered, we could flatten the node states in that order and compare the resulting vectors using Euclidean distance. We assume that this order is not available; instead we compare trees using only their node states and parent--child relationships, treating any permutation of siblings as observationally equivalent.

Throughout this section, we therefore work with trees ``identified up to child order.'' If we knew the ordering of the children between trees we could compare that ordering directly, and this might well be different from the permutation-invariant distances we focus on here. But this kind of ordering is not available in our data, so this is, in some sense, the best we can do.

We will start with an intuitive way to compare our unordered trees, by trying every possible pairing and picking the one that minimizes the distances. We will show that this is actually a metric on the space of unordered trees with fixed topology. We will use this as a point of comparison for our actual loss. Let \(\Pi_{k,d}\) denote the set of orderings we can get by permuting the children of each internal node, and let
\(\operatorname{vec}_{\pi}(\mathcal{X})\) denote the flattened representation of \(\mathcal{X}\) under ordering \(\pi\). A natural distance is
\[
    d_{\min}(\mathcal{X},\mathcal{X}')
    =
    \min_{\pi,\pi'\in\Pi_{k,d}}
    \left\|
        \operatorname{vec}_{\pi}(\mathcal{X})
        -
        \operatorname{vec}_{\pi'}(\mathcal{X}')
    \right\|.
\]
Because only the relative ordering of the two trees matters, we may fix any ordering of the first tree and equivalently write
\[
    d_{\min}(\mathcal{X},\mathcal{X}')
    =
    \min_{\pi\in\Pi_{k,d}}
    \left\|
        \operatorname{vec}(\mathcal{X})
        -
        \operatorname{vec}_{\pi}(\mathcal{X}')
    \right\|.
\]
Two trees that differ only by child ordering are treated as the same unordered tree. This construction is related to the hierarchical tree distance underlying Tree Mover's Distance \citep[Definition 4]{chuang2022tree}: with unit depth weights and squared Euclidean node costs, that recursion gives $d_{\min}^2$ for matching complete rooted topologies, and an outer square root gives $d_{\min}$.

Our loss actually comes from a second, slightly less obvious, construction. Rather than choose the single relative ordering that makes two trees as close as possible, we represent each tree by the whole distribution of all its possible ordered representations. We take a tree \(\calX\) and sample child orderings uniformly at random, defining the distribution of the resulting flattened ordered vectors as:
\[
    P_{\calX,\mathrm{ord}}
    =
    \operatorname{Law}_{\Pi}
    \left\{
        \operatorname{vec}_{\Pi}(\calX)
    \right\},
    \qquad
    \Pi\sim\operatorname{Unif}(\Pi_{k,d}).
\]

The minimum-order distance and energy distance both compare the same distributions. The Wasserstein distance selects an optimal coupling, whereas energy distance uses between- and within-distribution expectations. We can make this precise:

\begin{proposition}[Randomized-order metrics on unordered trees]
\label{prop:randomized-order-metric}
On state-decorated trees with topology \(\calT_{k,d}\),
\[
    P_{\calX,\mathrm{ord}}=P_{\calX',\mathrm{ord}}
    \quad\Longleftrightarrow\quad
    \calX=\calX'
    \text{ up to child order}.
\]
Moreover,
\[
    d_{\min}(\calX,\calX')
    =
    W_1(P_{\calX,\mathrm{ord}},P_{\calX',\mathrm{ord}})
    \qquad\text{and}\qquad
    \sqrt{
        D_E(P_{\calX,\mathrm{ord}},P_{\calX',\mathrm{ord}})
    }
\]
are metrics on unordered trees.
\end{proposition}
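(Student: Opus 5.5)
The plan rests on one structural fact: $\Pi_{k,d}$ is a finite group acting on flattened vectors by coordinate (block) permutations, and each such permutation is an isometry of Euclidean space. Write $G=\Pi_{k,d}$ and $v=\operatorname{vec}(\calX)$, $v'=\operatorname{vec}(\calX')$ for fixed reference orderings. Then $P_{\calX,\mathrm{ord}}$ is the law of $\pi v$ with $\pi\sim\operatorname{Unif}(G)$. This is uniform over the orbit $Gv$, since each orbit point is hit by the same number $|\mathrm{Stab}(v)|$ of group elements. Hence $P_{\calX,\mathrm{ord}}=P_{\calX',\mathrm{ord}}$ iff the orbits coincide, iff $v'=\pi v$ for some $\pi$. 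This is exactly equality up to child order, which proves the first claim in both directions.

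For $W_1$ I will prove two inequalities. For the upper bound, take $\pi^\star$ attaining $d_{\min}$ and use the coupling $(\sigma v,\sigma\pi^\star v')$ with $\sigma\sim\operatorname{Unif}(G)$. The marginals are correct, because $\sigma\pi^\star$ is again uniform on $G$. The cost is $\|\sigma(v-\pi^\star v')\|=d_{\min}$ pointwise, so $W_1\le d_{\min}$. For the lower bound, every pair $(\sigma v,\tau v')$ in the supports satisfies
\[
\|\sigma v-\tau v'\|=\|v-\sigma^{-1}\tau v'\|\ge d_{\min}.
\]
Any coupling therefore has cost at least $d_{\min}$. Equality follows, and $d_{\min}$ inherits the metric axioms from $W_1$ on the finitely supported measures. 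Identity of indiscernibles holds on unordered trees by the first claim. I could also verify the triangle inequality for $d_{\min}$ directly via the isometry, as a sanity check.

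For energy distance I will invoke the standard fact (Székely--Rizzo; Sejdinovic et al.) that $\sqrt{D_E}$ is a metric on probability measures on $\R^n$ with finite first moment. The reason is that $D_E(P,Q)=\|\mu_P-\mu_Q\|_{\mathcal H}^2$ for the kernel mean embedding of the distance-induced kernel, which is characteristic. Nonnegativity, symmetry and the triangle inequality then transfer from the Hilbert norm. The vanishing condition $D_E=0\iff P=Q$ combines with the first claim to give identity of indiscernibles on unordered trees.

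I expect the only delicate point to be the uniformity and orbit argument when stabilizers are nontrivial, that is, when identical sibling subtrees make $\pi\mapsto\pi v$ non-injective. Counting preimages through cosets of the stabilizer handles this. A second delicate point is confirming that subtree permutations form a group acting by isometries, which needs care about how $\operatorname{vec}_\pi$ composes. I would define the action recursively on subtrees to make this clean.
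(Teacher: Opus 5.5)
Your proposal is correct and follows essentially the same route as the paper. Both arguments establish injectivity of $\mathcal{X}\mapsto P_{\mathcal{X},\mathrm{ord}}$, prove $d_{\min}=W_1$ by a pointwise lower bound on distances between ordered representations plus the constant-cost coupling $(\sigma v,\sigma\pi^\star v')$, and then pull back the known metric properties of $W_1$ and $\sqrt{D_E}$ through this injective map. The stabilizer counting you flag as delicate is not needed: the paper gets the forward direction from the fact that composing a uniform ordering with a fixed one stays uniform, and the converse from the positive mass $P_{\mathcal{X},\mathrm{ord}}$ places on any fixed ordered representation.
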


\begin{proof}
First suppose that \(\calX\) and \(\calX'\) differ only by a fixed child ordering. Uniformly reordering \(\calX'\) is equivalent to applying this ordering to \(\calX\) and then applying another uniform random ordering. Composing a uniform random ordering with any fixed ordering preserves uniformity, so \(P_{\calX,\mathrm{ord}}=P_{\calX',\mathrm{ord}}\).

Conversely, suppose that \(P_{\calX,\mathrm{ord}}=P_{\calX',\mathrm{ord}}\). Fix any ordered representation of \(\calX\). Every ordering is sampled with positive probability so this has positive probability under \(P_{\calX,\mathrm{ord}}\). Since the laws are equal the same holds under \(P_{\calX',\mathrm{ord}}\). Thus some ordering of \(\calX'\) must produce exactly the same flattened representation. Hence the trees agree up to child order.

For any metric \(\Delta\) on probability laws over ordered tree vectors, write \(d_{\Delta}(\calX,\calX')=\Delta(P_{\calX,\mathrm{ord}},P_{\calX',\mathrm{ord}})\). This implies that the map \(\calX\mapsto P_{\calX,\mathrm{ord}}\) is injective on unordered trees, and composing any metric \(\Delta\) with this map preserves nonnegativity, symmetry, and separation. It also preserves the triangle inequality because
\[
\begin{aligned}
    d_{\Delta}(\calX,\calX'')
    &=
    \Delta(P_{\calX,\mathrm{ord}},P_{\calX'',\mathrm{ord}}) \\
    &\leq
    \Delta(P_{\calX,\mathrm{ord}},P_{\calX',\mathrm{ord}})
    +
    \Delta(P_{\calX',\mathrm{ord}},P_{\calX'',\mathrm{ord}}) \\
    &=
    d_{\Delta}(\calX,\calX')
    +
    d_{\Delta}(\calX',\calX'').
\end{aligned}
\]
Thus \(d_{\Delta}\) is a metric.

All that is left is to check the Wasserstein representation of \(d_{\min}\). The Euclidean distance between ordered representations of \(\calX\) and ordered representations of \(\calX'\) is at least \(d_{\min}(\calX,\calX')\), so every coupling of their ordering laws has at least this expected cost. For the reverse inequality, let \(\pi^\star\) attain the minimum in the definition of \(d_{\min}\). If \(\Pi\) is uniform on \(\Pi_{k,d}\), then pairing $\operatorname{vec}_{\Pi}(\calX)$ with $\operatorname{vec}_{\Pi\circ\pi^\star}(\calX')$ defines a coupling of the two uniform ordering laws. Applying the same ordering to both vectors preserves their Euclidean distance, so this coupling has constant cost \(d_{\min}(\calX,\calX')\) and attains the lower bound.

Each ordering law has finite support. Wasserstein distance is a metric on Euclidean probability laws with finite first moments, as is the square root of energy distance~\cite{Szekely2013}. The general claim therefore shows that both \(d_{\min}\) and the square root of energy distance are metrics.
\end{proof}

\subsection{Comparing collections of trees}

The preceding metrics compare two fixed trees. We now consider the distributional problem, where we need to compare two collections of trees. Given the sets of trees
\[
    \calX_1,\ldots,\calX_n,
    \qquad
    \calX'_1,\ldots,\calX'_n,
\]
each tree has the ordering law \(P_{\calX,\mathrm{ord}}\) introduced in the preceding section, so the two collections induce mixtures over flattened ordered vectors:
\[
    \frac{1}{n}\sum_{i=1}^n P_{\calX_i,\mathrm{ord}},
    \qquad
    \frac{1}{n}\sum_{i=1}^n P_{\calX'_i,\mathrm{ord}}.
\]
We can sample one tree from a collection and then order its children uniformly at random. Repeating this process we can compare the trees as in the previous section, but we can also ``lift'' our distributional distances to the setting where we have collections of trees: here, the Wasserstein distance chooses an optimal coupling between the two ordered-vector mixtures, whereas energy distance compares them through between- and within-mixture expectations.

We can write \(P\) and \(Q\) for two generic laws on state-decorated unordered trees with topology \(\calT_{k,d}\) and define
\[
    P_{\mathrm{ord}}
    =
    \operatorname{Law}\!\left\{
        \operatorname{vec}_{\Pi}(\calX)
    \right\},
    \qquad
    \calX\sim P,
    \quad
    \Pi\sim\operatorname{Unif}(\Pi_{k,d}),
\]
where \(\calX\) and \(\Pi\) are independent, and define \(Q_{\mathrm{ord}}\) analogously. Variation among trees is part of the distributions being compared; the variation in child order is a technical device we are using to handle the arbitrariness of child order. This lets us write a generic result for the distributional comparison of two tree laws which works for either Wasserstein or energy distance.

\begin{proposition}[Randomized-order comparison of tree laws]
\label{prop:random-tree-ordering}
For laws \(P\) and \(Q\) on state-decorated unordered trees with topology \(\calT_{k,d}\),
\[
    P_{\mathrm{ord}}=Q_{\mathrm{ord}}
    \quad\Longleftrightarrow\quad
    P=Q.
\]

Consequently, on tree laws whose ordered-vector laws have finite Euclidean first moments,
\[
    W_1(P_{\mathrm{ord}},Q_{\mathrm{ord}})
    \qquad\text{and}\qquad
    \sqrt{
        D_E(P_{\mathrm{ord}},Q_{\mathrm{ord}})
    }
\]
are metrics on tree laws.
\end{proposition}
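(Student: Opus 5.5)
The plan is to prove the identifiability claim \(P_{\mathrm{ord}}=Q_{\mathrm{ord}}\Leftrightarrow P=Q\) by passing through a canonical representative of each unordered tree. The metric consequence then follows exactly as in Proposition~\ref{prop:randomized-order-metric}.

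The forward direction (\(P=Q\Rightarrow P_{\mathrm{ord}}=Q_{\mathrm{ord}}\)) is immediate. We need that \(P_{\mathrm{ord}}\) depends on \(\calX\) only through its unordered class. By the first part of Proposition~\ref{prop:randomized-order-metric}, trees equal up to child order have the same ordering law. Hence \(P_{\mathrm{ord}}=\int P_{\calX,\mathrm{ord}}\,dP(\calX)\) is a well-defined function of the law \(P\) on unordered trees.

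For the converse, I would define a measurable canonicalization map \(c\) on ordered tree vectors. It returns a fixed ordered representative of the unordered class, for instance by recursively sorting the children at each internal node by the lexicographic order of their flattened subtree vectors, from the leaves upward. The key steps are as follows.
\begin{enumerate}
\item The map \(c\) is invariant under every \(\pi\in\Pi_{k,d}\), so \(c(\operatorname{vec}_\pi(\calX))\) depends only on the unordered tree \(\calX\).
\item The map \(c\) is injective on unordered classes: if two trees have the same canonical vector, then each is a reordering of that vector, so they agree up to child order.
\item The pushforward of \(P_{\mathrm{ord}}\) under \(c\) is the law of \(c(\calX)\) with \(\calX\sim P\), and this law determines \(P\) via the injectivity in step 2.
\end{enumerate}
Thus \(P_{\mathrm{ord}}=Q_{\mathrm{ord}}\) implies \(c_\#P_{\mathrm{ord}}=c_\#Q_{\mathrm{ord}}\), which in turn implies \(P=Q\).

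This argument replaces the atom-based reasoning used for single trees, which fails here because general tree laws need not have atoms. The main obstacle is measurability of \(c\) and of its inverse on its image, since we need \(P\) to be identified with a Borel law on the canonical vectors. Lexicographic sorting is piecewise a coordinate permutation on finitely many Borel pieces, each cut out by comparisons of coordinates, so \(c\) is Borel. Its image is a Borel set, and on that image the canonical form serves as a Borel parametrization of unordered trees. I would simply take this as the definition of the measurable structure on unordered trees.

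For the metric claim, the argument follows Proposition~\ref{prop:randomized-order-metric}. The map \(P\mapsto P_{\mathrm{ord}}\) is injective, and \(W_1\) and \(\sqrt{D_E}\) are metrics on Euclidean laws with finite first moments~\cite{Szekely2013}. Composing either metric with an injective map preserves nonnegativity, symmetry, separation, and the triangle inequality, by the same chain of inequalities displayed in that earlier proof.
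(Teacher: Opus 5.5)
Your proof is correct and follows the paper's route: the paper recovers \(P\) from \(P_{\mathrm{ord}}\) by pushing forward along the map that forgets child order, and your canonicalization \(c\) is an explicit Borel realization of that map, adding measurability bookkeeping the paper leaves implicit (the image of \(c\) is Borel because it is the fixed-point set \(\{v : c(v)=v\}\) of the idempotent map \(c\)). The metric consequence then goes exactly as in the paper, by pulling back \(W_1\) and \(\sqrt{D_E}\) along the injective map \(P\mapsto P_{\mathrm{ord}}\).
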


\begin{proof}
If an ordered vector is drawn from \(P_{\mathrm{ord}}\), reconstructing the fixed topology and forgetting child order returns an unordered tree drawn from \(P\). Hence \(P_{\mathrm{ord}}=Q_{\mathrm{ord}}\) implies \(P=Q\), while the converse follows directly from the definition. The map \(P\mapsto P_{\mathrm{ord}}\) is therefore injective, so the pullback argument from Proposition~\ref{prop:randomized-order-metric} shows that any metric on ordered-vector laws pulls back to a metric on tree laws.

Finally, Wasserstein distance is a metric on Euclidean probability laws with finite first moments, as is the square root of energy distance~\cite{Szekely2013}. The general claim therefore applies to both. 
\end{proof}

The randomized-order representation gives a faithful whole-tree target, which we can treat as the ideal object that the practical alignment loss is trying to approximate. In our setting, however, we have a limited number of whole-tree samples, making this high-dimensional law difficult to estimate accurately. As trees grow, the whole-tree comparison asks each observation to support more joint directions, and the sampling variance can become huge. The alignment loss instead turns every tree into sets of fixed-size generation, path, and common-ancestor features. This effectively converts the high-dimensional problem into a collection of lower-dimensional problems, improving statistical efficiency but making some kinds of ``long-range'' dependence impossible to measure. The next subsection studies this tradeoff using binary states on binary trees.

\subsection{From whole-tree laws to low-order features}

To make this tradeoff exact, we now restrict our attention to complete binary trees with states encoded as \(-1\) and \(+1\). This is essentially a generalized Ising model on trees, where we allow the interactions to vary over arbitrarily long ranges on the tree. The binary setting is useful since it allows us to write all of our analysis as functions of the full probability mass table, which is finite and discrete. 

Our three feature families break up each tree in complementary directions. The generation marginal keeps the state distribution at each time, the path triple captures a  ``depth'' summary along each ancestral lineage, and the common-ancestor triple is a ``breadth'' summary across branches below every shared ancestor. We pool these feature samples at each time, replacing the whole-tree comparison with repeated comparisons of these low-dimensional feature distributions. 

These features let us break the large tree into smaller pieces, effectively reducing the variance of our divergence. But the central issue is that the feature laws cannot distinguish all whole-tree laws. The goal of this section is not to show that these features identify the whole-tree law (they almost certainly do not) but to quantify how much whole-tree information they resolve relative to their variance gains. We start by working through a simple example to show how this pooling works. We then explore a seven-node tree in \textbf{Supplemental Fig.~\ref{fig:supp-ising-phase}}, which lets us carefully analyze our feature laws against the whole-tree loss, weighing the statistical efficiency against the loss of information from the featurization.

To make the pooling concrete, let \(V\) be the node set of for one of our completed trees \(\calT_{k,d}\), with \(|V|=N\). We write $D_E(P,Q)=D_E(P_{\mathrm{ord}},Q_{\mathrm{ord}})$ for whole-tree energy. For a tree law \(R\), let \(p_R(x)\) be the mass assigned by \(R_{\mathrm{ord}}\) to the binary assignment \(x\in\{-1,1\}^V\).

The smallest example is a three-node ``fork'' with a parent \(0\) and two children \(1\) and \(2\). On this fork we can consider a two-node path from parent to child; the path has two eligible placements, one for each child, and the fork has a single time stratum, so pooling reduces to averaging these two placements. 

We can think about two concrete laws so we can understand how this feature compares to the whole-tree (whole fork) distance. We will let \(P\) be a law where the parent and both children are independent fair signs. We will compare this to the Bernstein triple \(Q\), where we sample the parent and one child independently, but the second child is always the product of the parent and first child. In this construction every pair is independent but the product of all three signs is always \(+1\). Equivalently, \(p_P(x)=2^{-3}\) and \(p_Q(x)=2^{-3}(1+x_0x_1x_2)\); \textbf{Supplemental Fig.~\ref{fig:supp-ising-running-example}a} shows the two laws, and \textbf{Supplemental Fig.~\ref{fig:supp-ising-running-example}b} their masses on the eight ordered assignments.

We visualize these distributions via their probability mass tables in \textbf{Supplemental Fig.~\ref{fig:supp-ising-running-example}b}. But it is hard to understand what distributions are or are not distinguishable from a given feature using that representation. In our binary setting there is a cleaner perspective. Because every state is \(\pm1\), the product of the states over a node set \(A\subseteq V\) is also just \(\pm1\). We can describe the distribution of this sign by its mean, since the product is \(1\) with some probability and \(-1\) otherwise. It is useful to define the product moment
\[
    m_R(A)
    =
    \mathbb E_R\left(\prod_{i\in A}X_i\right)
    =
    \sum_x p_R(x)\prod_{i\in A}x_i,
    \qquad X\sim R_{\mathrm{ord}}.
\]
These moments are a standard basis for boolean functions \cite{o2014analysis}: with the convention \(m_R(\emptyset)=1\), we can convert back to the probability table using
\[
    p_R(x)
    =
    2^{-N}
    \sum_{A\subseteq V}
    m_R(A)\prod_{i\in A}x_i,
\]
which for a single node is just \(\{1\pm m_R(\{v\})\}/2\). Each moment measures the parity of a set of nodes, and together the moments carry every combination of nodes. \textbf{Supplemental Fig.~\ref{fig:supp-ising-running-example}c} visualizes this change of coordinates for the two fork examples: every nonconstant moment of \(P\) is zero, while \(Q\) has a single nonzero nonconstant moment, the three-node parity \(m_Q(\{0,1,2\})=1\), which we can invert back to \(p_Q(x)=2^{-3}(1+x_0x_1x_2)\). The moment coordinates make the Energy distance a bit easier to understand: in Proposition~\ref{prop:binary-energy-expansion} we show that the whole-tree energy is a positively weighted sum of squared moment differences and each feature distance keeps a subset of these moments. So which laws the features can distinguish becomes a question of which moments they retain.

\begin{proposition}
\label{prop:binary-energy-expansion}
For every \(N\geq1\), there are positive constants \(\gamma_{N,r}\), \(1\leq r\leq N\), such that
\begin{equation}
\label{eq:binary-energy-expansion}
    D_E(P,Q)
    =
    \sum_{\substack{A\subseteq V\\ |A|\geq1}}
    \gamma_{N,|A|}\{m_P(A)-m_Q(A)\}^2.
\end{equation}
For each \(A\), the squared moment difference is the energy distance between
the laws of \(\prod_{i\in A}X_i\) under \(P_{\mathrm{ord}}\) and
\(Q_{\mathrm{ord}}\), and \(\gamma_{N,|A|}\) supplies its positive weight.
\end{proposition}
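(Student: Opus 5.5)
The plan is to use that on the hypercube the energy kernel depends only on the product of its two arguments, so it is diagonal in the parity (Walsh) basis. All ordered vectors lie in \(\{-1,1\}^V\), and \(P_{\mathrm{ord}},Q_{\mathrm{ord}}\) are finitely supported with equal total mass. So with the signed measure \(\mu=p_P-p_Q\) we can write
\[
D_E(P,Q)=-\sum_{x,y}\mu(x)\mu(y)\,\lVert x-y\rVert .
\]
For \(x,y\in\{-1,1\}^N\), \(\lVert x-y\rVert=2\sqrt{w(x\odot y)}\), where \(x\odot y\) is the coordinatewise product and \(w(z)=\#\{i:z_i=-1\}\). Hence \(\lVert x-y\rVert=f(x\odot y)\) with \(f(z)=2\sqrt{w(z)}\).

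Next, expand \(f(z)=\sum_{A\subseteq V}\hat f(A)\chi_A(z)\), where \(\chi_A(z)=\prod_{i\in A}z_i\) and \(\hat f(A)=2^{-N}\sum_z f(z)\chi_A(z)\). Since \(\chi_A(x\odot y)=\chi_A(x)\chi_A(y)\), substituting gives
\[
D_E(P,Q)=-\sum_{A}\hat f(A)\Bigl(\sum_x\mu(x)\chi_A(x)\Bigr)^2=-\sum_A \hat f(A)\{m_P(A)-m_Q(A)\}^2 .
\]
The \(A=\emptyset\) term vanishes because \(\mu\) has total mass zero. Because \(f\) is invariant under permuting coordinates, \(\hat f(A)\) depends only on \(|A|\). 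This suggests setting \(\gamma_{N,|A|}=-\hat f(A)\).

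The main obstacle is showing \(-\hat f(A)>0\) for every nonempty \(A\). I would use the subordination identity \(\sqrt{w}=c\int_0^\infty(1-e^{-tw})t^{-3/2}\,dt\) with \(c>0\), together with the factorization
\[
e^{-t w(z)}=\prod_i\Bigl(\tfrac{1+e^{-t}}{2}+\tfrac{1-e^{-t}}{2}z_i\Bigr).
\]
This factorization shows that the Fourier coefficient of \(e^{-tw}\) at \(A\) is \(\bigl(\tfrac{1+e^{-t}}{2}\bigr)^{N-|A|}\bigl(\tfrac{1-e^{-t}}{2}\bigr)^{|A|}>0\). The constant \(1\) in the integrand contributes only to \(A=\emptyset\). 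Therefore, for \(A\neq\emptyset\),
\[
\hat f(A)=-2c\int_0^\infty\Bigl(\tfrac{1+e^{-t}}{2}\Bigr)^{N-|A|}\Bigl(\tfrac{1-e^{-t}}{2}\Bigr)^{|A|}t^{-3/2}\,dt<0 .
\]
This integral converges: near \(t=0\) the integrand behaves like \(t^{|A|-3/2}\) with \(|A|\geq1\), and it decays at infinity. That gives \(\gamma_{N,r}>0\) for \(1\leq r\leq N\).

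Finally, for the interpretive claim, let \(S=\prod_{i\in A}X_i\), which takes values in \(\{\pm1\}\) with mean \(m_R(A)\). For two such laws with means \(m,m'\), the energy distance reduces to \(2\mathbb E|S-S'|-\mathbb E|S-\tilde S|-\mathbb E|S'-\tilde S'|\), where tildes denote independent copies. Using \(\mathbb E|S-S'|=1-mm'\) for independent signs, this equals \((m-m')^2\). So each summand is \(\gamma_{N,|A|}\) times the energy distance between the parity laws, which completes the proposition.
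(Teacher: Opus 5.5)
Your proof is correct. The first half matches the paper's: both write $D_E(P,Q)=-\sum_{x,y}\nu(x)\nu(y)\lVert x-y\rVert$, observe that $\lVert x-y\rVert$ depends only on $x\odot y$, and diagonalize in the Walsh basis. Your $-\hat f(A)$ is exactly the paper's $\gamma_{N,r}=-2^{-N}\sum_u\lVert\mathbf 1-u\rVert\prod_{v\in A}u_v$, since $\lVert\mathbf 1-u\rVert=2\sqrt{w(u)}$.

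You genuinely differ on positivity. The paper argues softly: it takes the test law $2^{-N}\{1+\varepsilon\prod_{v\in A}x_v\}$, whose only nonconstant moment sits at $A$. It then invokes the Sz\'ekely--Rizzo fact that energy distance is strictly positive between distinct laws, which gives $\gamma_{N,|A|}\varepsilon^2>0$. You instead compute the coefficient directly, using the subordination formula for $\sqrt{w}$ and the product structure of $e^{-tw}$ on the cube. This yields the closed form $\gamma_{N,r}=\pi^{-1/2}\int_0^\infty\bigl(\tfrac{1+e^{-t}}{2}\bigr)^{N-r}\bigl(\tfrac{1-e^{-t}}{2}\bigr)^{r}t^{-3/2}\,dt$, with your $c=1/(2\sqrt{\pi})$; as a check, it gives $\gamma_{1,1}=1$.

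The paper's route is shorter, but it rests on an external characterization theorem and leaves $\gamma_{N,r}$ as a signed sum whose sign is not visible. It also has to remark that its test law need not be child-order invariant. Your route is self-contained: in effect it proves directly the strict conditional negative definiteness of Euclidean distance on the cube that the characterization encodes. It also gives more. Since $\tfrac{1-e^{-t}}{2}<\tfrac{1+e^{-t}}{2}$ for $t>0$, $\gamma_{N,r}$ is strictly decreasing in $r$, so the whole-tree energy weights higher-order parities progressively less.

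One small point of rigor: take Walsh coefficients pointwise in $t$ before dropping the constant term. This is a finite sum, so it commutes with the integral, and it is needed because $\int_0^\infty t^{-3/2}\,dt$ on its own diverges. Your phrasing implicitly does this, but it is worth stating.
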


The two fork distributions make this concrete. The whole-fork energy comparison reduces to the single weighted squared moment difference \(\gamma_{3,3}(0-1)^2\), which is normalized to one in the figure. Each parent--child placement already agrees under the two laws, so pooling leaves the path comparison at \(0\): the path feature cannot distinguish \(P\) and \(Q\) so it only partially identifies the whole-tree law. \textbf{Supplemental Fig.~\ref{fig:supp-ising-running-example}d} shows the two comparison values.

\begin{figure}[!t]
    \centering
    \includegraphics[page=1,width=0.98\linewidth]{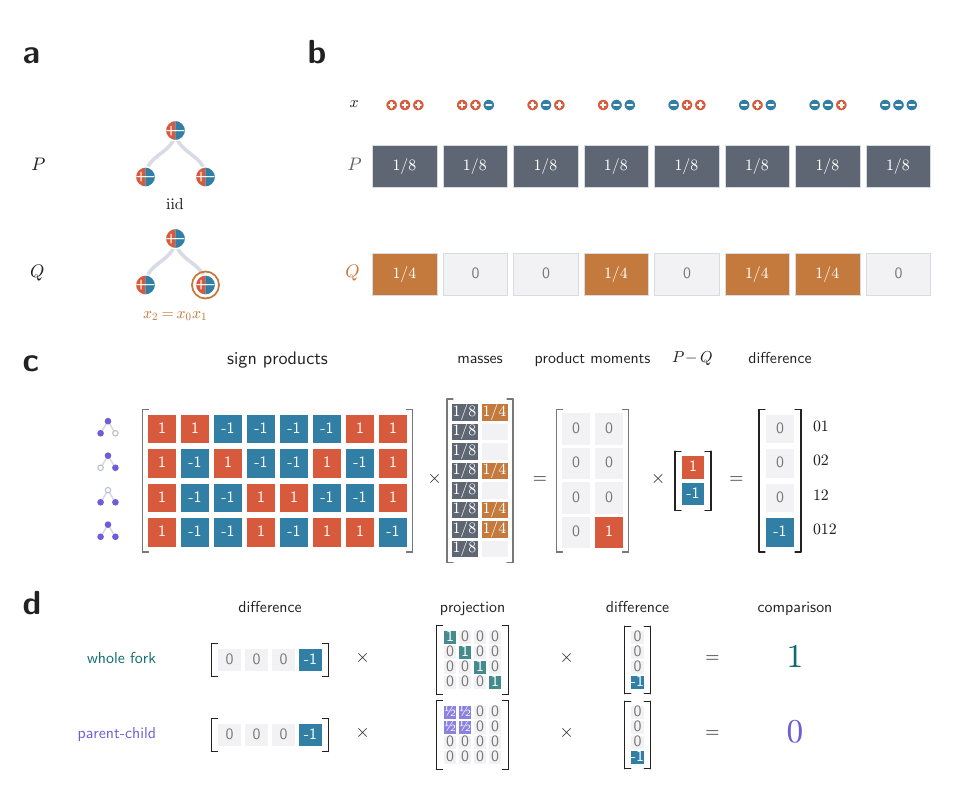}
    \caption{\textbf{A path table can leave the whole-tree law partially identified.} Panel a contrasts the independent fork law \(P\) with the Bernstein triple \(Q\), and panel b shows their masses on the eight ordered assignments. Panel c maps those masses to the product moment representation. The distributions are the same except for the triple-product moment. Panel d normalizes this single squared difference: the whole-fork comparison is \(1\), while the parent--child comparison is \(0\) because every pair marginal agrees.}
    \tocfigure{A path table can leave the whole-tree law partially identified}
    \label{fig:supp-ising-running-example}
\end{figure}

The actual feature loss compares the feature tables one stratum at a time, adding an energy-distance term for each table. Equation~\eqref{eq:binary-energy-expansion} lets us re-write this as a sum of weighted moment terms. Up to the choice of weights, this is equivalent to orthogonally projecting the whole-tree weighted moment vectors onto the subspace of combinations retained by a given set of features. The resulting discrepancy $D_F$ is a slightly different objective from the main feature loss because its weights align with the whole-tree loss's weights rather than the feature loss's weights, but this tweak lets us write the ``omitted'' energy \(D_E(P,Q)-D_F(P,Q)\) as exactly the nonnegative part of the whole-tree energy that the selected feature discrepancy misses.

So far we have focused on the ``population'' version of these discrepancies. To compare the variance from having finite trees we need the per-tree version of these vectors. For a state-decorated tree \(\mathcal X\) and nonempty node set \(A\), average the weighted sign product over all child orderings,
\[
    \overline\phi_A(\mathcal X)
    =
    \frac{\sqrt{\gamma_{N,|A|}}}{|\Pi_{k,d}|}
    \sum_{\pi\in\Pi_{k,d}}
    \prod_{v\in A}
    [\operatorname{vec}_\pi(\mathcal X)]_v,
\]
and let \(\overline{\boldsymbol\phi}(\mathcal X)\) collect these entries. Here we are computing the average over orderings exactly, and for \(\mathcal X\sim R\) this is unbiased for the weighted moments, \(\mathbb E_R\{\overline\phi_A(\mathcal X)\}=\sqrt{\gamma_{N,|A|}}\,m_R(A)\). Equation~\eqref{eq:binary-energy-expansion} then gives that \(D_E(P,Q)\) is the squared distance between the mean vectors \(\mathbb E_P\{\overline{\boldsymbol\phi}(\mathcal X)\}\) and \(\mathbb E_Q\{\overline{\boldsymbol\phi}(\mathcal X)\}\) \cite{sejdinovic2013equivalence}, and \(D_F(P,Q)\) is the same squared distance after projecting each mean vector onto the retained combinations.

Echoing the two sample V-statistic we use for the Energy distance in practice \cite{gretton2012kernel}, the squared distance between empirical mean vectors is built from inner products: average \(\langle\overline{\boldsymbol\phi}(\mathcal X_i),\overline{\boldsymbol\phi}(\mathcal X_j)\rangle\) over all pairs of trees within each sample, including equal indices, and subtract twice the average over pairs across the two samples. We will write this estimator as \(\widehat D_F\) for the projected vectors, computed from \(n\) independent complete trees under each law, with fixed topology, and let \(V_F(P,Q;n)\) be its variance; \(\widehat D_E\) and \(V_E(P,Q;n)\) are the whole-tree versions.

This machinery lets us analyze the tradeoff between sampling variance and partial identification. The difference between the feature and whole-tree loss is just the sets of combinations that the feature loss cannot measure. Two laws can differ in the unmeasured combinations while agreeing on
all selected feature tables, as the parity construction above showed. We can actually compute the worst-case discrepancy for any pair of laws. For a law \(Q\), let
\[
    \mathcal C_F(Q)=\{R:D_F(Q,R)=0\},
\]
the set of laws with the same selected feature tables as \(Q\). Every law in this set differs from \(Q\) only in ``invisible'' combinations, so the loss cannot distinguish \(Q\) from any of these distributions.

The bias--variance decomposition of the feature statistic makes the tradeoff exact. Let \(\boldsymbol\Sigma_R\) denote the covariance of the projected per-tree vector under \(R\), and write \(b_F(P,Q;n)=\operatorname{tr}(\boldsymbol\Sigma_P+\boldsymbol\Sigma_Q)/n\) for its sampling bias:

\begin{proposition}
\label{prop:feature-table-bias-variance}
For \(n\geq1\), \(\mathbb E(\widehat D_F)=D_F(P,Q)+b_F(P,Q;n)\), and for every pair of laws \(P\) and \(Q\),
\begin{equation}
\label{eq:feature-map-mse}
    \operatorname{MSE}\{\widehat D_F;D_E(P,Q)\}
    \leq\max_{R\in\mathcal C_F(Q)}\{D_E(P,R)-D_F(P,R)-b_F(P,Q;n)\}^2
      +V_F(P,Q;n),
\end{equation}
with equality when \(\mathcal C_F(Q)=\{Q\}\).
\end{proposition}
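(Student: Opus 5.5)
The plan is to treat $\widehat D_F$ as the squared Euclidean distance between two empirical means of the projected per-tree vectors. The bias identity then follows from a direct second-moment computation, and the MSE bound follows from the usual bias--variance split with respect to the target $D_E(P,Q)$.

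Write $\boldsymbol\psi(\mathcal X)$ for the projection of $\overline{\boldsymbol\phi}(\mathcal X)$ onto the retained combinations. Let $\boldsymbol\mu_P=\mathbb E_P\boldsymbol\psi(\mathcal X)$ and $\boldsymbol\mu_Q=\mathbb E_Q\boldsymbol\psi(\mathcal X)$. The V-statistic described before the statement, which includes the equal-index pairs, is exactly
\[
\widehat D_F=\Bigl\|\tfrac1n\textstyle\sum_i\boldsymbol\psi(\mathcal X_i)-\tfrac1n\sum_j\boldsymbol\psi(\mathcal X'_j)\Bigr\|^2 .
\]
We first expand this in inner products. Let $\bar{\boldsymbol\psi}_P$ and $\bar{\boldsymbol\psi}_Q$ denote the two sample means. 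Independence within each sample gives $\mathbb E\|\bar{\boldsymbol\psi}_P\|^2=\|\boldsymbol\mu_P\|^2+\operatorname{tr}\boldsymbol\Sigma_P/n$, since only the $n$ diagonal terms contribute covariance. The same holds for $Q$. Independence across the two samples gives $\mathbb E\langle\bar{\boldsymbol\psi}_P,\bar{\boldsymbol\psi}_Q\rangle=\langle\boldsymbol\mu_P,\boldsymbol\mu_Q\rangle$. Combining these yields
\[
\mathbb E\widehat D_F=\|\boldsymbol\mu_P-\boldsymbol\mu_Q\|^2+b_F(P,Q;n).
\]
The discussion after Proposition~\ref{prop:binary-energy-expansion} identifies $\|\boldsymbol\mu_P-\boldsymbol\mu_Q\|^2$ with $D_F(P,Q)$, using unbiasedness of $\overline\phi_A$ for $\sqrt{\gamma_{N,|A|}}\,m_R(A)$ and linearity of the projection. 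This proves the first claim.

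Next, the standard decomposition $\operatorname{MSE}\{\widehat D_F;D_E\}=(\mathbb E\widehat D_F-D_E)^2+\operatorname{Var}\widehat D_F$ gives the exact formula
\[
\operatorname{MSE}=\{D_E(P,Q)-D_F(P,Q)-b_F(P,Q;n)\}^2+V_F(P,Q;n).
\]
To reach the stated bound, observe that $Q\in\mathcal C_F(Q)$ because $D_F(Q,Q)=0$. The squared term is therefore the value at $R=Q$ of the function $R\mapsto\{D_E(P,R)-D_F(P,R)-b_F(P,Q;n)\}^2$, in which $b_F$ is held fixed at its $(P,Q)$ value. Hence it is bounded by the maximum over $\mathcal C_F(Q)$. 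When $\mathcal C_F(Q)=\{Q\}$ the maximum is attained only at $R=Q$, which gives equality.

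The only step needing care is whether the maximum over $\mathcal C_F(Q)$ exists. In the binary fixed-topology setting, $\mathcal C_F(Q)$ is a closed subset of the compact simplex of mass tables: it is the intersection of the simplex with an affine set of moment constraints, so it is a polytope. Both $D_E(P,\cdot)$ and $D_F(P,\cdot)$ are continuous, being quadratic in the moments by \eqref{eq:binary-energy-expansion}. The maximum is therefore attained.

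Apart from this, the main point to check is that the V-statistic with diagonal terms matches the squared distance between empirical means. That identification is what produces the bias $b_F$ rather than an unbiased U-statistic.
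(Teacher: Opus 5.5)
Your proof is correct and follows the paper's argument. Like the paper, you read $\widehat D_F$ as the squared distance between projected empirical means, take the bias--variance split against $D_E(P,Q)$, and bound the squared bias by the maximum using $Q\in\mathcal C_F(Q)$ with $b_F(P,Q;n)$ held fixed; the paper differs only in obtaining the mean by centering and expanding $\widehat D_F=D_F+2(\boldsymbol\mu_P-\boldsymbol\mu_Q)^\top\boldsymbol\varepsilon+\lVert\boldsymbol\varepsilon\rVert^2$, and your compactness argument for attainment of the maximum is a small addition the paper leaves implicit.
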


Conceptually the maximum bounds the squared bias over laws the loss cannot distinguish from \(Q\), holding the sampling bias under \(P,Q\) fixed. When \(Q\) is point-identified, \(\mathcal C_F(Q)=\{Q\}\) and equation~\eqref{eq:feature-map-mse} is the standard bias--variance decomposition, with squared bias \(\{D_E(P,Q)-D_F(P,Q)-b_F(P,Q;n)\}^2\). When the features are the whole-tree, \(D_F=D_E\) and \(\widehat D_F=\widehat D_E\), so only the sampling bias remains and the MSE is \(b_E(P,Q;n)^2+V_E(P,Q;n)\), where \(b_E\) is defined using the unprojected per-tree vectors. When \(Q\) matches the tables of \(P\), \(\mathcal C_F(Q)=\mathcal C_F(P)\) and \(D_F(P,\cdot)\) vanishes on it, so the maximum is \(\max_{R\in\mathcal C_F(P)}\{D_E(P,R)-b_F(P,Q;n)\}^2\) while \(V_F(P,Q;n)\) remains the variance under the actual pair; this is the case we evaluate below.

To work this tradeoff out with our actual features we need an example a bit larger than the fork, so we focus on the seven-node example in \textbf{Supplemental Fig.~\ref{fig:supp-ising-phase}}: the complete binary tree of depth two, with root \(0\), children \(1\) and \(2\), and grandchildren \(3,4,5,6\). \textbf{Supplemental Fig.~\ref{fig:supp-ising-phase}a} shows one root--child--grandchild path table and three common-ancestor tables: root with two children, root with two grandchildren, and child with two grandchildren. The marginal tables are also included, but their moments are already fixed by these joint tables. On this tree we can describe the compatible set exactly, numerically search for the worst-case discrepancy, and then weigh that worst case against the variance reduction of these features through Proposition~\ref{prop:feature-table-bias-variance}.

Let \(P_0\) be seven independent fair signs, with \(p_{P_0}(x)=2^{-7}\). The nice thing about the binary setting is that we can exactly describe the compatible set. After child-order symmetry, a whole-tree law is a point in a \(41\)-dimensional simplex. In moment coordinates, the feature tables (marginal, path, and common-ancestor) fix \(13\) combinations and leave the remaining \(28\) free. For any law \(P\), the compatible set \(\mathcal C_F(P)\) is the intersection of the simplex with the \(28\)-dimensional affine space through \(P\) spanned by the omitted combinations.

We can make this a bit more interpretable by focusing on just the four grandchildren. The features observe only one- and two-node marginals on these grandchildren, so they tell us that the four signs are pairwise independent and fair. Any distribution over the grandchildren is a point in the fifteen-dimensional probability simplex, but the distributions compatible with the observed trees occupy one slice of the simplex. These features fix the one- and two-node moments, so only the only freedom is in the two grandchild combinations, the third-order moment (which, somewhat counterintuitively, is the same for every triple due to child-order symmetry), and the four-node parity, so the compatible grandchild laws form a two-parameter family in which the mass of an assignment depends only on its number of positive signs. This gives us a smaller simplex that we can use to visualize the compatible set.

To get this down to a tractable, visual representation, \textbf{Supplemental Fig.~\ref{fig:supp-ising-phase}c} groups the assignments into four patterns, all signs equal, exactly one positive, exactly one negative, and two of each sign; the four pattern totals are nonnegative and sum to one, so they form a three-dimensional simplex, the image of the fifteen-dimensional one under this grouping. On a compatible law the four totals are
\[
\begin{aligned}
    \Pr(\text{all equal})&=\tfrac18\{1+m_{3456}\},&
    \Pr(\text{one positive})&=\tfrac14\{1+2m_{345}-m_{3456}\},\\
    \Pr(2{+}2)&=\tfrac38\{1+m_{3456}\},&
    \Pr(\text{one negative})&=\tfrac14\{1-2m_{345}-m_{3456}\},
\end{aligned}
\]
so the grouping loses nothing on the compatible family: \(m_{3456}\) is recovered from the all-equal total and \(m_{345}\) from the difference between the one-positive and one-negative totals. \textbf{Supplemental Fig.~\ref{fig:supp-ising-phase}d} draws this simplex. Pairwise independence pins the totals to the plane \(\Pr(2{+}2)=3\Pr(\text{all equal})\), which cuts a triangle out of the simplex. Nonnegativity of the individual sign-configuration probabilities restricts this triangle to the compatible quadrilateral, with independence \(P_0\) in its interior.

Now we can interpret the worst-case grandchild law. The fork law \(Q\) is a three-node case of a general parity distribution called a Bernstein construction \cite{wang1979dependent}. The construction is simple: we sample \(k-1\) nodes as independent fair signs and then take the \(k\)th node to be their product. This gives a distribution where any subset of \(k-1\) nodes is independent but the full set of \(k\) nodes is not. This dependence would be invisible to any feature table involving at most \(k-1\) nodes. Computing the worst-case grandchild law compatible with our features (the point in the simplex intersected with the compatible set furthest from the grandchild marginal of \(P_0\)), we end up with the Bernstein law \(B\) with
\[
    p_B(x_3,x_4,x_5,x_6)
    =2^{-4}\{1+x_3x_4x_5x_6\}.
\]
The full-tree law \(Q_0\) below has this grandchild marginal, with additional dependence involving the upper nodes.

To get some sense for the relationship between the process that actually generates the trees and the compatible set we can consider a concrete family of distributions that index the strength of local dependence. We let \(P_\rho\) be the distribution we get by choosing a fair root sign and keeping the parent sign as we go down the tree with probability \((1+\rho)/2\), independently across children:
\begin{equation}
\label{eq:depth-two-ising-law}
    p_{P_\rho}(x)
    =2^{-7}\prod_{a\to b}(1+\rho x_ax_b),
    \qquad 0\leq\rho\leq1,
\end{equation}
where the product contains the six edges. The endpoint \(\rho=0\) is the independent law we have been considering, while \(\rho=1\) is deterministic conditional on the root, with each child directly inheriting its parent's sign. 

In \textbf{Supplemental Fig.~\ref{fig:supp-ising-phase}e}, for each \(P_\rho\) we compute and visualize the compatible set in this style. This shows how the compatible set changes with \(\rho\), and in particular we can see that as \(\rho\) increases, the compatible set eventually contracts to give identification at the deterministic limit. For each compatible set, we also define the worst compatible law,
\[
    Q_\rho\in\arg\max_{R\in\mathcal C_F(P_\rho)}D_E(P_\rho,R),
\]
and approximate this maximization numerically by searching over the polytope. Then \(D_F(P_\rho,Q_\rho)=0\) and \(D_E(P_\rho,Q_\rho)\) is the largest population discrepancy found by the search. We evaluate the finite-sample MSE at this pair; it need not maximize the shifted squared bias in equation~\eqref{eq:feature-map-mse}.

\begin{figure}[p]
    \centering
    \includegraphics[page=2,width=0.94\linewidth]{figures/supp/ising_running_example.pdf}
    \caption{\textbf{Fixed-size feature tables trade sampling variance for partial identification.} \textbf{(a)} Purple marks show the path table, and teal marks show the three common-ancestor tables; marginal tables are redundant with these joint tables. \textbf{(b--d)} At \(\rho=0\), \(P_0\) and \(Q_0\) agree on the selected tables although their four-grandchild laws differ; panel d shows the resulting compatible set. \textbf{(e)} Purple polygons project the compatible sets onto two omitted product-moment combinations, with \(P_\rho\) in black and the numerical worst case \(Q_\rho\) in orange. \textbf{(f--h)} For \((P_\rho,Q_\rho)\), the heatmaps show squared bias for the whole-tree energy, feature sampling variance, and their sum for the V-statistic at a fixed number \(n\) of independent trees per population, averaging child orderings exactly. Colors are relative to whole-tree MSE, and the black contour marks equal MSE.}
    \tocfigure{Fixed-size feature tables trade sampling variance for partial identification}
    \label{fig:supp-ising-phase}
\end{figure}

\textbf{Supplemental Fig.~\ref{fig:supp-ising-phase}f--h} compare the feature and whole-tree statistics at the same \(n\) independent trees per population. For each \(\rho\) and \(n\) we show the three quantities of Proposition~\ref{prop:feature-table-bias-variance} at the pair \((P_\rho,Q_\rho)\): the squared bias \(\{D_E(P_\rho,Q_\rho)-b_F(P_\rho,Q_\rho;n)\}^2\), the exact feature sampling variance \(V_F(P_\rho,Q_\rho;n)\), and their sum, the feature MSE, each relative to the whole-tree MSE \(b_E(P_\rho,Q_\rho;n)^2+V_E(P_\rho,Q_\rho;n)\).

This tells a very clear story: the omitted population discrepancy vanishes as \(\rho\) approaches one, while the feature sampling variance in \textbf{Supplemental Fig.~\ref{fig:supp-ising-phase}g} stays well below the whole-tree MSE, so the feature MSE in \textbf{Supplemental Fig.~\ref{fig:supp-ising-phase}h} beats the whole-tree where trees are scarce or when local dependence is sufficiently strong and loses where trees are plentiful and the compatible set is wide. The black contour marks the crossover, near \(97\) trees at \(\rho=0\).

In practice we have to impose a computational budget since these discrepancies are computed each step of the loss, and the two statistics do not cost the same per tree. We compare the sizes of exhaustively enumerated representations, measured in coordinates per tree. The whole-tree statistic averages over every ordering of each tree, a flattened vector of \(N\) coordinates per element of \(\Pi_{2,d}\), where a complete binary tree of depth \(d\) has \(2^{2^d-1}\) such orderings; the feature statistic averages over the eligible placements of each table, three coordinates per placement. Table~\ref{tab:supp-ising-coordinates} compares the whole-tree and feature statistics in these units, using every ancestor--middle--descendant triple at three distinct generations for paths and every distinct ordered descendant pair for common ancestors. This idealized all-triples family differs from the full-path operator used in training, where in practice we sample orderings, but its is a fair comparison against the whole tree here. On the seven-node tree the two representation sizes are comparable. By depth four the exhaustive whole-tree representation lists five hundred times as many coordinates, and by depth five sixteen million times as many. On trees of the depth we actually fit, exhaustive ordering enumeration is clearly intractable. This is a ``naive'' comparison, we are not giving lower bounds here and cleverer computational schemes could avoid enumerating every ordering. Nonetheless, combined with the statistical arguments presented above, this strongly supports a feature-based computation.

\begin{table}[!t]
    \centering
    \caption{\textbf{Exhaustive ordering enumeration requires exponentially more coordinates than feature-placement enumeration.} For a complete binary tree of depth \(d\) with \(N\) nodes, the whole-tree statistic averages over the \(2^{2^d-1}\) orderings of each tree, \(N\) coordinates each, while the feature representation lists every ancestor--middle--descendant triple at three distinct generations and every common-ancestor triple with distinct ordered descendants, three coordinates per placement. Marginal coordinates are omitted from this representation-size comparison.}
    \toctable{Exhaustive ordering enumeration requires exponentially more coordinates for the whole-tree}
    \label{tab:supp-ising-coordinates}
    \begin{tabular}{rrrrrr}
    \hline
    depth & nodes & orderings & placements & whole-tree coordinates & feature coordinates \\
    \hline
    2 & 7 & 8 & 22 & 56 & 66 \\
    3 & 15 & 128 & 134 & 1{,}920 & 402 \\
    4 & 31 & 32{,}768 & 646 & \(1.0\times10^{6}\) & 1{,}938 \\
    5 & 63 & \(2.1\times10^{9}\) & 2{,}790 & \(1.4\times10^{11}\) & 8{,}370 \\
    6 & 127 & \(9.2\times10^{18}\) & 11{,}430 & \(1.2\times10^{21}\) & 34{,}290 \\
    \hline
    \end{tabular}
\end{table}

Together, these calculations make the tradeoff of featurizing the loss explicit. The whole-tree energy provides a useful target, while the feature loss deliberately gives up sensitivity to some higher-order dependence and can have lower finite-sample MSE.

\FloatBarrier
\subsubsection{Proofs}

\begin{proof}[Proof of Proposition~\ref{prop:binary-energy-expansion}]
For sign-valued random variables \(U\) and \(W\), \(|U-W|=1-UW\). Expanding the three expectations in energy distance therefore gives
\[
    D_E(\operatorname{Law}(U),\operatorname{Law}(W))
    =\{\mathbb E(U)-\mathbb E(W)\}^2.
\]
Taking \(U=\prod_{v\in A}X_v\) for \(X\sim P_{\mathrm{ord}}\) and \(W=\prod_{v\in A}Y_v\) for \(Y\sim Q_{\mathrm{ord}}\) shows that the energy distance between the two product laws is \(\{m_P(A)-m_Q(A)\}^2\). It remains to establish the weighted expansion.

Let \(\nu(x)=p_P(x)-p_Q(x)\). Expanding the between- and within-law expectations gives
\[
    D_E(P,Q)
    =-\sum_{x,y}\nu(x)\nu(y)\lVert x-y\rVert.
\]
Walsh inversion gives
\[
    \nu(x)
    =2^{-N}
    \sum_{A\subseteq V}
    \{m_P(A)-m_Q(A)\}\prod_{v\in A}x_v.
\]
Both probability rows sum to one so the constant moment difference is zero. The distance \(\lVert x-y\rVert\) depends on \(x\) and \(y\) only through the elementwise product. Orthogonality of the Walsh products removes all cross-terms and leaves equation~\eqref{eq:binary-energy-expansion}, with
\[
    \gamma_{N,r}
    =-2^{-N}
    \sum_{u\in\{-1,1\}^V}
    \lVert\boldsymbol1-u\rVert
    \prod_{v\in A}u_v,
    \qquad |A|=r.
\]
By permutation symmetry this coefficient only depends on \(A\) through \(r\). The preceding Fourier calculation holds for arbitrary laws on the binary cube, whether or not they are child-order invariant. For positivity, consider the ordered-assignment law
\[
    p_\varepsilon(x)
    =
    2^{-N}\left\{1+\varepsilon\prod_{v\in A}x_v\right\},
    \qquad 0<|\varepsilon|<1.
\]
This differs from the uniform law and has only one nonzero nonconstant moment, equal to \(\varepsilon\) at \(A\). Energy distance is positive for distinct laws, so \(\gamma_{N,r}\varepsilon^2>0\).
\end{proof}

\begin{proof}[Proof of Proposition~\ref{prop:feature-table-bias-variance}]
We prove the unequal-count extension and then set \(n_P=n_Q=n\) for the stated result.
With the ordering-averaged vector \(\overline{\boldsymbol\phi}\) defined above, Proposition~\ref{prop:binary-energy-expansion} gives
\[
    D_E(P,Q)
    =
    \left\lVert
      \mathbb E_P\{\overline{\boldsymbol\phi}(\mathcal X)\}
      -\mathbb E_Q\{\overline{\boldsymbol\phi}(\mathcal X)\}
    \right\rVert^2.
\]
Let \(\boldsymbol F\) have orthonormal rows spanning the moment combinations fixed by the selected feature tables. The projected distance defined above is then
\[
    D_F(P,Q)
    =
    \left\lVert
      \mathbb E_P\{\boldsymbol F\overline{\boldsymbol\phi}(\mathcal X)\}
      -\mathbb E_Q\{\boldsymbol F\overline{\boldsymbol\phi}(\mathcal X)\}
    \right\rVert^2.
\]

For \(\mathcal X\sim R\), write
\[
    \boldsymbol\mu_R
    =\mathbb E_R\{\boldsymbol F\overline{\boldsymbol\phi}(\mathcal X)\},
    \qquad
    \boldsymbol\Sigma_R
    =\operatorname{Cov}_R\{\boldsymbol F\overline{\boldsymbol\phi}(\mathcal X)\}.
\]
For independent trees \(\mathcal X_1,\ldots,\mathcal X_{n_P}\sim P\) and \(\mathcal X'_1,\ldots,\mathcal X'_{n_Q}\sim Q\), the V-statistic is
\begin{equation}
\label{eq:feature-map-v-statistic}
\begin{aligned}
    \widehat D_F
    ={}&
    \frac1{n_P^2}
    \sum_{i,j}
    \left\langle
      \boldsymbol F\overline{\boldsymbol\phi}(\mathcal X_i),
      \boldsymbol F\overline{\boldsymbol\phi}(\mathcal X_j)
    \right\rangle\\
    &+\frac1{n_Q^2}
    \sum_{i,j}
    \left\langle
      \boldsymbol F\overline{\boldsymbol\phi}(\mathcal X'_i),
      \boldsymbol F\overline{\boldsymbol\phi}(\mathcal X'_j)
    \right\rangle\\
    &-\frac2{n_Pn_Q}
    \sum_{i,j}
    \left\langle
      \boldsymbol F\overline{\boldsymbol\phi}(\mathcal X_i),
      \boldsymbol F\overline{\boldsymbol\phi}(\mathcal X'_j)
    \right\rangle.
\end{aligned}
\end{equation}
To compute its bias and variance, center the projected per-tree vectors as
\[
    \boldsymbol\xi_R
    =\boldsymbol F\overline{\boldsymbol\phi}(\mathcal X)-\boldsymbol\mu_R,
    \qquad \mathcal X\sim R.
\]
For the sampled trees, write \(\boldsymbol\xi_{P,i}\) and \(\boldsymbol\xi_{Q,i}\) for the corresponding centered vectors, and let
\[
    \boldsymbol\varepsilon
    =\frac1{n_P}\sum_i\boldsymbol\xi_{P,i}
      -\frac1{n_Q}\sum_i\boldsymbol\xi_{Q,i},
    \qquad
    \boldsymbol\Sigma
    =\frac{\boldsymbol\Sigma_P}{n_P}
      +\frac{\boldsymbol\Sigma_Q}{n_Q}.
\]
Equation~\eqref{eq:feature-map-v-statistic} gives
\[
    \widehat D_F
    =D_F(P,Q)
      +2(\boldsymbol\mu_P-\boldsymbol\mu_Q)^\top\boldsymbol\varepsilon
      +\lVert\boldsymbol\varepsilon\rVert^2.
\]
Independence gives \(\mathbb E\boldsymbol\varepsilon=0\) and \(\operatorname{Cov}(\boldsymbol\varepsilon)=\boldsymbol\Sigma\), so
\[
    \mathbb E(\widehat D_F)
    =D_F(P,Q)+b_F(P,Q;n_P,n_Q),
    \qquad
    b_F(P,Q;n_P,n_Q)=\operatorname{tr}\boldsymbol\Sigma.
\]
Its exact variance is
\begin{equation}
\label{eq:feature-map-variance}
\begin{aligned}
    V_F(P,Q;n_P,n_Q)
    ={}&4(\boldsymbol\mu_P-\boldsymbol\mu_Q)^\top
          \boldsymbol\Sigma(\boldsymbol\mu_P-\boldsymbol\mu_Q)
       +2\lVert\boldsymbol\Sigma\rVert_{\mathrm F}^2\\
    &+4(\boldsymbol\mu_P-\boldsymbol\mu_Q)^\top
      \left(
        \frac{\mathbb E_P\{\boldsymbol\xi_P\lVert\boldsymbol\xi_P\rVert^2\}}{n_P^2}
        -\frac{\mathbb E_Q\{\boldsymbol\xi_Q\lVert\boldsymbol\xi_Q\rVert^2\}}{n_Q^2}
      \right)\\
    &+\sum_{R=P,Q}\frac1{n_R^3}
      \left[
        \mathbb E_R\lVert\boldsymbol\xi_R\rVert^4
        -(\operatorname{tr}\boldsymbol\Sigma_R)^2
        -2\lVert\boldsymbol\Sigma_R\rVert_{\mathrm F}^2
      \right].
\end{aligned}
\end{equation}
The sum includes one term for each sample. To verify this expression, independence and centering leave only repeated indices in the third- and fourth-moment sums, giving
\[
    \mathbb E\{\boldsymbol\varepsilon\lVert\boldsymbol\varepsilon\rVert^2\}
    =\frac{\mathbb E_P\{\boldsymbol\xi_P\lVert\boldsymbol\xi_P\rVert^2\}}{n_P^2}
     -\frac{\mathbb E_Q\{\boldsymbol\xi_Q\lVert\boldsymbol\xi_Q\rVert^2\}}{n_Q^2}
\]
and
\[
\begin{aligned}
    \operatorname{Var}(\lVert\boldsymbol\varepsilon\rVert^2)
    ={}&2\lVert\boldsymbol\Sigma\rVert_{\mathrm F}^2\\
    &+\sum_{R=P,Q}\frac1{n_R^3}
      \left[
        \mathbb E_R\lVert\boldsymbol\xi_R\rVert^4
        -(\operatorname{tr}\boldsymbol\Sigma_R)^2
        -2\lVert\boldsymbol\Sigma_R\rVert_{\mathrm F}^2
      \right].
\end{aligned}
\]
The linear term contributes the first term of equation~\eqref{eq:feature-map-variance}, the quadratic term contributes the second and fourth-moment terms, and twice their covariance contributes the third-moment term. All these moments are finite in the binary setting.

The estimator's bias for \(D_E(P,Q)\) is \(D_F(P,Q)+b_F(P,Q;n_P,n_Q)-D_E(P,Q)\), so its MSE is \(\{D_E(P,Q)-D_F(P,Q)-b_F(P,Q;n_P,n_Q)\}^2+V_F(P,Q;n_P,n_Q)\). For \(R\in\mathcal C_F(Q)\) the retained combinations of \(R\) and \(Q\) agree, so \(D_F(P,R)=D_F(P,Q)\), and \(Q\) itself lies in \(\mathcal C_F(Q)\); holding \(b_F(P,Q;n_P,n_Q)\) fixed, the squared bias is therefore at most the maximum in equation~\eqref{eq:feature-map-mse}, with equality when \(\mathcal C_F(Q)=\{Q\}\). Taking \(\boldsymbol F\) to be the identity gives \(\widehat D_E\), with sampling bias \(b_E(P,Q;n_P,n_Q)\) and MSE \(b_E(P,Q;n_P,n_Q)^2+V_E(P,Q;n_P,n_Q)\). Setting \(n_P=n_Q=n\) proves the result.
\end{proof}

\section{Supplementary methods: model implementation}
\label{supp:architecture}

The imputer $I_\theta$ and sampler $S_\phi$ use the architecture in Table~\ref{tab:architecture} with separate weights. Imputation takes one pass; sampling uses eight VP-SDE steps by default ($\beta_{\mathrm{min}}=0.1$, $\beta_{\mathrm{max}}=20$; Section~\ref{sec:diffusion}).

\begin{table}[H]
  \captionsetup{margin=0pt,skip=7pt}
  \caption{\textbf{Architecture of the state embedding and lineage models.}}
  \label{tab:architecture}
  \toctable{Architecture of the state embedding and lineage models}
  \begingroup
  \setlength{\tabcolsep}{8pt}
  \setlength{\fboxsep}{2pt}
  \renewcommand{\arraystretch}{1.0}
  \newcommand{\architectureband}[1]{%
    \multicolumn{2}{@{}l@{}}{\colorbox{black!5}{%
      \makebox[\dimexpr\linewidth-2\fboxsep\relax][l]{\strut\bfseries #1}}}\\}
  \begin{tabular}{@{}p{0.45\linewidth}p{\dimexpr0.55\linewidth-16pt\relax}@{}}
    \toprule
    Component & Dimensions \\
    \midrule
    \architectureband{State embedding (scVI)}
    Input genes & $3{,}956$ \\
    Encoder & $2$ hidden layers, width $512$ \\
    Latent width & $50$ \\
    Decoder & $2$ hidden layers, width $512$ \\
    \addlinespace[3pt]
    \architectureband{Inputs and embeddings}
    State projection & $50$ input features, $128$ output features \\
    Developmental-time embedding & $2$ linear layers, width $128$ \\
    Observed-node marker & $2$ learned vectors, width $128$ \\
    Interpolant-time embedding ($S_\phi$ only) & $2$ linear layers, width $128$ \\
    \addlinespace[3pt]
    \architectureband{Tree-aware transformer}
    Transformer layers & $4$ layers, width $128$ \\
    Attention heads & $4$ heads per layer, width $32$ \\
    Feed-forward sublayer & $2$ linear layers, widths $512$ and $128$ \\
    \addlinespace[3pt]
    \architectureband{Residual output}
    Residual prediction & $2$ linear layers, widths $512$ and $50$ \\
    \bottomrule
  \end{tabular}
  \endgroup
\end{table}

\begingroup
\setlength{\parskip}{4pt}
\titlespacing*{\subsection}{0pt}{7pt}{2pt}

\subsection{Transcriptional state embedding}
\label{supp:scvi}
We fit scVI to raw counts with a negative-binomial likelihood~\cite{Lopez2018}. Data source is a categorical decoder covariate, and gene-specific dispersion is shared across sources. We use the posterior mean of the Gaussian latent state as $x_v$; training does not use early stopping.

\subsection{Inputs and embeddings}
We treat each node as one token in the transformer. Observed nodes retain their measured states, while hidden nodes receive independent standard Gaussian noise at hidden nodes in $I_\theta$, and the noisy residual $x_v^s-b_v$ at hidden nodes in $S_\phi$. The projected inputs are combined with embeddings of developmental time and a token representing whether the node is masked or observed; for the sampler $S_\phi$ we add an interpolant-time embedding as well. Developmental times and patristic distances are in embryo days, without further normalization.

\subsection{Transformer layers}
Each layer applies attention and a feed-forward sublayer, with layer normalization before each and a residual connection around each. Patristic-distance bias has a learned scale for each layer and head, initialized to zero. Time embeddings, feed-forward sublayers and output layers use GELU (Gaussian error linear unit) between their linear layers.

\subsection{Residual output}
After final layer normalization, the output layers predict a clean-state residual, which we add to $b_v$. At an observed anchor, $b_v$ equals the observed state. Below an anchor, it copies the nearest observed ancestor's state; above the anchors, it recursively averages the child displacements with equal weight per child. Observed states remain fixed throughout.

\subsection{Numerical sampling}
We use forward Euler updates for CFM and Euler--Maruyama updates for VP-SDE, keeping the observed states fixed. For $N$ sampling steps on the uniform grid $s_j=j/N$, the model is evaluated at $j=0,\ldots,N-1$, and the last clean-state prediction is returned.
\endgroup

\FloatBarrier

\end{document}